\newtheorem{definition}{Definition}
\newtheorem{proposition}{Proposition}
\newtheorem{thm}{Theorem}
\newcolumntype{M}[1]{>{\centering\arraybackslash}m{#1}}
\begin{document}

\title{The Coverage Overlapping Problem of Serving Arbitrary Crowds in 3D Drone Cellular Networks}


\author{Chuan-Chi Lai,~\IEEEmembership{Member,~IEEE},
	Li-Chun Wang,~\IEEEmembership{Fellow,~IEEE}, and Zhu Han,~\IEEEmembership{Fellow,~IEEE}
\IEEEcompsocitemizethanks{
	\IEEEcompsocthanksitem This work was supported by Ministry of Science and Technology under the Grant No. MOST 108-2634-F-009-006- and MOST 109-2634-F-009-018- through Pervasive Artificial Intelligence Research (PAIR) Labs, Taiwan, and partially supported by the Higher Education Sprout Project of the National Chiao Tung University and Ministry of Education, Taiwan. This work was also partially supported by US NSF EARS-1839818, CNS-1717454, CNS-1731424, and CNS-1702850.
	\IEEEcompsocthanksitem C.-C. Lai and L.-C. Wang are with the Department of Electrical and Computer Engineering, National Chiao Tung University, 300 Hsinchu, Taiwan. (Corresponding author's e-mail: lichun@g2.nctu.edu.tw)
	\IEEEcompsocthanksitem Z. Han is with the University of Houston, Houston, TX 77004 USA, and also with the Department of Computer Science and Engineering, Kyung Hee University, Seoul, South Korea, 446-701.
	
}
}

\markboth{Preprint for IEEE Transactions on Mobile Computing 
}%
{Lai \MakeLowercase{\textit{et al.}}
}
%



\IEEEtitleabstractindextext{%
\begin{abstract}
Providing coverage for flash crowds is an important application for drone base stations (DBSs). However, any arbitrary crowd is likely to be distributed at a high density. Under the condition for each DBS to serve the same number of ground users, multiple DBSs may be placed at the same horizontal location but different altitudes and will cause severe co-channel interference, to which we refer as the coverage overlapping problem. To solve this problem, we then proposed the data-driven 3D placement (DDP) and the enhanced DDP (eDDP) algorithms. The proposed DDP and eDDP can effectively find the appropriate number, altitude, location, and coverage of DBSs in the serving area in polynomial time to maximize the system sum rate and guarantee the minimum data rate requirement of the user equipment. The simulation results show that, compared with the balanced $k$-means approach, the proposed eDDP can increase the system sum rate by 200\% and reduce the computation time by 50\%. In particular, eDDP can effectively reduce the occurrence of the coverage overlapping problem and then outperform DDP by about 100\% in terms of system sum rate.
\end{abstract}

\begin{IEEEkeywords}
Drone, 3D placement, Heterogeneous networks, Sum rate, Coverage overlapping, Co-channel interference.
\end{IEEEkeywords}}

\maketitle

\IEEEdisplaynontitleabstractindextext

%
\IEEEpeerreviewmaketitle

\IEEEraisesectionheading{\section{Introduction}\label{sec:introduction}}
\IEEEPARstart{D}{ue} to the popularity of the Internet of Things (IoT), the demand for mobile data traffic for the upcoming 5G and beyond 5G wireless networks is growing rapidly. According to the latest report~\cite{7452276}, the increase in demand is similar to the memory and computing power growth following Moore's law in the last 30 years. With this trend, The expected peak wireless data rates will increase to about 10 Gb/s with 5G in 2020 and the global mobile data traffic will reach 1 zettabyte/mo until 2028. This will lead the telecom operators facing the great capacity demands and needs a heavy burden on operational costs and infrastructure updates. 

To meet these growing demands, some early works~\cite{6812286}~\cite{Chai2018}~\cite{8411547}~\cite{8369148} have been dedicated to self-organizing networks (SONs) and heterogeneous networks (HetNets) (i.e., deploy various small cells).
However, solving this problem by deploying various ground base stations (GBSs), such as ultra-dense small cells (UDSC) or Wi-Fi access points, on the ground lacks the flexibility for dynamic, unexpected or emergency situations like outdoor concerts, election campaigns, disaster relief, and so forth. If the cellular operators wants to provide good services in the above cases, they have to pay an economically infeasible budget for the deploying cost of UDSC and the after-cost of its management. Drone base stations (DBSs) thereby become a new promising solution for providing temporary communication services to recover the disaster area or to satisfy the sudden demands (hot-spots) caused by flash crowds, which is commonly referred to as drone-assisted communications~\cite{9024679}~\cite{9048611}~\cite{7744808}~\cite{8038869}~\cite{8316776} or drone-enabled offloading services~\cite{8740949}. 

The advantage of using DBSs is their ability to provide on-the-fly communications. The DBSs can effectively establish line-of-sight (LoS) communication links and mitigate signal blockage and shadowing since they fly at relatively high altitude. Although the spatio-temporal distribution of user mobility~\cite{Ernest2015}~\cite{8673556} and mobile data traffic~\cite{8117559}~\cite{XU2018146} are arbitrary and hard to predict, which makes the design of future cellular systems an tremendous challenge, DBSs can move towards the potential ground users and establish reliable connections with a low transmit power due to the flexibility of their altitude and mobility. Accordingly, DBSs become an agile solution to serve ground users arbitrarily spread over a geographical area with the limited terrestrial infrastructure. Compared to the placement of traditional GBSs, deploying DBSs is a cost-effective and energy-efficient solution which can save a large amount of land cost for the cellular operators. 

However, most existing works consider drones as cellular-connected aerial user equipments (UEs) that must connect to a wireless network so as to operate. The cellular-connected aerial UEs are often used for wide range of applications such as surveillance~\cite{8255734}, remote sensing~\cite{8572727}~\cite{Hu2018apwcs}, virtual reality~\cite{7849534}, and package delivery~\cite{8690828}. In fact, the above different applications are the variants of IoT applications. Obviously, wireless communications technologies play the most common and important role in these applications. To effectively use drones for serving IoT (or massive users), several technical challenges must be addressed such as optimal deployment, mobility and energy-efficient use of drones as outlined in~\cite{7412759}. Compared to traditional GBSs, DBSs are more flexible and can be easily deployed at some specific positions for serving the dynamic crowds and sudden events, such as outdoor concert, baseball game, and festival crowds. 

In practice, there are many challenging open issues for establishing such a drone-assisted cellular system. In particular, the managing mechanisms of placement, resource allocation, power control~\cite{8865486}, and flight scheduling are the urgent technologies to allow the deployed DBSs to coexist with the terrestrial cellular systems. Such a drone-assisted cellular system will be one important use case of 5G or beyond 5G networks, which is capable to serve dynamic traffic demands~\cite{7762185}. The effective and efficient technologies of DBS placement/management thereby become hot topics in communications domain. Hence, we focus on the dynamic placement of DBSs over a terrestrial cellular system while improving the system performance in terms of sum rate and co-channel interference (overlapping area)~\cite{globe19_Lai}.

In this work, we discuss how to deploy multiple DBS to the appropriate height and location to serve the ground UEs. We model this problem as a maximizing system sum rate problem, taking into account the predetermined minimum data rate requirement of UE and co-channel interference between different DBSs.
Focusing on the downlink transmission from the GBS to users and from the DBSs to their corresponding users, we propose a \emph{data-driven 3D placement} (DDP) algorithm which solves the considered placement problem in a three-stage iterative searching framework. 
In the first stage, the proposed method determines the minimum (or lower-bound) possible number of deployed DBS based on the input information, system parameters, and optimization constraints.
In the second stage, the proposed method analyzes the distribution and density of ground UEs, and then finds the possible candidate placements for clustered UEs. In the last stage, the proposed method re-tunes the candidate altitude, location, and coverage of each DBS in the considered area for maximizing the system sum rate with the optimization constraints on the co-channel interference and the allocated data rate of each UE. The system will redo the above three-stage process with the increasing number of DBSs until all the optimization constraints are satisfied.

However, we discover a new issue, \emph{coverage overlapping problem}, which may occur when the system uses a $k$-means based clustering approach to place multiple DBSs to collaborate with the GBS to serve an arbitrarily distributed crowds. Such a special issue makes the system place multiple DBSs at similar locations with different altitudes and thus leads huge co-channel interference. An example of the coverage overlapping problem is depicted in Fig.~\ref{fig:ex_overlapping_placement}. We then add a pre-partition process into the proposed framework and this process will be executed right after the initialization stage. The pre-partition process uses the location of the GBS to divide the considered area into multiple sub-regions. The system then perform the operations of user association clustering and placement refinement for the partitioned sub-regions, respectively. The DDP with the above enhancement operations are called as \emph{enhanced data-driven 3D placement} (eDDP). The proposed eDDP can effectively reduce the occurrence of the coverage overlapping problem. 
In the simulation, we compare the placement of proposed DDP and eDDP with the result obtained by the conventional balanced $k$-means approach and then observe how large the overlapping area representing channel interference is.
We also observe the impacts of different numbers of UE and the pre-defined target satisfaction rate on the performance of different placement approaches in terms of computation time, number of deployed DBSs, and system sum rate. 
\begin{figure}[!t]
	\centering
	\includegraphics[width=1\columnwidth]{./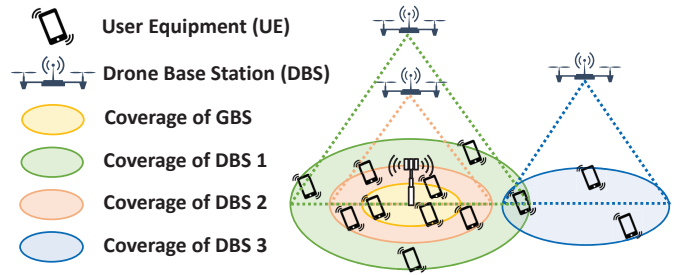}
	\caption{Example of the coverage overlapping problem.}
	\label{fig:ex_overlapping_placement}
\end{figure}

The main contributions of this paper are summarized as follows:
\begin{itemize}
	\item We prove that the placement problem of DBSs coexisting with the cellular system is NP-hard, and propose a data-driven approach to provide a sub-optimal placement for serving arbitrary crowds.			
	\item The proposed DDP models the placement refinement as a minimum closing circle problem~\cite{10.1007/BFb0038202} and effectively maximizes the system sum rate with the minimized co-channel interference with a linear time procedure.
	\item We also identify a new issue, coverage overlapping problem, which occurs while deploying multiple DBSs at different altitudes with different corresponding coverage radii, and then propose an enhanced approach, eDDP, to solve this issue.
	\item Unlike the conventional unsupervised learning approach, balanced $k$-means~\cite{10.1007/978-3-662-44415-3_4}, the proposed DDP and eDDP are adaptive to the scenarios of different scales and can automatically determine the appropriate number of DBSs to serve the arbitrarily distributed UEs. 
	\item The simulation results show that eDDP is the best approach. Compared to balanced $k$-means approach, it can increase the system sum rate by 200\% and reduce computation time by 50\%.  	
\end{itemize}

The rest of this paper is organized as follows. In Section~\ref{sec:related:works}, we review the state of the art and give a comparison summary. Section~\ref{sec:problem} presents the considered system model, assumptions, and problem statement of this work. Section~\ref{sec:3Dplacement} introduces the proposed data-driven approach and a breakdown of the algorithms. A new challenge appears while applying the $k$-means based approach and the proposed enhancement will be addressed in Section~\ref{sec:eddp}. The time complexity will also be analyzed in Section~\ref{sec:analysis}. Simulation results and comparison summary are presented in Section~\ref{sec:simulation}. Finally, we make concluding remarks in Section~\ref{sec:conclusion}.
\begin{table*}[!t]
	\renewcommand{\arraystretch}{1.2}
	\caption{Comparisons of Related Works and the Proposed Method}
	\label{compared_methods}
	\centering
	\begin{tabular}{M{.8cm}M{5.5cm}M{1.4cm}M{1.4cm}M{1.4cm}M{1.7cm}M{1.1cm}M{1.5cm}}
		\hline
		Method & Objective & Number of Drones & Altitudes of Drones & User Distribution & GBS Coexistence & Sumrate & Co-Channel Interference Awareness\\	\hline
		\hline
		\cite{6863654} &\raggedright Model the relation between the optimal DBS altitude and its coverage area & Single & -- & -- & No & No & -- \\ \hline
		\cite{7510820} &\raggedright Find the optimal altitude, coverage area, and location of DBS to maximized the number of covered users & Single & -- & Uniform & No & No & -- \\ \hline
		\cite{7962642} & \raggedright Optimize the DBS altitude and location with back-haul constraint & Single & -- & PCP & No & No & -- \\ \hline
		\cite{8642333} & \raggedright Find the appropriate altitude and location of a DBS to guarantee a minimum data rate requirement to ground users & Single & -- & Arbitrary & No & Yes & -- \\ \hline
		\cite{8269064} & \raggedright Maximize the energy efficiency of deployed DBSs in multiple pre-partitioned subregions & Single DBS for each subregion & Various & PPP & No & No & No \\ \hline
		\cite{7762053} & \raggedright Maximize the number of covered users with the minimum number of DBSs & Various & Fixed & Uniform & No & No & No \\	\hline
		\cite{7881122} & \raggedright Minimize the number of DBSs to cover all the users & Various & Various & Uniform with two densities & No & No & Yes \\ \hline
		\cite{8758183} & \raggedright Maximize the number of covered users continuously with a heuristic re-position algorithm & A fixed number of DBSs & Fixed & Dense PPP & GBS and DBS use different bands & No & Yes \\ \hline
		\cite{7756327} & \raggedright Determine the optimal density of DBSs from the spectrum sharing aspect & Various & Various & PPP/PHP & Yes & Yes & Yes \\ \hline
		\cite{7875131} & \raggedright Apply mobility prediction to drone cache for optimize the users' quality of experience and minimize the transmit power & A fixed number of DBSs & Various & Uniform & No & No & Yes \\ \hline
		\textbf{This paper} & \raggedright \textbf{Determine the appropriate number, positions, and altitudes of deployed DBSs automatically for satisfying the dynamic traffic demands (arbitrarily distributed users) with a maximized the system sum rate while coexisting with a GBS} & \textbf{Various} & \textbf{Various} & \textbf{Arbitrary} & \textbf{Yes} & \textbf{Yes} & \textbf{Yes} \\ \hline
	\end{tabular}
\end{table*}
%


\section{Related Work}
\label{sec:related:works}
The DBS (or relay) placement problem has recently attracted great attentions in literature, where various methods are proposed towards different considered objectives and requirements. We focus on the issues of drone-assisted wireless communications, we thereby survey and summarize the some related works in Table~\ref{compared_methods} with detail as follows.

Due to the characteristics of wireless propagation, there is a relation between the altitude and provided optimal coverage of a DBS, which is firstly modeled in~\cite{6863654}. The authors proposed an air-to-ground (ATG) channel model with derivations of the probabilities of line-of-sight (LoS) and non-line-of-sight (NLoS) links, and now their proposed channel model has been widely used in drone communication. In consideration of the path loss constraint and uniform users in different environments, the optimal altitude, coverage, and location single deployed DBS are discussed in~\cite{7510820}. 
A backhaul-aware robust 3D placement~\cite{7962642} of a DBS was proposed for temporarily increasing the network capacity or coverage of an area in 5G+ environments. The authors considered the case of deploying single DBS in the urban area and discussed the performance with constraints on the total available bandwidth and aggregate peak rate of the DBS's back-haul link. A heuristic method to deploy a single DBS for serving arbitrarily distributed users in polynomial time has been discussed in~\cite{8642333}. This method can provide guaranteed data rates to users under the consideration of the backhaul constraint.

After discussing the case of deploying single DBS, many researchers have moved their eyes on the issues of 3D placement of multiple DBSs. In~\cite{8269064}, the author divided the service area into multiple sub-regions with different densities of users and then considered the power consumption of the on-board circuit and communication to find the optimal heights of the DBSs deployed in these sub-regions. A spiral placement algorithm was proposed by~\cite{7762053} and it deployed multiple DBSs at better locations and minimize the number of DBSs to covered all users while considering various user densities. However, this approach only considers the fixed coverage and altitude of DBS for the placement. Another work~\cite{7881122} focused on optimizing the number of deployed DBSs and proposed a method based on particle swarm optimization~\cite{488968} for realizing the average behavior of target users over a 2D area, as well as deploying the minimal number of DBSs with recommended altitudes and locations. 
In~\cite{8758183}, a heuristic re-position algorithm was proposed to continuously maximize the number of covered users with a fixed number of DBSs while considering the existing ground small cells and inter-drone co-channel interference.

However, all the above existing works only considered research issues from the perspectives of users and did not consider the coexistence of ground cellular systems. From the cellular operator's aspect, \cite{7756327} used the 2D Poison Point Process (PPP) to generate the distribution of users in the serving area and also used the 3D Poisson Hole Process (PHP)~\cite{7557010} to distribute the DBSs in the air. The authors discussed the spectrum-throughput efficiency problem in such a drone-assisted cellular system and also considered the effect of co-channel interference between drone-cells and the GBS. \cite{8876665} considered that all users, UAVs and GBSs are in independent homogeneous PPP (HPPP) distribution, optimized the volume spectrum efficiency and multi-cell coverage probability of a multiple-input-single-output (MISO) mmWave UAV network.

From the system level aspect, some modern technologies can help the drone-assisted cellular system to improve the performance, such as cloud radio access networks (CRANs)~\cite{7444125} and edge computing~\cite{8436041}.
Hence, each DBS can be treated as an intelligent edge relay node and equipped with storage to cache popular data. An cloud-assisted infrastructure was considered in~\cite{7875131} and the authors apply a machine learning framework, Echo State Networks (ESN)~\cite{7880663}, to the considered environment. They used ESN to predict the user mobility pattern and the behavior of data access so that each drone can cache popular data. In this way, the served users will have a high probability to successfully access their required contents without using the back-haul connections of DBSs. Thus, the transmit power of each DBS and the QoE of each user can be improved.

Unfortunately, all the above related works did not consider the arbitrary distribution of users for flash-crowd events, such as outdoor concerts, marathons, 
election campaigns. They only observed the system performance using the statistical-based user distributions. In practice, the user distribution is non-uniform and it is difficult to model precisely with mathematical formulas. Furthermore, most of them also did not consider the coexistence of GBSs, the effects of co-channel interference, and the system sum rate optimization problem. The advantages of our work are summarized at the bottom of Table~\ref{compared_methods}.

\section{Problem Description}
\label{sec:problem}
To improve the readability, the main variables used throughout the paper are given in Table~\ref{Notations}. Other temporarily used variables will be explained in an in-text manner.

\subsection{System Model}
\label{sec:systemmodel}
As shown in Fig.~\ref{fig:system}, we consider a drone-assisted cellular system consisting one GBS, $G$, and a set of DBSs, $\mathcal{U}=\{U_1, U_2, \dots, U_K\}$, in an urban scenario, where $K$ is the maximum number of available DBSs. The drone-assisted cellular system serves a set of UEs, $E=\{u_1,u_2,\dots,u_N\}$, and the total number of UEs is $|E|=N$. The DBSs can move in the sky to any position. Each UE only uses the resource of one BS (GBS or DBS) at a certain time. We assume that all the UEs are arbitrarily distributed on the ground due to the operation requirements, the terrain limitations, or unpredictable events. The placement decision of DBSs is controlled by the edge controller (or controller) behind the GBS. All the DBS and UEs are equipped with directional antennas to transmit and receive 4G LTE-A signals in the considered environments. We assume that the GBS and each DBS use the same spectrum and provide the same bandwidth $B$ for the down links in the considered system. The GBS is also equipped a mmWave directional antenna array using another dedicated spectrum to provide an additional network volume for the back-haul communication link between the GBS and each DBS. Some similar system models/architectures using different bands for GBS-to-DBS and DBS-to-UE links already have been discussed in~\cite{8758183}~\cite{8761403}.

In our work, we focus on the downlink transmissions, and we introduce the radio propagation models for the downlink transmissions which consists of following three cases: 1) GBS to DBS, 2) DBS to UE, and 3) GBS to UE. We now respectively introduce these cases under the assumption that the appropriate number of deployed DBSs is $k$, where $1\leq k\leq K$.

\begin{table}[!t]
	\renewcommand{\arraystretch}{1.1}
	\caption{The Main Variables Used Throughout This Paper}
	\label{Notations}
	\centering
	\begin{tabular}{|c|p{6.8cm}|}
		\hline
		\textbf{Variable} & \textbf{Description} \\
		\hline 
		$\mathcal{A}$ & The target rectangle area \\	
		\hline
		$N$ & Number of UEs \\	
		\hline
		$k$ & Number of DBSs \\	
		\hline
		$K$ & Maximum number of DBSs  \\	
		\hline
		$P_G$ & Cellular transmit power of the GBS \\	
		\hline
		$P_G^{\text{bk}}$ & mmWave transmit power of the GBS  \\	
		\hline
		$P_j$ & Transmit power of each DBS  \\	
		\hline
		$L_{\text{allowable}}$ & Maximum allowable path-loss of GBS-to-DBS and DBS-to-UE links \\	
		\hline
		$\alpha$ & Path-loss exponent of the GBS-to-UE downlink \\	
		\hline
		$c_{\min}$ & Minimum data rate requirement of each UE \\	
		\hline
		$\tau$ & Target satisfaction rate of UEs (Default) \\	
		\hline
		$N_{0}$ & Cellular thermal noise power spectral density \\	
		\hline
		$f_c$ & Cellular carrier frequency \\	
		\hline
		$f_c^\text{bk}$ & mmWave carrier frequency \\		
		\hline
		$B$ & Cellular carrier bandwidth  \\	
		\hline
		$B^\text{bk}$ & mmWave carrier bandwidth \\		
		\hline
		$\gamma_{\text{th}}$ & Cellular SINR threshold  \\ 	
		\hline
		$\gamma_{\text{th}}^\text{bk}$ & mmWave SINR threshold \\	
		\hline	
		$h_{\min}$ & Minimum altitude of each DBS \\	
		\hline
		$h_{\max}$ & Maximum altitude of each DBS  \\	
		\hline
	\end{tabular}
\end{table}

\subsubsection{GBS-to-DBS Propagation Model}

In the considered system model, the GBS uses directional mmWave antennas to transmit signals to the DBSs. Since the DBS is flying at a relatively high altitude, for the sake of simplification, we assume that backhaul GBS-to-DBS links experience the LoS propagation condition. The average path loss in dB of 28 GHz mmWave signal is given as~\cite{6834753}
\begin{equation}\label{eq:g2u:los}
L_{j,G}^{\text{bk}}=61.4 + 20\log_{10}\left(d_{j,G}\right),
\end{equation} 
where $d_{j,G}$ is the distance between the GBS and DBS $U_j$ in meters. Let $P_G^{\text{bk}}$ is the fixed transmit power of the mmWave antenna and $B_{j,G}^\text{bk}$ is the allocated bandwidth (in Hz) of the mmWave back-haul link for DBS $U_j$, according to~\eqref{eq:g2u:los}, the received signal-to-noise ratio (SNR) at a DBS is  
\begin{equation}\label{eq:g2u:sinr}
\gamma_{j,G}=\dfrac{P_G^{\text{bk}} \cdot 10^{-L_{j,G}^{\text{bk}}/10}}{B_{j,G}^\text{bk}N_0}\geq\gamma_{\text{th}}^\text{bk},
\end{equation} 
where $N_0$ is the thermal noise power spectral density and $\gamma_{\text{th}}^\text{bk}$ is a given threshold of mmWave back-haul transmissions.
According to the Shannon theorem, the back-haul capacity of a DBS $U_j$ can be obtained by
\begin{equation}\label{eq:backhaul_rate:uav}
\hat{C}_j=B_{j,G}^\text{bk}\log_2\left(1+\gamma_{j,G}\right).
\end{equation}
\vspace{-10pt}

\begin{figure}[!t]
	\centering
	\includegraphics[width=\columnwidth]{./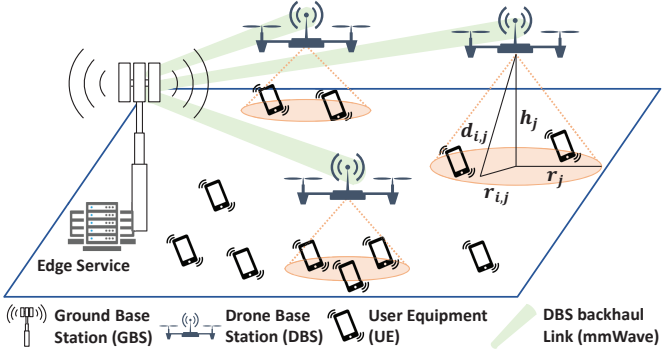}
	\caption{Architecture of the considered drone-assisted cellular system.}
	\label{fig:system}
\end{figure}

\subsubsection{DBS-to-UE Propagation Model}
The second propagation model is used to model the downlink transmission from a DBS to a UE.
Such a radio propagation model is well-known as the air-to-ground propagation channel and commonly modeled by considering the LoS and NLoS signals along with their occurrence probabilities separately~\cite{7417609}. We adopt the air-to-ground channel model in~\cite{6863654}, and the probabilities of LoS and NLoS for a UE $u_i$ associated with DBS $U_j$ are
\begin{align}\label{PNLoS_to_user}
P_{h_j,r_{i,j}}^\text{LoS}&=\dfrac{1}{1+a\exp\left(-b\left(\dfrac{180}{\pi}\tan^{-1}\left(\dfrac{h_j}{r_{i,j}}\right)-a\right)\right)},\nonumber\\
P_{h_j,r_{i,j}}^\text{NLoS}&=1-P_{h_j,r_{i,j}}^\text{LoS},
\end{align}
where $h_j$ is the altitude of each DBS $U_j$, $a$ and $b$ are environment variables, $r_{i,j}$ is the horizontal euclidean distance between $u_i$ and $U_j$. Note that $r_{i,j}=\sqrt{(x_j-x_i)^2+(y_j-y_i)^2}$, $(x_j,y_j)$ is the horizontal location of DBS $U_j$, $(x_{i},y_{i})$ is the horizontal location of UE $u_i$, $i=1,2,\dots,N$, and $j=1,2,\dots,k$. Considering the free space propagation loss, the channel model \cite{6863654} of the LoS and NLoS links can be written as
\begin{align}
L_{h_j,r_{i,j}}^\text{LoS}&=20\log_{10}\left(\dfrac{4\pi f_c d_{i,j}}{c}\right)+\eta_{LoS}, \nonumber\\
L_{h_j,r_{i,j}}^\text{NLoS}&=20\log_{10}\left(\dfrac{4\pi f_c d_{i,j}}{c}\right)+\eta_{NLoS},\label{model_nlos}
\end{align}
where $\eta_{LoS}$ (dB) and $\eta_{NLoS}$ (dB) are the mean additional losses for LoS and NLoS, $f_c$ is the carrier frequency of front-haul link, and $d_{i,j}=\sqrt{r_{i,j}^2+h_j^2}$ is the euclidean distance between $u_i$ and $U_j$. According to~\eqref{PNLoS_to_user} and~\eqref{model_nlos}, and let $\theta_{i,j}={\tan^{-1}}\left(h_j/r_{i,j}\right)$, we can obtain the average ATG channel model between $u_i$ and $U_j$ and it is denoted as
\begin{align}\label{average_atg_model}
L_{h_j,r_{i,j}}&=P_{h_j,r_{i,j}}^\text{LoS}L_{h_j,r_{i,j}}^\text{LoS}+P_{h_j,r_{i,j}}^\text{NLoS}L_{h_j,r_{i,j}}^\text{NLoS} \nonumber\\
&=\dfrac{\eta_{LoS}-\eta_{NLoS}}{1+a\exp\left(-b\left(\dfrac{180}{\pi}\theta_{i,j}-a\right)\right)} \nonumber\\
&+20\log_{10}\left({r_{i,j} \sec\theta_{i,j}}\right)+20\log_{10}\left(\dfrac{4\pi f_c}{c}\right)+\eta_{NLoS}.
\end{align}

Let $P_{i,j}$ be the minimum required transmit power for transmitting signal from DBS $U_j$ to UE $u_i$, the transmission is successful if the received signal-to-interference-plus-noise ratio (SINR) at a UE is larger than a certain threshold $\gamma_{\text{th}}$. Thus, SINR expression for UE $u_i$ associated with DBS $U_j$ is 
\begin{equation}\label{eq:sinr:d2u}
\gamma_{i,j}=\frac{P_{i,j}\cdot 10^{-L_{h_j,r_{i,j}}/10}}{I_G+I_{\mathcal{U}\setminus\{U_j\}}+B_{i,j}N_0}\geq\gamma_{\text{th}},
\end{equation}
where $I_G$ is the deterministic received interference power from the GBS and $I_{\mathcal{U}\setminus\{U_j\}}=\sum_{j'=1}^{k}P_{i,j'}\cdot 10^{-L_{h_j,r_{i,j'}}/10}\psi_{j,j'}$ is the interference power from the nearby DBSs if UE $u_i$ locates in the overlapping coverage, where $\psi_{j,j'}=1$ if $u_i$ locates in the overlapping coverage area of DBSs $U_j$ and $U_{j'}$, and $U_{j'}\in\mathcal{U}, \forall j'\neq j$; otherwise, $\psi_{j,j'}=0$.
According to the Shannon theorem and~\eqref{eq:sinr:d2u}, the allocated data rate (in bps) of $u_i$ associated with $U_j$ will be
\begin{equation}\label{eq:data_rate:uav2ue}
c_{i,j}=B_{i,j}\log_2\left(1+\gamma_{i,j}\right),
\end{equation}
where $B_{i,j}$ is the allocated bandwidth (in Hz) of down-link connection from DBS $U_j$ to a served UE $u_i$. The transmit power allocated to $u_i$ of interest can be obtained by
\begin{align}\label{eq:transmit_power:uav_to_1_ue}
P_{i,j}=&10^{L_{h_j,r_{i,j}}/10}\left(I_G+I_{\mathcal{U}\setminus\{U_j\}}+B_{i,j}N_0\right)\nonumber\\
&\times\left(2^{c_{i,j}/B_{i,j}}-1\right).
\end{align}
Then, the potential total transmit power of DBS $U_j$ for serving its associated UEs can be calculated as
\begin{equation}\label{eq:transmit_power:uav_to_all_ue}
P_j=\sum_{i=1}^{N_j}P_{i,j},
\end{equation}
where $N_j$ is the number of UEs associated with DBS $U_j$. According to~\eqref{eq:data_rate:uav2ue}, the data transmission rate of DBS $U_j$ for serving its associated UEs is
\begin{equation}\label{eq:sum_rate_constrait:uav_to_all_ue}
C_j=\sum_{i=1}^{N_j}c_{i,j}.
\end{equation}

\subsubsection{GBS-to-UE Propagation Model}
For the terrestrial wireless channel between points $p_1$ and $p_2$, we consider a standard power law path-loss $L_{p_1,p_2}=||p_1-p_2||^{-\alpha}$ with path-loss exponent $\alpha>2$. All the terrestrial propagation signals are assumed to experience independent Rayleigh fading. The GBS are assumed to transmit at fixed power $P_G$ for terrestrial communications. The received power of UE $u_i$ served by the GBS is therefore $P_Ghr_{i,G}^{-\alpha}$, where $h\sim\exp(1)$ models Rayleigh fading and $r_{i,G}$ is the horizontal distance between a UE and the GBS. Since there are $k$ DBSs in the considered system, the co-channel interference power experienced
by a UE can be expressed as
\begin{equation}\label{eq:interference:uav2ue}
I_\mathcal{U}=\sum_{j=1}^{k}P_{j}\cdot 10^{-L_{h_j,r_{i,j}}/10},
\end{equation}
where $P_j$ is the transmit power of DBS $U_j$ and $r_{i,j}$ is the distance from UE $u_i$ to DBS $U_j$.
The SINR expression for a user $u_i$ that can connect to the GBS is
\begin{equation}\label{eq:sinr:g2u}
\gamma_{i,G}=\frac{P_Ghr_{i,G}^{-\alpha}}{I_\mathcal{U}+B_{i,G}N_0}\geq\gamma_{\text{th}}.
\end{equation}
The achievable data rate (in bps) of a UE associated with the GBS can be calculated as
\begin{equation}\label{eq:rate:gbs_ue}
c_{i,G}=B_{i,G}\log_2\left(1+\gamma_{i,G}\right),
\end{equation}
where $B_{i,G}$ is the allocated bandwidth (in Hz) to $u_i$ associate with the GBS. The potential transmission rate (in bps) of the GBS can be obtained by~\cite{6042301}
\begin{equation}\label{eq:rate:gbs}
C_G=\frac{\lambda_G}{\pi r_G^2}\overline{c_{i,G}}=\sum_{i=1}^{N_G}c_{i,G},
\end{equation}
where $r_G$ is the coverage radius of the GBS, $\lambda_G$ is the UE density of GBS's service coverage, $\overline{c_{i,G}}$ is the average data rate of a UE associated with the GBS, and $N_G$ is the number of UEs which is associated with the GBS.

\subsection{Problem Formulation}
\label{sec:problem:formulation}
In this work, the considered system model is depicted in Fig.~\ref{fig:system}. We focus on the case of deploying multiple DBSs in the target area to improve the downlink sum rate of the drone-assisted cellular system with one GBS. The placement of DBSs must satisfy the minimum data rate requirement of UE which predefined by the cellular operator.
Thus, we consider the optimization problem from the cellular operator's (or service provider's) perspective. Due to the limited budget, the cellular operator always tries to use the minimum number of DBSs to improve the overall system sum rate and meets the minimum data rate requirement of UE. We refer such a problem as \emph{Minimizing the Number of Required DBSs} (MNRD) problem and it can be defined as follows.

\begin{definition}[MNRD Problem]\label{def:p0:num_of_dbs}
	Suppose that the notations and assumptions are defined as above. Given a large number of arbitrary distributed UEs $N$ in the target area, the optimization problem of minimizing the number of required DBSs will be
	
	\noindent		
	\begin{align}\label{eq:min_num_required_dbs}
	\min&\enspace k \tag{P1}\\
	s.t.&\enspace \max_{N_G,N_j}\left(\sum_{l=1}^{N_G}c_{l,G}+\sum_{j=1}^{k}\sum_{i=1}^{N_j}c_{i,j}\right)\geq \tau N c_{\min}, \label{eq:min_num_required_dbs:c1}\\
	&\enspace \tau N \leq N_G+\sum_{j=1}^k N_j\leq N. \label{eq:min_num_required_dbs:c2}
	\end{align}
	where $\tau\in[0,1]$.
\end{definition}
	
The constraint~\eqref{eq:min_num_required_dbs:c1} is a sub-function to find the placement of $k$ DBS to maximize the system sum rate and the obtained placement result must satisfy the total demand of data rate, where $c_{\min}$ is the demand of UE for the minimum data rate from the perspective of cellular operators and $\tau$ is a predefined target satisfaction rate (in percentage).
Constraint~\eqref{eq:min_num_required_dbs:c2} stipulates that the percentage of satisfied UEs should be more than a predefined threshold $\tau$ and each UE only can associate with one DBS or the GBS at a time.
We then refer sub-function~\eqref{eq:min_num_required_dbs:c1} as \emph{System Sum rate Optimization} (SSO) problem~\eqref{eq:system_sum_rate} and its detailed definition is presented as follows.
\begin{definition}[SSO Problem]\label{def:p1:system_sum_rate}
	Suppose that the notations and assumptions are defined as above, the SSO problem is to use a given number of DBSs $k$ for finding the appropriate $N_j$ and $N_G$ such that
	\begin{align}\label{eq:system_sum_rate}
	\max_{N_G,N_j}&\enspace\left(\sum_{l=1}^{N_G}c_{l,G}+\sum_{j=1}^{k}\sum_{i=1}^{N_j}c_{i,j}\right),&\tag{P2}\\
	s.t.&\enspace h_{\min}\leq h_j\leq h_{\max},&\label{eq:system_sum_rate:c1}\\
	&\enspace r_{\min}\leq r_j\leq r_{\max},&\label{eq:system_sum_rate:c2}\\
	&\enspace c_{l,G}\geq c_{\min}, \,\qquad l=1,2,\dots,N_G, \label{eq:system_sum_rate:c3}\\
	&\enspace \sum_{l=1}^{N_G}c_{l,G}\leq \hat{C}_G, &\label{eq:system_sum_rate:c4}\\
	&\enspace c_{i,j}\geq c_{\min}, \qquad\; i=1,2,\dots,N_j, j=1,2,\dots,k, \label{eq:system_sum_rate:c5}\\
	&\enspace \sum_{i=1}^{N_j}c_{i,j}\leq \hat{C}_j, \quad j=1,2,\dots,k.\label{eq:system_sum_rate:c6}
	\end{align}
\end{definition}

In constraint~\eqref{eq:system_sum_rate:c1}, the deployed altitude $h_j$ of each DBS is only allowed within $[h_{\min},h_{\max}]$ which depends on the limitations of local laws and ability of drone. 
In the considered system, the allowable path-loss of each UE is set to a fixed value, $L_\text{allowable}$. 
If $h_j$ is given, the corresponding coverage of $U_j$, $r_j$, in constraint~\eqref{eq:system_sum_rate:c2}, can be determined by solving following equations~\cite{6863654}:
\begin{align}
\footnotesize
\dfrac{\pi}{9\ln\left(10\right)}\tan \theta_j+&\frac{ab\Big(\eta_{LoS}-\eta_{NoS}\Big)\exp{\left(-b\left(\dfrac{180}{\pi}\theta_j-a\right)\right)}}{\left(a\exp{\left(-b\left(\dfrac{180}{\pi}\theta_j-a\right)\right)}+1\right)^2}=0, \label{eq:optimal_altitude_1}\\
\theta_j=&\tan^{-1}\left(h_j/r_j\right). \label{eq:optimal_altitude_2}
\end{align}
Note that the constraints $r_{\min}$ and $r_{\max}$ can be determined in the same way by using the predefined constraints $h_{\min}$ and $h_{\max}$ as the input.
In constraints~\eqref{eq:system_sum_rate:c3} and~\eqref{eq:system_sum_rate:c5}, the GBS $G$ or each DBS $U_j$ needs to guarantee the minimum allocated date rate of a UE, $c_{\min}$.
Constraint~\eqref{eq:system_sum_rate:c4} guarantees that the total downlink transmission rate of the links from the GBS and its associated UEs does not exceed the maximum ability of providing data rate $\hat{C}_G$. Constraint~\eqref{eq:system_sum_rate:c6} is used to make the total downlink transmission rate of the links from DBS $U_j$ to its associated UEs do not exceed the maximum allocated data rate of back-haul link on $U_j$ according to~\eqref{eq:backhaul_rate:uav}. 
The interference is mitigable but unavoidable in such a heterogeneous network consisting of dense flash crowds and DBSs, so it is too hard to guarantee that all the UEs can always have the satisfied minimum data rate under different dense flash crowd scenarios.  

In fact, the considered original problems~\eqref{eq:min_num_required_dbs} and~\eqref{eq:system_sum_rate} are formulated from the resource allocation perspective. In the conventional terrestrial cellular system, the locations of base stations are fixed so that most existing solutions focus on enhancing mechanisms of UE association, resource scheduling, or power control to improve the system performance.
However, for the drone-assisted cellular system, the system can maintain the UE association by utilizing an additional dimension, i.e. the altitude of DBSs. 
Hence, the considered problem can be solved differently. Such a solution is more flexible and it can utilize the 3D space for deploying multiple DBSs to satisfy the dynamic UE demands under different environmental conditions.

From the perspective of the DBS placement problem, we are going to simplify and reformulate the original problems~\eqref{eq:min_num_required_dbs} and~\eqref{eq:system_sum_rate} below. We do not consider the power control issue, so the transmit power of each DBS is set to a fixed value. 
Let $\rho_{i,j}$ and $\rho_{i,G}$ be the indicator functions to identify a UE is associated with the DBS $U_j$ or GBS $G$, defined as
\begin{align}
\rho_{i,j}=&\begin{cases}
1, & \text{if }r_{i,j}\leq r_j\text{ and }\gamma_{i,j}\geq\gamma_{\text{th}}.\\
0, & \text{otherwise},
\end{cases}\label{eq:indicator_ij}\\
\rho_{i,G}=&1-\sum_{j=1}^{k}\rho_{i,j}.\label{eq:indicator_ig}
\end{align}
With the indicator functions $\rho_{i,j}$ and $\rho_{i,G}$, the number of associations from UE to the GBS or DBS $U_j$ are $N_G=\sum_{i=1}^{N}\rho_{i,G}$ and $N_j=\sum_{i=1}^{N}\rho_{i,j}$, respectively. Then, the original problems~\eqref{eq:min_num_required_dbs} and~\eqref{eq:system_sum_rate} can be reformulated as following \emph{Decision of the Number of Required DBSs} (DNRD) problem and \emph{Drone Placement decision with the Maximum Sum Rate} (DPMSR) problem, respectively.

\begin{definition}[DNRD Problem]\label{def:p3:num_of_dbs}
	Suppose that the notations and assumptions are defined as above. The minimum number of required DBSs will be 
	\begin{align}\label{eq:decition_num_required_dbs}
	\min&\enspace k \tag{P3}\\
	s.t.&\enspace \max_{x_j,y_j,r_j,h_j,\forall j} \left(\sum_{l=1}^{N}c_{l,G}\rho_{i,G}+\sum_{j=1}^{k}\sum_{i=1}^{N}c_{i,j}\rho_{i,j}\right)\geq \tau N c_{\min}, \label{def:p3:c1}\\
	&\enspace \tau N\leq\sum_{i=1}^{N}\rho_{i,G}+\sum_{j=1}^{k}\sum_{i=1}^{N}\rho_{i,j}\leq N. \label{def:p3:c2}
	\end{align}
	where $\tau\in[0,1]$.
\end{definition}

\begin{definition}[DPMSR Problem]\label{p:dpmsr}
	Suppose that the notations and assumptions are defined as above, the DPMSR problem is search for the appropriate placement parameters $(x_j,y_j,h_j)$ of each DBS, $\forall j=1,2,\dots,k$, such that	
	
	\noindent		
	\begin{align}\label{eq:system_sum_rate:decision_ver}
	\max_{x_j,y_j,r_j,h_j,\forall j}&\enspace\left(\sum_{i=1}^{N}c_{i,G}\rho_{i,G}+\sum_{j=1}^{k}\sum_{i=1}^{N}c_{i,j}\rho_{i,j}\right),&\tag{P4}\\
	s.t.&~\eqref{eq:system_sum_rate:c1},~\eqref{eq:system_sum_rate:c2},~\eqref{eq:indicator_ij},~\eqref{eq:indicator_ig} \nonumber\\
	&\enspace c_{i,j}\rho_{i,j}\geq c_{\min}\rho_{i,j},\quad\quad\quad\;\; i=1,2,\dots,N,\nonumber\\
	&\hspace{10.65em}\quad j=1,2,\dots,k,&\label{eq:system_sum_rate:decision_ver:c3}\\	
	&\enspace \sum_{i=1}^{N}c_{i,j}\rho_{i,j}\leq \hat{C}_j, \quad\quad\quad\quad j=1,2,\dots,k,\label{eq:system_sum_rate:decision_ver:c4}\\
	&\enspace c_{i,G}\rho_{i,G}\geq c_{\min}\rho_{i,G},\quad\quad\;\;\; i=1,2,\dots,N,\label{eq:system_sum_rate:decision_ver:c5}\\
	&\enspace \sum_{i=1}^{N}c_{i,G}\rho_{i,G}\leq \hat{C}_G. \label{eq:system_sum_rate:decision_ver:c6}
	\end{align}
\end{definition}


\subsection{Feasibility}
The considered DPMSR problem in~\eqref{eq:system_sum_rate:decision_ver} is feasible while the minimum data rate requirement $c_{\min}$ and the satisfaction rate $\tau$ are set to zero. For instance, consider a general instance of the problem. Take a random position for each DBS $U_j$ inside the target area and both altitude and coverage of $U_j$ do not violate constraints~\eqref{eq:system_sum_rate:c1} and~\eqref{eq:system_sum_rate:c2}. Now if we set all the binary variables $\{\rho_{i,j}\}$ equal to $0$, all the binary variables $\{\rho_{i,G}\}$ will be equal to $1$ according to~\eqref{eq:indicator_ig}. Since the SINR functions $\gamma_{i,j}$ and $\gamma_{i,G}$ are positive (see~\eqref{eq:sinr:d2u} and~\eqref{eq:sinr:g2u}, respectively), the solution satisfies all the constraints. Such a placement result means that all the UEs are served by the GBS.
If $c_{\min}$ and $\tau$ are not equal $0$, problem~\eqref{eq:system_sum_rate:decision_ver} may not converge for a given number of DBSs $k$. Note that~\eqref{eq:system_sum_rate:decision_ver} is a sub-function/sub-problem~\eqref{def:p3:c1} of~\eqref{eq:decition_num_required_dbs}. When~\eqref{eq:system_sum_rate:decision_ver} does not converge, it means that $k$ is too small and the system will iteratively use $k=k+1$ to execute~\eqref{eq:system_sum_rate:decision_ver} until~\eqref{eq:system_sum_rate:decision_ver} converges. If~\eqref{eq:system_sum_rate:decision_ver} still cannot converge when $k=k_{\max}$, where $k_{\max}$ is the upper bound of $k$ we propose in Section~\ref{sec:initialization}. It means that the cellular operator cannot provide a feasible placement for satisfying the given constraints $c_{\min}$ and $\tau$. In this case, the cellular operator needs to relax the constraints $c_{\min}$ and $\tau$ to search a feasible placement.

\vspace{-10pt}
\subsection{NP-Hardness}
In this subsection, we then prove that the considered SSO problem~\eqref{eq:system_sum_rate} is an NP-hard problem. 
If the DPMSR problem~\eqref{eq:system_sum_rate:decision_ver} is NP-complete, it implies that the SSO problem is NP-hard. 
To show further that the DPMSR problem is NP-complete, we consider the special case of it with the relaxation on some constraints. That is, we consider the case $N_G=\sum_{i=1}^{N}\rho_{i,G}=0$ that all the UEs only served by the deployed $k$ DBSs. We also relax the constraints on spatial limitations and the required minimum data rates.
The following gives the definition of the decision problem for the above special case.
\begin{definition}[DPMSR$\rho$ problem]\label{p:dpmsr:p2.1}
	\textbf{Instance:} Suppose that the notations and assumptions are defined as above. All the UEs are served by the deployed $k$ DBSs and $k<N$. Without the consideration of the constraints on spatial limitations and the required minimum data rates, the DPMSR$\rho$ problem can be formally defined as 	\begin{align}\label{p:dpmsr:p2.1:eq}
	\max_{\rho_{i,j}, \forall i,j}&\enspace\sum_{j=1}^{k}\sum_{i=1}^{N}c_{i,j}\rho_{i,j},&\nonumber\\
	s.t.&\enspace \sum_{i=1}^{N}c_{i,j}\rho_{i,j}\leq \hat{C}_j,\quad j=1,2,\dots,k,\nonumber\\
	&\enspace \sum_{j=1}^{k}\rho_{i,j}\leq 1, \qquad\quad i=1,2,\dots,N,\nonumber
	\end{align}
	where 
	\[
	\rho_{i,j}=\begin{cases}
	1, & \text{if UE $u_i$ is associated with DBS $U_j$;}\\
	0, & \text{otherwise}.
	\end{cases}
	\]
\end{definition}

To show that the DPMSR$\rho$ problem is NP-complete, we reduce the \emph{0--1 Multiple Knapsack Problem} (MKP) to the DPMSR$\rho$ problem. The MKP problem is defined as follows.
\begin{definition}[0--1 MKP]\label{p:mkp}
	\textbf{Instance:} Given a set of $n$ items and a set of
	$m$ knapsacks ($m<n$), let $p_{i,j}$ be the profit of item $i$, $w_{i,j}$ be the weight of item $i$, $v_j$ be the volume of knapsack $j$, and then select $m$ disjoint subsets of items so that the total profit of selected items is a maximum and each subset can be assigned to a different knapsack whose volume is no less than the total weight of items in the subset. Formally,
	\begin{align}\label{p:mkp:eq}
	\max_{\varphi_{i,j}, \forall i,j}&\enspace\sum_{j=1}^{m}\sum_{i=1}^{n}p_{i,j}\varphi_{i,j},&\nonumber\\
	s.t.&\enspace \sum_{i=1}^{n}w_{i,j}\varphi_{i,j}\leq v_j, \quad j=1,2,\dots,m,\nonumber\\
	&\enspace \sum_{j=1}^{m}\varphi_{i,j}\leq 1, \qquad\quad i=1,2,\dots,n,\nonumber
	\end{align}
	where 
	\[
	\varphi_{i,j}=\begin{cases}
	1, & \text{if item $i$ is assigned to knapsack $j$;}\\
	0, & \text{otherwise}.
	\end{cases}
	\]
\end{definition}

\begin{thm}
The DPMSR$\rho$ problem is NP-complete.
\end{thm}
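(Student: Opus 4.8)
The plan is to establish both membership in NP and NP-hardness, the latter through the polynomial-time reduction from MKP announced above. For membership, I would work with the natural decision version of DPMSR$\rho$: given a threshold $T$, does there exist a binary assignment $\{\rho_{i,j}\}$ achieving objective value at least $T$? A candidate assignment matrix is a certificate of size $O(Nk)$; verifying that each UE is assigned to at most one DBS (i.e.\ $\sum_{j=1}^{k}\rho_{i,j}\leq 1$), that each load $\sum_{i=1}^{N}c_{i,j}\rho_{i,j}$ stays within $\hat{C}_j$, and that $\sum_{j=1}^{k}\sum_{i=1}^{N}c_{i,j}\rho_{i,j}\geq T$ all take polynomial time. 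Hence DPMSR$\rho\in\text{NP}$.

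For NP-hardness, I would set up the correspondence item $\leftrightarrow$ UE and knapsack $\leftrightarrow$ DBS, so that $n\mapsto N$ and $m\mapsto k$, with the assignment variables identified via $\varphi_{i,j}\leftrightarrow\rho_{i,j}$. Given an MKP instance I would build a DPMSR$\rho$ instance by setting $c_{i,j}:=w_{i,j}$ and $\hat{C}_j:=v_j$. Under this mapping the DBS-load constraints $\sum_{i}c_{i,j}\rho_{i,j}\leq\hat{C}_j$ become exactly the knapsack-volume constraints, and the single-assignment constraints $\sum_{j}\rho_{i,j}\leq 1$ coincide with $\sum_{j}\varphi_{i,j}\leq 1$. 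Thus feasible solutions of the two instances are in bijection, and the construction is clearly computable in time linear in the size of the MKP instance.

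The main obstacle, and the step that requires care, is the objective. In DPMSR$\rho$ the \emph{same} coefficient $c_{i,j}$ plays the role of both the knapsack weight and the reward, whereas a general MKP instance has independent profits $p_{i,j}$ and weights $w_{i,j}$. With $c_{i,j}:=w_{i,j}$ the DPMSR$\rho$ objective evaluates to the total \emph{weight} packed, which equals the MKP total \emph{profit} only when $p_{i,j}=w_{i,j}$. I would therefore carry out the reduction from the profit-equals-weight special case of MKP (equivalently, the multiple subset-sum problem), which is itself NP-hard: restricting to a single knapsack ($m=1$) and asking for a packed total equal to the capacity is precisely \textsc{Subset-Sum}, while the two-knapsack case with equal capacities gives \textsc{Partition}. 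On this restricted family the identity mapping above preserves optimal values, so setting the DPMSR$\rho$ threshold equal to the MKP target $T$ makes the two decision instances equivalent, completing the reduction.

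Combining membership in NP with this reduction shows DPMSR$\rho$ is NP-complete; as noted in the surrounding text, this in turn yields NP-hardness of the SSO problem~\eqref{eq:system_sum_rate}. The only genuinely delicate point is the coupling of the objective and constraint coefficients just discussed; the remaining verifications are routine bookkeeping.
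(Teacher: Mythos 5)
Your proof is correct and follows essentially the same route as the paper's: membership in NP via a polynomially checkable certificate, followed by a reduction built on the correspondence item $\leftrightarrow$ UE, knapsack $\leftrightarrow$ DBS, $\varphi_{i,j}\leftrightarrow\rho_{i,j}$, with $c_{i,j}:=w_{i,j}$ and $\hat{C}_j:=v_j$. Where you go beyond the paper is exactly the point you flag as delicate. The paper's own proof handles the profit/weight coupling by simply stipulating, inside the construction, ``consider the case of an item's profit identical to its weight'' ($c_{i,j}=p_{i,j}=w_{i,j}$), and then asserts a one-to-one correspondence of solutions with general MKP; it never observes that the reduction is therefore only from the profit-equals-weight \emph{restriction} of MKP, nor that one must separately argue this restricted problem remains NP-hard. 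You make both observations explicit and close the gap by noting that the restriction already contains \textsc{Subset-Sum} (one knapsack) and \textsc{Partition} (two equal knapsacks), i.e.\ it is the multiple subset-sum problem, which is NP-hard in its own right. Your treatment of NP membership is also slightly more careful, since you state the decision version with an explicit threshold $T$ rather than speaking loosely of ``validating the existence of a given placement.'' In short: same reduction and same correspondence as the paper, but your write-up repairs a logical shortcut in the published argument rather than merely reproducing it.
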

\begin{proof}
	It is easy to see that the DPMSR$\rho$ problem is in NP, since validating the existence of a given placement simply needs polynomial time. In order to prove the DPMSR$\rho$ problem is NP-complete, a reduction from MKP can be made. Suppose that $I'$ is an instance of the MKP A corresponding instance $I$ of the DPMSR$\rho$ problem can be constructed from $I'$ as follows.
	\begin{enumerate}
		\item Let item $i$ correspond to UE $u_i$ and the number of items be the number of UEs ($n=N$).
		\item Let knapsack $j$ correspond to DBS $U_j$ and the number of knapsacks be the number of DBSs ($m=k$).
		\item Let the profit of item $p_{i,j}$ be the allocated data rate of UE $c_{i,j}$ and consider the case of a item's profit is identical to its weight $w_{i,j}$ ($c_{i,j}=p_{i,j}=w_{i,j}$).
		\item Let the selection function $\varphi_{i,j}$ correspond to the indicator function $\rho_{i,j}$.
	\end{enumerate}
	It is straightforward to show that there is a solution for an instance $I$' of the MKP if and only if there is a solution for instance $I$ of the DPMSR$\rho$ problem since the reduction is a one-to-one mapping for the variables from the MKP to the DPMSR$\rho$ problem. Hence, the DPMSR$\rho$ problem is NP-complete.
\end{proof}

Thus, we can conclude the following theorem.
\begin{thm}
The DPMSR problem is NP-complete and it implies that the SSO problem is NP-hard.
\end{thm}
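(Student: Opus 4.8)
The plan is to chain two steps. First I would upgrade the previous theorem from the restricted problem DPMSR$\rho$ to the full decision problem DPMSR, concluding NP-completeness of DPMSR; then I would transfer this hardness to the optimization problem SSO.

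For the first step I would observe that DPMSR$\rho$ is precisely the special case of DPMSR obtained by forcing $N_G=\sum_{i=1}^{N}\rho_{i,G}=0$ (every UE served by some DBS) and dropping the spatial constraints~\eqref{eq:system_sum_rate:c1}--\eqref{eq:system_sum_rate:c2} together with the minimum-rate constraints~\eqref{eq:system_sum_rate:decision_ver:c3} and~\eqref{eq:system_sum_rate:decision_ver:c5}. Consequently any algorithm that solves the general DPMSR also solves DPMSR$\rho$, so the NP-completeness of DPMSR$\rho$ established above forces DPMSR to be NP-hard. Membership of DPMSR in NP is immediate: given a candidate placement $(x_j,y_j,h_j)$ and the induced indicators $\{\rho_{i,j}\},\{\rho_{i,G}\}$, all rates $c_{i,j},c_{i,G}$ from~\eqref{eq:data_rate:uav2ue} and~\eqref{eq:rate:gbs_ue} and every constraint can be evaluated, and the objective threshold verified, in polynomial time. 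Hence DPMSR is NP-complete.

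For the second step I would note that DPMSR is exactly the decision form of the sum-rate maximization in~\eqref{eq:system_sum_rate}: an oracle returning the optimal SSO value answers in a single call whether that optimum meets any prescribed threshold such as $\tau N c_{\min}$, which is a polynomial-time Turing reduction from DPMSR to SSO. Since the decision problem DPMSR is NP-complete and therefore NP-hard, SSO is NP-hard, which is the claimed implication.

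The step I expect to be the main obstacle is the restriction argument, because DPMSR$\rho$ is obtained from DPMSR partly by removing constraints, which is strictly a relaxation rather than a literal specialization. To make the hardness propagate cleanly I would set $c_{\min}=\tau=0$ and choose the spatial bounds $[h_{\min},h_{\max}]$ and $[r_{\min},r_{\max}]$ wide enough to be non-binding, so that the relaxed constraints become vacuous in the constructed DPMSR instance; one must then verify that the residual freedom in the placement variables cannot let DPMSR attain a strictly larger objective than DPMSR$\rho$ on the mapped instance, so that the two optima coincide and the reduction is faithful.
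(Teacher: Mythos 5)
Your proposal is correct and takes essentially the same approach as the paper, whose entire argument for this theorem is the observation (stated at the start of the NP-hardness subsection) that DPMSR$\rho$ is a special case of DPMSR obtained by fixing $N_G=0$ and relaxing the spatial and minimum-rate constraints, after which the theorem is asserted with no further proof beyond ``Thus, we can conclude the following theorem.'' If anything you are more careful than the paper: the restriction-versus-relaxation subtlety you flag in your final paragraph (handled by setting $c_{\min}=\tau=0$ and choosing non-binding bounds so the dropped constraints become vacuous, then checking the optima coincide), the explicit NP-membership verification, and the explicit decision-to-optimization reduction to SSO are all steps the paper leaves implicit.
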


\section{The Data-Driven 3D Placement (DDP)}
\label{sec:3Dplacement}
The considered problem~\eqref{eq:system_sum_rate:decision_ver} presents a non-convex formulation since the data rate is related to the quality of received signals and the attenuation of signals depends on the LoS probability. Currently, no off-the-shelf optimizer can be applied to solve this optimization problem. We also have proved that the considered problem is a NP-hard problem. Even though we ignore the LoS probability, the time complexity of solving this problem with an exhaustive search is generally of exponential order. However, due to the limitation of time, the optimal solution is not suitable for the considered case of serving outdoor temporary events with flash crowds. The placement decision must be very quick so that the obtained placement result can be useful for target UEs. Therefore, in this work, we propose a model-free algorithm, data-driven 3D placement (DDP), to solve the considered DBS placement problem. The algorithm can be executed in polynomial time. DDP can effectively improve the sum rate performance of the drone-assisted cellular system in a more efficient way, especially for the unpredictable events or flash crowds with arbitrary distributed users.

The proposed approach uses a three-stage procedure with the input spatial information of UEs and a GBS to provide an effective placement of multiple DBSs. 
The pseudo-code of the proposed placement procedure is described as Algorithm~\ref{alg:ddp}. In addition, we explain the notations/variables in an in-text manner and use some comment texts to help the ease of understanding. The detailed descriptions of the proposed approach will be presented in following subsections.
%

\subsection{Initialization}
\label{sec:initialization}
According to the considered DPMSR problem~\eqref{eq:system_sum_rate:decision_ver}, we can know that the system sum rate mainly depends on $N_j=\sum_{i=1}^{N}\rho_{i,j}$ and $N_G=1-\sum_{j=1}^{k}N_j$ which are determined by the placement of DBSs. It is also similar to user association or load balancing issues of communication systems. The proposed approach uses the spatial information of UEs, DBSs, and the GBS to provide an effective placement of DBSs. 
Let variable $L_G=(x_G,y_G)$ record the location (coordinate) of the GBS, a set $L_E$ store the locations of UEs, and a set $L_U$ save the locations of DBSs. In the system initialization stage, the system computes and store the received power of each UE from the GBS, $P_{i,G}^R=P_Ghr_{i,G}^{-\alpha}$ in a set $S_G$, where $1\leq i\leq N$. The distances from the GBS to all UEs are stored in a set $D_G$. 

In the initialization stage of the proposed approach, the system first determines the preliminary association between the GBS and each UE since DBSs are used to assist the GBS. In fact, the interference power also cannot be obtained at the step. The initial association between the GBS and each UE then will be determined by the condition $\gamma_{i,G}\geq\gamma_{\text{th}}$, where $\gamma_{i,G}$ is the SINR without considering the interference. We can use the indicator function~\eqref{eq:indicator_ig} to identify a UE is served by the GBS and it is equivalent to 
\begin{align}
\rho_{i,G}=&\begin{cases}
1, & \text{if }r_{i,G}\leq r_G\text{ and }\gamma_{i,G}\geq\gamma_{\text{th}},\\
0, & \text{otherwise}.
\end{cases}\label{eq:indicator_iG}
\end{align}
The number of UEs having at least SINR value, $\gamma_{\text{th}}$, is denoted as
\vspace{-5pt}
\begin{equation}\label{eq:temporary_n_g_max}
N_G^{\text{temp}}=\sum_{i=1}^{N}\rho_{i,G}.
\end{equation}
In addition, according to~\eqref{eq:system_sum_rate:c3}, and~\eqref{eq:system_sum_rate:c4}, we can obtain the upper bound of $N_G$ by
\begin{equation}\label{eq:initial_n_g_max}
N_G^{\max}=\hat{C}_G/c_{\min}.
\end{equation}
In the proposed approach, we select the UEs with top $N_G$ values of the SINR to associate with the GBS, where $N_G$ is 
\begin{equation}\label{eq:initial_n_g}
N_G=\min\left(N_G^{\text{temp}},N_G^{\max}\right).
\end{equation}

After initializing the association between the GBS and UEs, the system will place a number of DBSs to serve the remaining dissociated UEs. However, the number and locations of deployed DBSs are unknown. We have formulated such a problem as~\eqref{eq:min_num_required_dbs} which can be converted into~\eqref{eq:decition_num_required_dbs}. If the number of DBSs $k$ is unknown and not given, the system needs to interactively search for the appropriate value of $k$ starting from $k=1$. Such an exhaustive search wastes a large amount of computational cost, especially when the scale of the considered scenario is large. To reduce the computational cost and accelerate the converge speed, we propose following two propositions.
\begin{proposition}[Lower bound of $k$]\label{proposition1}
	Suppose that the notations are defined as above, 	given a predefined minimum data requirement of UE, $c_{\min}$, and a target satisfaction rate, $\tau$, the lower bound of $k$ will be 
	\begin{equation*}
	k_{\min}=\left\lceil\dfrac{\tau(N-N_G) c_{\min}}{B\log_2(1+\gamma_{\text{th}})}\right\rceil,
	\end{equation*}
	where $B$ is the bandwidth of front-haul channel provided by a DBS.
\end{proposition}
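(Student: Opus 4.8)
The plan is to reduce the statement to an elementary counting (pigeonhole) argument that balances the aggregate data-rate demand the DBSs must cover against the serving capacity of a single DBS. First I would pin down exactly which UEs the DBSs are responsible for. After the initialization stage the GBS is associated with the top-$N_G$ UEs by SINR, with $N_G$ fixed by~\eqref{eq:initial_n_g}, so the remaining $N-N_G$ UEs are precisely those that the deployed DBSs must cover. To meet the prescribed target satisfaction rate $\tau$ over this remaining population, the DBSs must jointly serve at least $\tau(N-N_G)$ of them, and by constraint~\eqref{eq:system_sum_rate:decision_ver:c3} every served UE is guaranteed a rate of at least $c_{\min}$. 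Hence the aggregate rate the DBSs must deliver is at least $\tau(N-N_G)\,c_{\min}$.

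Next I would bound the serving capacity of one DBS. A DBS $U_j$ shares its cellular bandwidth $B$ among its associated UEs (so $\sum_i B_{i,j}=B$), and any UE it serves satisfies $\gamma_{i,j}\ge\gamma_{\text{th}}$ through the association rule~\eqref{eq:indicator_ij}. By the Shannon expression~\eqref{eq:data_rate:uav2ue}, a UE served at the threshold SINR attains rate $B_{i,j}\log_2(1+\gamma_{\text{th}})$, so reaching the minimum rate $c_{\min}$ costs $c_{\min}/\log_2(1+\gamma_{\text{th}})$ of $U_j$'s bandwidth per UE. Dividing the total bandwidth $B$ by this per-UE share gives the serving capacity $N_j^{\max}=B\log_2(1+\gamma_{\text{th}})/c_{\min}$; equivalently, at the threshold operating point a single DBS supplies an aggregate rate of $B\log_2(1+\gamma_{\text{th}})$.

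With these two ingredients the conclusion is immediate. Since $k$ DBSs together can cover at most $k\,B\log_2(1+\gamma_{\text{th}})$ of rate, equivalently at most $k\,N_j^{\max}$ UEs, feasibility forces $k\,B\log_2(1+\gamma_{\text{th}})\ge\tau(N-N_G)\,c_{\min}$, and the integrality of $k$ then yields $k\ge\left\lceil \tau(N-N_G)c_{\min}/(B\log_2(1+\gamma_{\text{th}}))\right\rceil=k_{\min}$, which is the claimed lower bound.

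The delicate step is the capacity bound, since $\gamma_{\text{th}}$ enters only as a \emph{lower} threshold on the per-link SINR: a UE enjoying an SINR well above $\gamma_{\text{th}}$ needs strictly less bandwidth, so without an upper bound on $\gamma_{i,j}$ a DBS could in principle pack in more UEs, and the inequality would need care to remain a genuine lower bound rather than a nominal estimate. The quantity $B\log_2(1+\gamma_{\text{th}})$ is the governing per-DBS rate precisely in the interference-limited dense-crowd regime that motivates this paper, where the coverage-overlapping co-channel interference drives the operating SINR down toward $\gamma_{\text{th}}$. I would therefore make the reference operating SINR explicit, treating $\gamma_{\text{th}}$ as the governing per-link SINR, so that $k_{\min}$ is justified as the computational lower bound that seeds the iterative search over $k$ rather than appealing to an unavailable upper bound on SINR.
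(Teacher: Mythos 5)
Your proof is correct and takes essentially the same route as the paper: both arguments bound the residual demand $\tau(N-N_G)\,c_{\min}$ left after GBS association and divide it by the per-DBS serving capacity $B\log_2(1+\gamma_{\text{th}})$ (equivalently, the per-DBS UE capacity $N_j^{\max}=B\log_2(1+\gamma_{\text{th}})/c_{\min}$), then take the ceiling. The ``delicate step'' you flag is precisely the assumption the paper makes implicitly when it writes $c_{\min}=\frac{B}{N_j^{\max}}\log_2(1+\gamma_{\text{th}})$ as an equality (i.e., every served UE operates at the threshold SINR with the bandwidth fully shared), so your treatment is, if anything, more explicit about that point than the paper's own proof.
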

\begin{IEEEproof}
	In the considered system, the number of UEs, $N$, and the minimum data rate requirement, $c_{\min}$, are given beforehand. After the initialization stage, the number of UEs served by the GBS, $N_G$, is obtained by~\eqref{eq:initial_n_g}. With the above information, we can know that the expected number of remaining UEs needs to be served by DBSs is $\tau(N-N_G)$ and the expected minimum total traffic demand is $\tau(N-N_G)c_{\min}$. If we deploy $k$ DBSs to serve these remaining UEs, the corresponding number of UEs served by DBS $U_j$ will be $N_j\in \{N_1,N_2,\dots,N_k\}, j=1,2,\dots,k$. Let $N_j^{\max}=\max\{N_1,N_2,\dots,N_k\}$, the minimum (lower bound) of $k$ will be
	\begin{equation}\label{eq:proposition1:k_min1}
	k_{\min} = \left\lceil\dfrac{\tau(N-N_G)}{N_j^{\max}}\right\rceil,
	\end{equation}
	where $(N-N_G)$ is an integer value obtained after associating UEs with the GBS. In addition, according to the given $c_{\min}$ and~\eqref{eq:sum_rate_constrait:uav_to_all_ue},~\eqref{eq:data_rate:uav2ue} can be rewritten as
	\begin{equation}\label{eq:proposition1:c_min}
		c_{\min} = \dfrac{B}{N_j^{\max}}\log_2(1+\gamma_{\text{th}}),
	\end{equation}
	where $B$ is the channel bandwidth provided by a DBS.
	In summary, according to~\eqref{eq:proposition1:c_min}, the lower bound of $k$ in~\eqref{eq:proposition1:k_min1} can be rewritten as
	\begin{equation}\label{eq:initial_value_of_k}
	k_{\min} = \left\lceil\dfrac{\tau(N-N_G) c_{\min}}{B\log_2(1+\gamma_{\text{th}})}\right\rceil.
	\end{equation}
	In the considered system, $N$, $c_{\min}$, $B$, and $\gamma_{\text{th}}$ are given beforehand and $N_G$ is determined after the initial stage, the lower bound of $k$ can be derived by~\eqref{eq:initial_value_of_k}. 
\end{IEEEproof}

However, each DBS has a capacity constraint~\eqref{eq:system_sum_rate:c6} and the spectrum of backhaul links is limited. Hence, we can give the following proposition to find the upper bound of $k$.
\begin{proposition}[Upper bound of $k$]\label{proposition2}
	Suppose that the notations are defined as above, 	given a predefined SINR threshold, $\gamma_{\text{th}}^{\text{bk}}$, the upper bound of $k$ will be 
	\begin{equation*}
	k_{\max}=\left\lfloor\dfrac{B^\text{bk}\log_2(1+\gamma_{\text{th}}^{\text{bk}})}{B\log_2(1+\gamma_{\text{th}})}\right\rfloor,
	\end{equation*}
	where $B^\text{bk}$ is the channel bandwidth of mmWave backhaul transmission shared by all DBSs.
\end{proposition}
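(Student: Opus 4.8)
The plan is to bound $k$ through the shared mmWave backhaul: every deployed DBS must reserve a slice of the total backhaul bandwidth $B^\text{bk}$, and these slices cannot sum to more than $B^\text{bk}$. First I would quantify the backhaul capacity each DBS must be granted. From the derivation of Proposition~\ref{proposition1}, in particular~\eqref{eq:proposition1:c_min}, a fully loaded DBS whose associated UEs all meet the cellular threshold delivers a front-haul sum rate of $N_j^{\max}c_{\min}=B\log_2(1+\gamma_{\text{th}})$. Constraint~\eqref{eq:system_sum_rate:c6} then forces the backhaul capacity to cover this load, so each DBS requires $\hat{C}_j\geq B\log_2(1+\gamma_{\text{th}})$.

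Next I would convert this capacity requirement into a bandwidth requirement on the backhaul link. By~\eqref{eq:backhaul_rate:uav}, $\hat{C}_j=B_{j,G}^\text{bk}\log_2(1+\gamma_{j,G})$, and by~\eqref{eq:g2u:sinr} the backhaul SINR is provisioned to the design threshold $\gamma_{j,G}=\gamma_{\text{th}}^\text{bk}$. Substituting this operating point together with the per-DBS capacity requirement above, the bandwidth each DBS must reserve is
\begin{equation*}
B_{j,G}^\text{bk}=\frac{B\log_2(1+\gamma_{\text{th}})}{\log_2(1+\gamma_{\text{th}}^\text{bk})}.
\end{equation*}

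Finally I would impose the shared-spectrum constraint $\sum_{j=1}^{k}B_{j,G}^\text{bk}\leq B^\text{bk}$. Since every DBS reserves the same bandwidth, this reads $k\,B\log_2(1+\gamma_{\text{th}})\leq B^\text{bk}\log_2(1+\gamma_{\text{th}}^\text{bk})$, and solving for the integer $k$ with a floor gives exactly $k_{\max}=\big\lfloor B^\text{bk}\log_2(1+\gamma_{\text{th}}^\text{bk})/\big(B\log_2(1+\gamma_{\text{th}})\big)\big\rfloor$.

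The hard part will not be the algebra but justifying the fixed operating point $\gamma_{j,G}=\gamma_{\text{th}}^\text{bk}$. Constraint~\eqref{eq:g2u:sinr} only \emph{lower}-bounds the backhaul SINR, so in principle a DBS placed very close to the GBS could achieve a much higher $\gamma_{j,G}$ and reserve less bandwidth, which would loosen the count; taken to the extreme this would prevent any finite bound from the SINR inequality alone. The proposition therefore rests on the modeling assumption that the backhaul is dimensioned at the nominal threshold SINR, i.e. that each link is provisioned for the worst acceptable channel so that feasibility is guaranteed for any admissible DBS placement. I would make this assumption explicit and argue that, under it, the per-DBS bandwidth above is precisely what must be reserved, so that the floor of the bandwidth ratio is a genuine cap on the number of simultaneously supportable DBSs.
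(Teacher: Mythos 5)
Your proof is correct and follows essentially the same route as the paper: the paper likewise takes the per-DBS front-haul load to be $B\log_2(1+\gamma_{\text{th}})$, evaluates the backhaul capacity at the threshold SINR $\gamma_{\text{th}}^{\text{bk}}$ with the shared band split equally as $B^{\text{bk}}/k$ per DBS, and solves the resulting inequality $B\log_2(1+\gamma_{\text{th}})\leq \frac{B^{\text{bk}}}{k}\log_2(1+\gamma_{\text{th}}^{\text{bk}})$ for $k$ --- your ``sum of identical reservations $\leq B^{\text{bk}}$'' formulation is algebraically the same as that equal split. Your closing caveat that one must fix the operating point $\gamma_{j,G}=\gamma_{\text{th}}^{\text{bk}}$ (since \eqref{eq:g2u:sinr} only lower-bounds the backhaul SINR) is a fair observation about an assumption the paper leaves implicit.
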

\begin{IEEEproof}
	According to~\eqref{eq:sum_rate_constrait:uav_to_all_ue} and~\eqref{eq:proposition1:c_min}, we can rewrite~\eqref{eq:system_sum_rate:c6} as
	\begin{equation}\label{eq:proposition2:backhaul_constraint}
	C_j=\sum_{i=1}^{N_j}c_{i,j}=c_{\min}N_j^{\max}=B\log_2(1+\gamma_{\text{th}})\leq \hat{C}_j, 
	\end{equation}
	where $j=1,2,\dots,k$. By~\eqref{eq:backhaul_rate:uav},~\eqref{eq:proposition2:backhaul_constraint} can be formulated as
	\begin{equation}\label{eq:proposition2:relation_between_fronthaul_backhaul}
	B\log_2(1+\gamma_{\text{th}})\leq \dfrac{B^\text{bk}}{k}\log_2(1+\gamma_{\text{th}}^{\text{bk}}).
	\end{equation}	
	After moving the items in~\eqref{eq:proposition2:relation_between_fronthaul_backhaul}, we can get
	\begin{equation*}
	k\leq \dfrac{B^\text{bk}\log_2(1+\gamma_{\text{th}}^{\text{bk}})}{B\log_2(1+\gamma_{\text{th}})}.
	\end{equation*}
	Thus, the upper bound of $k$ can be expressed as
	\begin{equation}\label{eq:proposition2:upperbound_k}
	k_{\max}=\left\lfloor\dfrac{B^\text{bk}\log_2(1+\gamma_{\text{th}}^{\text{bk}})}{B\log_2(1+\gamma_{\text{th}})}\right\rfloor.
	\end{equation}
\end{IEEEproof}

In summary, according to our two propositions, if the target satisfaction rate, $\tau$, and the minimum data rate requirement, $c_{\min}$, given by cellular operator does not make the following equation hold, it means that the considered system can not provide a feasible placement and the cellular operator needs to relax the constraints $\tau$ and $c_{\min}$.
\begin{equation*}
\left\lceil\dfrac{\tau(N-N_G) c_{\min}}{B\log_2(1+\gamma_{\text{th}})}\right\rceil\leq k \leq \left\lfloor\dfrac{B^\text{bk}\log_2(1+\gamma_{\text{th}}^{\text{bk}})}{B\log_2(1+\gamma_{\text{th}})}\right\rfloor.
\end{equation*}

\subsection{User Association Clustering}
In the second stage, the system makes each UE be associated with a least one DBS in a best-effort manner. For UE $u_i$, $\gamma_{i,j}$ must be not smaller than the given threshold $\gamma$ so that $u_i$ can be associated with a DBS $U_j$. In general, the allocated data rate $c_i$ and $\gamma_{i,j}$ of $u_i$ increases when distance $r_{i,j}$ between $u_i$ and $U_j$ decreases. We thus reduce the NP-complete 0-1 MKP to a variation of assignment problem, \emph{Capacitated Clustering Problem} (CCP)~\cite{MULVEY1984339}. The CCP is also NP-complete and can be defined as follows.
\begin{definition}[CCP]\label{p:ccp}
	\textbf{Instance:} Given a set of $N$ UEs and a set of $k$ DBS ($k<N$), let $r_{i,j}$ be the horizontal distance between UE $u_i$ and DBS $U_j$ (cluster centroid), $c_{i,j}$ be the allocated data rate of UE $u_i$, $\hat{C}_j$ be the back-haul constraint of DBS $U_j$, and then find $k$ disjoint subsets of UEs so that the total horizontal distance value of selected UEs is a minimum and each subset can be assigned to a different DBS whose back-haul constraint is no less than the total horizontal distance value of UEs in the subset. Formally,	
	\begin{align}\label{p:gap:eq}
	\min_{\rho_{i,j},\beta_j,\forall i,j}&\enspace\sum_{j=1}^{k}\sum_{i=1}^{N}r_{i,j}\rho_{i,j},&\\
	s.t.&\enspace \sum_{i=1}^{N}c_{i,j}\rho_{i,j}\leq \hat{C}_j, \quad j=1,2,\dots,k,\nonumber\\		
	&\enspace \sum_{j=1}^{k}\rho_{i,j}=1, \qquad\quad i=1,2,\dots,N,\nonumber\\
	&\enspace \sum_{j=1}^{k}\beta_j=k, \qquad\quad j=1,2,\dots, k,\nonumber
	\end{align}
	where $\beta_j\in\{0,1\}$ indicates whether DBS $U_j$ is deployed or not and 
	\[
	\rho_{i,j}=\begin{cases}
	1, & \text{if item $i$ is assigned to DBS $U_j$;}\\
	0, & \text{otherwise}.
	\end{cases}
	\]
\end{definition}

According to Definition~\ref{p:ccp} of CCP, we can know that the clustering technologies can be used to deal with the user association problem. Note that $r_{i,j}$ in~\eqref{p:gap:eq} represents the cost function for the clustering. We can substitute a customized cost function for $r_{i,j}$ to obtain a different clustering result. In this stage, we adopt a \emph{balanced $k$-means clustering}~\cite{10.1007/978-3-662-44415-3_4} for obtaining a balanced placement result. Such a result can make each DBS serve almost the same number of UEs. In fact, the data rate of a UE is not proportion to distance squared according to~\eqref{eq:sinr:d2u} and~\eqref{eq:data_rate:uav2ue} and the co-channel interference from different nearby DBSs is not considered in this stage. Balanced $k$-means clustering cannot find the optimal result for the considered problem~\eqref{eq:system_sum_rate:decision_ver}. However, the SINR value of each UE can only be derived after obtaining the candidate location and altitude of DBSs. We thus proposed a re-association stage for solving the above issue and we present the detail of the procedure in the following subsection.

\begin{algorithm2e}[!t]
	\footnotesize
	\SetAlgoLined
	\KwIn{dataset of UE locations $L_E$, location of the GBS $L_G$, the maximum number of DBS $K$, SINR threshold $\gamma_{\text{th}}$, the channel bandwidth provided by each DBS and the GBS $B$, the transmit power of the GBS $P_G$, the transmit power of a DBS $P_{\text{drone}}$, and the minimum data rate requirement $c_{\min}$}
	\KwOut{association information $L_{\text{association}}$, UEs' SINR $L_{\text{SINR}}$, DBS locations $L_{\text{drone}}^{\text{cand}}$, DBS coverage radii $L_{\text{radius}}$, and DBS altitudes $L_{\text{altitude}}$}
	create a list $L_{\text{drone}}^{\text{cand}}$ to store the DBS locations\;
	$N\leftarrow L_E.length$\;
	create a list $D_G$ to record the distance between the GBS and each UE\;
	create a list $S_G$ to save the received power from the GBS on each UE\;	
	create a list $L_{\text{association}}$ to save the association information of each UE\;	
	create a list $L_{\text{SINR}}$ to save UEs' SINR received from its associated DBS\;	
	create a list $L_{\text{radius}}$ to save the candidate coverage radius of each DBS\;	
	create a list $L_{\text{altitude}}$ to save the candidate altitude of each DBS\;
	\For{$i=1$ to $N$}{
		$D_G[i]\leftarrow\sqrt{(L_E[i].x-L_G.x)^2+(L_E[i].y-L_G.y)^2}$\;
		\tcc{$h$ modeles Rayleigh fading and the path-loss exponent $\alpha>2$}	
		$S_G[i]\leftarrow P_Gh*(D_G[i])^{-\alpha}$\;
		compute SINR $\gamma_{i,G}$ by~\eqref{eq:sinr:g2u} with the interference $I_U=0$ (mW)\;
		\If{$\gamma_{i,G}\geq\gamma_{\text{th}}$}{
			\tcc{The association value is $0,1,\dots,k$}
			$L_{\text{association}}[i]\leftarrow 0$\;
		}
	}
	get $N_G$ by checking the number of ``0" in $L_{\text{association}}$\; 
	initialize $k$ by~\eqref{eq:initial_value_of_k}\;
	run balanced $k$-means clustering~\cite{10.1007/978-3-662-44415-3_4} to cluster the dissociated UEs with $L_G$ and update $L_{\text{association}}$\label{alg:ddp:19}\;
	\Repeat{all the DBS locations in $L_{\text{drone}}^{\text{cand}}$ do not change\label{alg:ddp:20}}{
		do the placement refinement by finding the minimum enclosing circle of each cluster~\cite{10.1007/BFb0038202}\;
		update $L_{\text{drone}}^{\text{cand}}$ using the centor point of each minimum enclosing circle\;		
		update $L_{\text{radius}}$ using the radius of each minimum enclosing circle\;
		update $L_{\text{altitude}}$ by~\eqref{eq:optimal_altitude_1} and~\eqref{eq:optimal_altitude_2} using the corresponding radius in $L_{\text{radius}}$ as the input\;
		update the SINR value of each UE in $L_{\text{SINR}}$ by~\eqref{eq:sinr:d2u} and~\eqref{eq:sinr:g2u}\;
		\For{$i=1$ to $N$}{
			\If{$L_{\text{association}}[i]==-1\vee L_{\text{SINR}}[i]<\gamma_{\text{th}}$}{
				try to re-assocaite UE $u_i$ with another nearby DBS and update $L_{\text{association}}$ if the SINR value is not smaller than $\gamma_{\text{th}}$\;
				\uIf{exist another one DBS can be re-assocaited by $u_i$}{
					update $L_{\text{association}}[i]$ and	jump to line~\ref{alg:ddp:20}\;
				}\Else{
					$L_{\text{association}}[i]\leftarrow -1$\;
				}				
			}
		}
	}
	\If{\eqref{def:p3:c2} does not hold}{
		$k=k+1$\;
		jump to line~\ref{alg:ddp:19}\;
	}
	\Return $L_{\text{association}}$,$L_{\text{drone}}^{\text{cand}}$, $L_{\text{SINR}}$, $L_{\text{radius}}$, and $L_{\text{altitude}}$\;
	\caption{The procedure of DDP}
	\label{alg:ddp}
\end{algorithm2e}

\subsection{UE Re-association for Placement Refinement}
After the association stage, the system gets an initial placement recommendation with $k$ centroid points of the generated clusters. If we treat the horizontal coverage of each DBS mapping to the ground as an ideal circle and directly deploy each DBS to the centroid point of each cluster, the horizontal coverage radius of each DBS will be the horizontal distance from the centroid point to the furthest UE of each cluster. However, the system using such a placement recommendation will cause a large overlapping coverage area. If the overlapping coverage area becomes higher, it means that the distances between different deployed DBSs become short. Such a placement of DBSs may leads serious co-channel interference between the DBSs.

To alleviate the effect of the co-channel interference, the first task of this stage, placement refinement, will be used to refine the 2D location and coverage radius of each DBS. The procedure of the placement refinement solves the minimum enclosing circle problem~\cite{10.1007/BFb0038202} in linear time. After obtaining the minimum enclosing circle of each cluster, the system recognized it as the candidate coverage of each DBS and the center of each minimum enclosing circle will be the 2D candidate location of each DBS.

The system then computes and records the SINR value of each UE using the information of candidate coverage and 2D candidate location of each DBS. Since the above balanced $k$-means clustering and refinement does not handle the communications constraints yet, we need to check whether the demand of each clustered UE on data rates can be satisfied in this stage. 
If not, it means that some UEs are too far away from its associated DBS and the SINR of received signals can not exceed the threshold. In such a case, this kind of UEs may be re-associated with another nearby DBS and then get the satisfied data rate. Hence, the second task of this stage is to check the communications constraints of each UE and re-associate all the unsatisfied UEs. The system iteratively runs above operations of re-association and placement refinement until the all the clusters do not change. During the above loop, the horizontal locations and coverage radii of DBSs are updated iteratively. In the meanwhile, the system also uses~\eqref{eq:optimal_altitude_1} and~\eqref{eq:optimal_altitude_2} with the obtained coverage radii, the predefined allowable path-loss, and transmit power to derive the corresponding altitudes of DBSs.

The last task of this stage is to judge whether the obtained candidate placement is valid by checking the satisfaction rate. If the placement can not meet the given threshold of satisfaction rate, it means that the obtained candidate placement is invalid and the value of $k$ may be too small to satisfy the UE demand in the considered scenario.
The system will thereby do the whole procedure of this stage repeatedly with $k=k+1$ until the obtained candidate placement is valid. In fact, such a iterative search will stop by condition~\eqref{def:p3:c2}, which guarantee that the ratio of satisfied UEs to the all UEs exceeds the given satisfaction rate $\tau$.
\begin{figure*}[!t]
	\centering
	\begin{minipage}[t]{0.32\linewidth} 
		\centering 
		\includegraphics[width=\linewidth]{./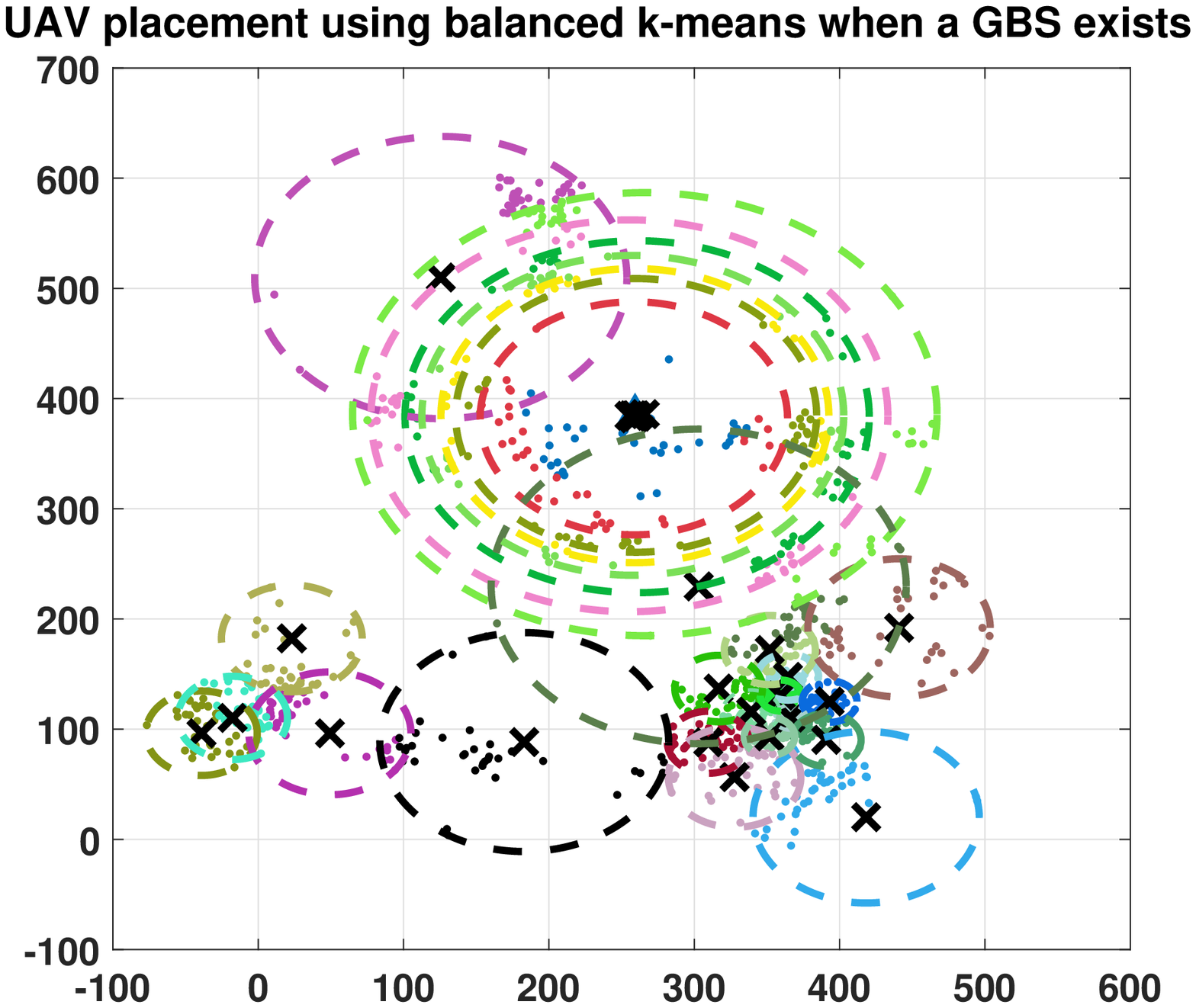} 
	\end{minipage}\hspace{.01\linewidth} 
	\begin{minipage}[t]{0.32\linewidth} 
		\centering 
		\includegraphics[width=\linewidth]{./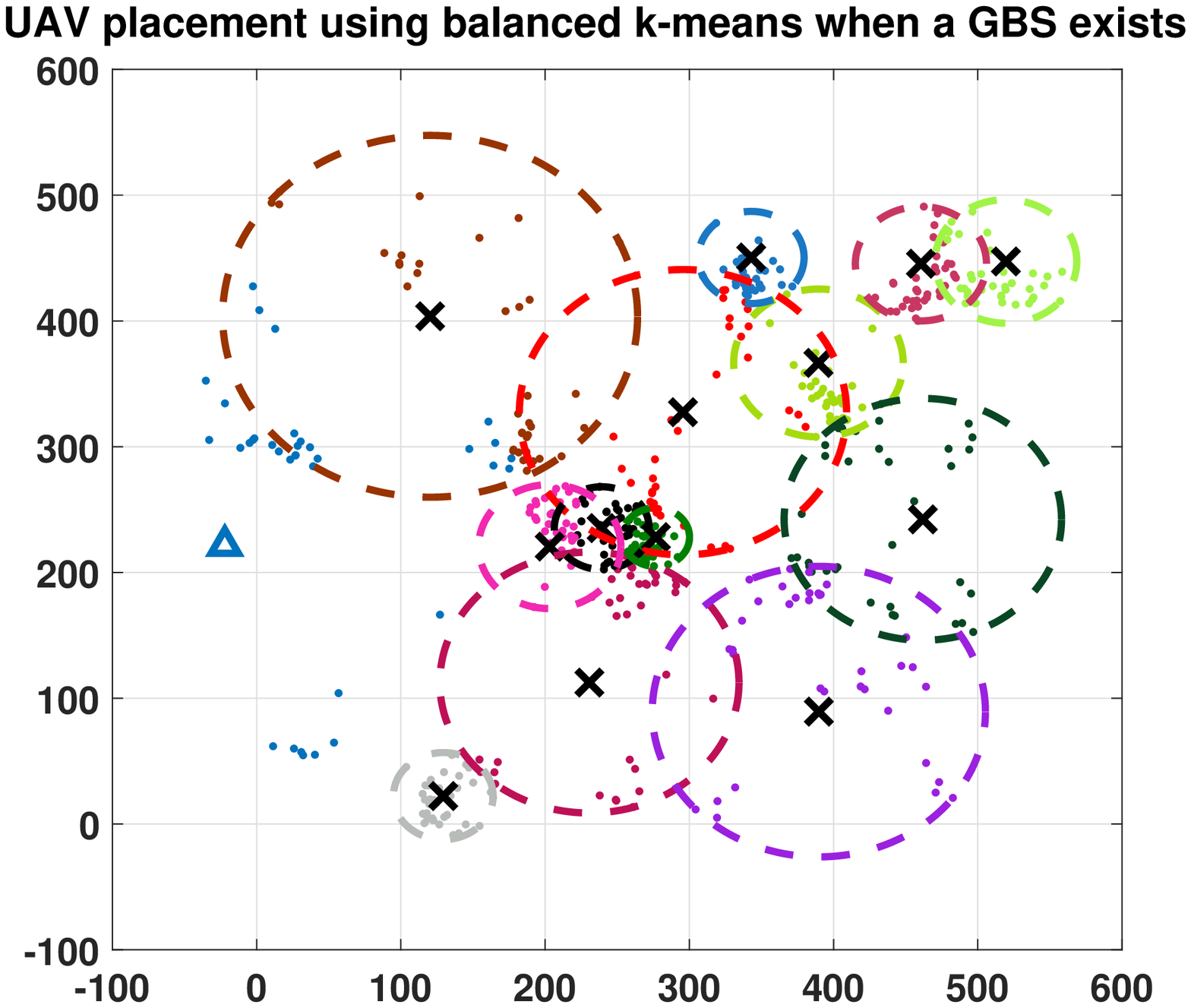} 
	\end{minipage}\hspace{.01\linewidth} 
	\begin{minipage}[t]{0.32\linewidth} 
		\centering 
		\includegraphics[width=\linewidth]{./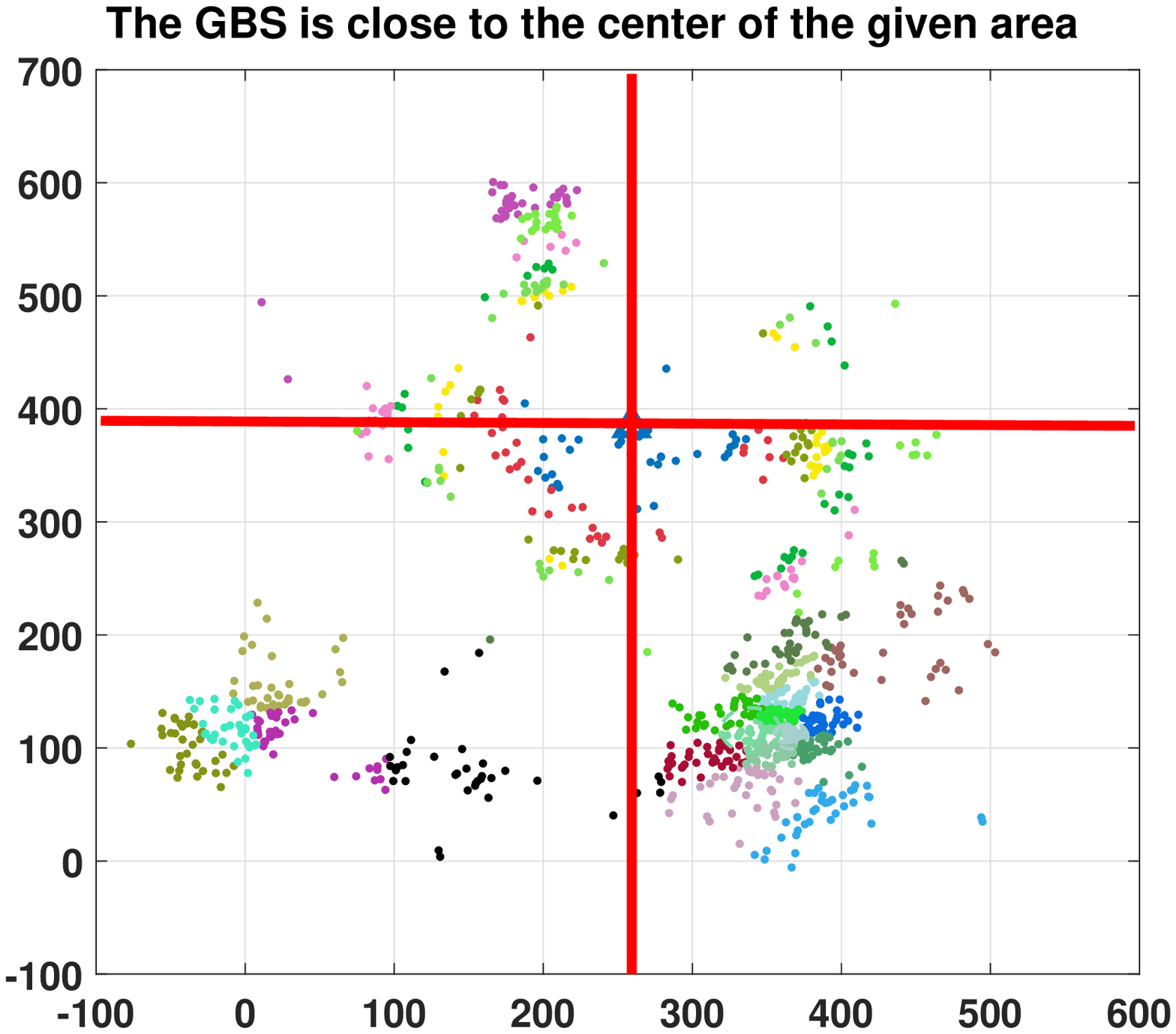} 
	\end{minipage}\\[-10pt]%
	\begin{minipage}[t]{0.32\linewidth}
		\caption{The coverage overlapping problem of balanced $k$-means if the location of the GBS is close to the center of the considered area.} 
		\label{fig:limitation:proposed}  
	\end{minipage}\hspace{.01\linewidth} 
	\begin{minipage}[t]{0.32\linewidth} 
		\caption{No coverage overlapping problem occurs if the location of the GBS is close to the boundary of the considered area.} 
		\label{fig:limitation:compared} 
	\end{minipage}\hspace{.01\linewidth} 
	\begin{minipage}[t]{0.32\linewidth} 
		\caption{Use the coordinates of the GBS to pre-partition the considered area for alleviating the coverage overlapping problem.} 
		\label{fig:limitation:enhancement} 
	\end{minipage}%
\end{figure*}

\section{The Enhanced DDP (eDDP)}
\label{sec:eddp}
Although the proposed DDP can provide a balanced placement result that each DBS serves almost the same number of UEs, a new issue, coverage overlapping problem, arises when we try to cluster UEs when some UEs already have associated with the GBS.  Fig.~\ref{fig:limitation:proposed} shows the result of overlapping placement case. The coverage overlapping problem leads significant interference, and thus reduces system sum rate. The reason is that the conventional approaches does not consider the altitude of a DBS which affects the associations between UEs and DBSs. Furthermore, most existing works do not consider the placement problem when a GBS exists. In short, the enhancement or adaptation of conventional clustering approach for the considered problem is required.

To mitigate the negative impact of the coverage overlapping problem, we conduct the multiple clustering tests with different input data (locations of UEs and the GBS) and observe the output clustered results. As shown in Fig.~\ref{fig:limitation:compared}, we find that the coverage overlapping problem rarely occurs when the location of GBS is close to the boundary of given area. We thus propose an enhancement to reduce the occurrence the coverage overlapping problem. The enhancement is to pre-partition the considered area into two sub-regions, as shown in Fig.~\ref{fig:limitation:enhancement}. The system considers the two sub-areas obtained as two separate inputs and then runs DDP (Algorithm~\ref{alg:ddp}) to find their placement recommendations separately. To make the system can work automatically, a pre-partition rule must be defined in advance. The system can follow the rules to trigger the pre-partition, and thus reduce the occurrence of coverage overlapping problem.

Considering the ideal case without interference, we can obtain the effective coverage radius (or propagation distance) $r_G$ of the GBS according to~\eqref{eq:sinr:g2u} and the given SINR threshold $\gamma_{\text{th}}$. Then, we use the location $L_G$ and ideal coverage radius $r_G$ of the GBS to define the pre-partition rule. Given a target area $\mathcal{A}=\{[X_{\min}^{\mathcal{A}},X_{\max}^{\mathcal{A}}],[Y_{\min}^{\mathcal{A}},Y_{\max}^{\mathcal{A}}]\}$ where $[X_{\min}^{\mathcal{A}},X_{\max}^{\mathcal{A}}]$ and $[Y_{\min}^{\mathcal{A}},Y_{\max}^{\mathcal{A}}]$ are $X$-axis and $Y$-axis boundaries of $\mathcal{A}$, let $\text{dist}(L_G,X_{\min}^{\mathcal{A}})$, $\text{dist}(L_G,X_{\max}^{\mathcal{A}})$, $\text{dist}(L_G,Y_{\min}^{\mathcal{A}})$, and $\text{dist}(L_G,Y_{\max}^{\mathcal{A}})$ are the distances from the GBS to different boundaries, symbols ``$\wedge$" and ``$\vee$" are respectively the logical operators ``and" and ``or", the pre-partition rule are defined as
\begin{enumerate}
	\item If $(\text{dist}(L_G,X_{\min}^{\mathcal{A}})>r_G\wedge \text{dist}(L_G,X_{\max}^{\mathcal{A}})>r_G)\wedge(\text{dist}(L_G,Y_{\min}^{\mathcal{A}})>r_G\wedge \text{dist}(L_G,Y_{\max}^{\mathcal{A}})>r_G)$, the system uses the location of the GBS, $L_G$, to partition $\mathcal{A}$ into four sub-regions;
	\item Else if $(\text{dist}(L_G,X_{\min}^{\mathcal{A}})>r_G\wedge \text{dist}(L_G,X_{\max}^{\mathcal{A}})>r_G)\wedge(\text{dist}(L_G,Y_{\min}^{\mathcal{A}})\leq r_G\vee \text{dist}(L_G,Y_{\max}^{\mathcal{A}})\leq r_G)$, use the $X$-axis coordinate of the GBS, $L_G.x$, to partition $\mathcal{A}$ into two sub-regions.
	\item Else if $(\text{dist}(L_G,X_{\min}^{\mathcal{A}})\leq r_G\vee \text{dist}(L_G,X_{\max}^{\mathcal{A}})\leq r_G)\wedge(\text{dist}(L_G,Y_{\min}^{\mathcal{A}})> r_G\wedge \text{dist}(L_G,Y_{\max}^{\mathcal{A}})> r_G)$, use the $Y$-axis coordinate of the GBS, $L_G.y$, to partition $\mathcal{A}$ into two sub-regions.
	\item Otherwise, do nothing.
\end{enumerate}

Although the occurrence of the coverage overlapping problem can be reduced by the pre-partition, some additional computation needs to be processed. Some DBSs deployed in different sub-regions are close to the partition line and their coverage may overlap each other, the obtained local SINR information of each UE is not the final result. That is, after executing DDP for the different sub-regions, the system runs a global integration to update the SINR information of the UEs who covered by two DBSs in different sub-regions. Finally, the system can computes the correct system sum rate. Overall, the pre-partition rule is depicted in Fig~\ref{fig:partition_region} and the procedure of enhanced DDP (eDDP) is summarized as Algorithm~\ref{alg:eddp}.

\begin{figure}[!t]
	\centering
	\includegraphics[width=.75\columnwidth]{./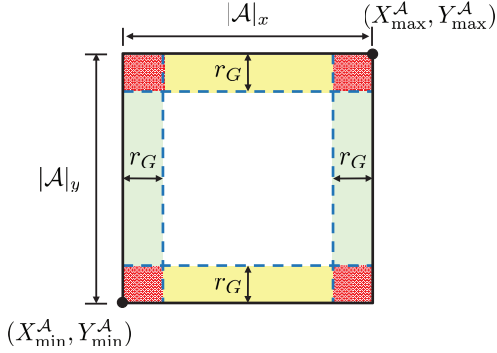}
	\caption{If the GBS is in the (center) white region, the serving area will be partitioned into 4 sub-regions using $x=L_G.x$ and $y=L_G.y$; if the GBS is in the (top or down) yellow regions, the serving area will be partitioned into 2 sub-regions using $x=L_G.x$; if the GBS is in the (left or right) green regions, the serving area will be partitioned into 2 sub-regions using $y=L_G.y$; otherwise, do nothing.}
	\label{fig:partition_region}
\end{figure}

\section{Complexity Analysis}
\label{sec:analysis}
In this part, we discuss the complexity of average case for the proposed algorithms. 
In the first stage of DDP, the system just reads the input information and determine the initial value of $k$ by~\eqref{eq:initial_value_of_k}. The above operations cost $N+\sigma\approx\mathcal{O}(N)$ time where $N$ is the number of UEs and $\sigma$ is the constant time for initializing the used variables and $k$. Note that $k$-means based clustering is proven to be NP-hard in~\cite{MAHAJAN201213}, which implies that the clustering problem in the second stage is also NP-hard. In general, $k$-means clustering is an efficient way to find a near optimal (or sub-optimal) solution and the complexity of $k$-means is $\mathcal{O}(Ndkt)$, where the dimensionality is $d=2$ in our work and $t$ is the number of iterations for $k$-means to converge. In average case, $d$ is constant and $t$ is very small, so the complexity of $k$-means can approximate $\mathcal{O}(Ndkt)\approx\mathcal{O}(kN)$. 
In the second stage of DDP, we adopt the balanced $k$-means clustering~\cite{10.1007/978-3-662-44415-3_4} for fairly associating UEs with DBSs. The assignment step of the balanced $k$-means clustering can be solved by the \emph{Hungarian} algorithm~\cite{doi:10.1137/0105003}, so the time complexity of the second stage is $\mathcal{O}(tN^3)\approx\mathcal{O}(N^3)$.

The last stage of DDP includes two tasks: UE re-association and placement refinement. Suppose that $\bar{n}_\text{re}$ is the average number of unsatisfied UEs which needs to be re-associated and each unsatisfied UE may be re-associated $k-1$ times in the worst case, the UE re-association task costs about $(k-1)\bar{n}_\text{re}\approx\mathcal{O}(k\bar{n}_\text{re})$ time. For the placement refinement, the system can refine the placement of each DBS by solving the minimum enclosing circle problem in linear time, so the time complexity of refining the placement of $k$ DBS will be $\sum_{j=1}^{k}N_j=\mathcal{O}(kN)$.
Overall, executing the above three-stage operations one time costs $\mathcal{O}(N+N^3+k\bar{n}_\text{re}+kN)\approx\mathcal{O}(N^3)$ time since $N\gg \bar{n}_\text{re}$ and $N\gg k$ in general. If the system directly searches for the placement result by increasing $k$ from $1$, the total time complexity will be $\mathcal{O}(kN^3)$. In the proposed DDP, the system searches the placement result starting within the initial lower bound value of $k$, $k_\text{min}$, by~\eqref{eq:initial_value_of_k} and the final actual value of $k$, $k_\text{actual}$. Hence, the time complexity of DDP is $\mathcal{O}(\Delta_k N^3)$ where $\Delta_k=k_\text{actual}-k_\text{min}+1$. 

\begin{algorithm2e}[!t]
	\footnotesize
	\SetAlgoLined
	\KwIn{target area $\mathcal{A}$, dataset of UE locations $L_E$, location of the GBS $L_G$, the maximum number of DBS $K$, SINR threshold $\gamma_{\text{th}}$, the channel bandwidth provided by each DBS and the GBS $B$, the transmit power of the GBS $P_G$, the transmit power of a DBS $P_{\text{drone}}$, and the minimum data rate requirement $c_{\min}$}
	\KwOut{association information $L_{\text{association}}$, DBS locations $L_{\text{drone}}^{\text{cand}}$, and DBS altitudes $L_{\text{altitude}}$}
	\tcc{Use the same variables in Algorithm~\ref{alg:ddp}}
	let $L_p$ to store the partition line\;	
	parse the input information of target area $\mathcal{A}$ and get the distance relations between the GBS and the boundaries\;
	\uIf{$(\text{dist}(L_G,X_{\min}^{\mathcal{A}})>r_G\wedge \text{dist}(L_G,X_{\max}^{\mathcal{A}})>r_G)\wedge(\text{dist}(L_G,Y_{\min}^{\mathcal{A}})>r_G\wedge \text{dist}(L_G,Y_{\max}^{\mathcal{A}})>r_G)$}{\label{alg:eddp:prepartition:start}
		use the lines $x=L_G.x$ and $y=L_G.y$ to partition $\mathcal{A}$ into four sub-regions\;
	}
	\uElseIf{$(\text{dist}(L_G,X_{\min}^{\mathcal{A}})>r_G\wedge \text{dist}(L_G,X_{\max}^{\mathcal{A}})>r_G)\wedge(\text{dist}(L_G,Y_{\min}^{\mathcal{A}})\leq r_G\vee \text{dist}(L_G,Y_{\max}^{\mathcal{A}})\leq r_G)$}{
		use the line $x=L_G.x$ to partition $\mathcal{A}$ into two sub-regions\;
	}
	\uElseIf{$(\text{dist}(L_G,X_{\min}^{\mathcal{A}})\leq r_G\vee \text{dist}(L_G,X_{\max}^{\mathcal{A}})\leq r_G)\wedge(\text{dist}(L_G,Y_{\min}^{\mathcal{A}})> r_G\wedge \text{dist}(L_G,Y_{\max}^{\mathcal{A}})> r_G)$}{
		use the line $y=L_G.y$ to partition $\mathcal{A}$ into two sub-regions\label{alg:eddp:prepartition:end}\;
	}
	\Else{
		\tcc{Normal DDP witout the partition}
		run Algorithm~\ref{alg:ddp} with $\mathcal{A}$ directly and then jump to line~\ref{alg:eddp:return}\label{alg:eddp:call_ddp:normal}\;
	}
	launch two/four threads to run Algorithm~\ref{alg:ddp} to find the placements for the obtained two/four sub-regions in parallel\;
	merge all of the obtained output information in $L_{\text{association}}$,$L_{\text{drone}}^{\text{cand}}$, $L_{\text{SINR}}$, $L_{\text{radius}}$, and $L_{\text{altitude}}$\label{alg:eddp:update_sinr:start}\;
	find the DBSs whose coverage circles are intersects the partition lines and save the result in a temporary list $L_{\text{overlapped}}^{\text{possible}}$\;
	find the UEs covered by multiple DBSs which are in $L_{\text{overlapped}}^{\text{possible}}$ and then update these UEs' SINR in $L_{\text{SINR}}$\label{alg:eddp:update_sinr:end}\;
	\Return $L_{\text{association}}$,$L_{\text{drone}}^{\text{cand}}$, $L_{\text{SINR}}$, $L_{\text{radius}}$, and $L_{\text{altitude}}$
	\label{alg:eddp:return}\;
	\caption{The procedure of eDDP}
	\label{alg:eddp}
\end{algorithm2e}

In eDDP, the complexity of determining the rule condition is $\mathcal{O}(1)$. The pre-partition operations from lines~\ref{alg:eddp:prepartition:start} to~\ref{alg:eddp:prepartition:end} of Algorithm~\ref{alg:eddp} cost about $\mathcal{O}(N)$ since the system needs to check the locations of all the UEs. The final update of UEs' SINR is performed by the operations from lines~\ref{alg:eddp:update_sinr:start} to~\ref{alg:eddp:update_sinr:end} and the complexity is $\mathcal{O}(N)$. 
Suppose the number of partition is $n_\text{partition}\in\{1,2,4\}$ in our design, the worst case of eDDP occurs when $n_\text{partition}=1$ and the complexity of eDDP which is executed at line~\ref{alg:eddp:call_ddp:normal} will be as same as DDP. 
Although both DDP and eDDP have the same time complexity in worst case, eDDP has better performance in average time complexity compared to DDP. The reduction of average time complexity be concluded as following theorem.
\begin{thm}[Average Complexity Reduction]
	Suppose that the notations are defined as above and the location of the GBS is given randomly in the target rectangle area $\mathcal{A}$, compared to DDP, eDDP effectively reduces the average time complexity by
	\begin{align}
	\left(1-\frac{(|\mathcal{A}|_x+14r_G)\times(|\mathcal{A}|_y+14r_G)}{64\times|\mathcal{A}|_x\times|\mathcal{A}|_y}\right)\times 100\%,
	\end{align}
	where $|\mathcal{A}|_x$ and $|\mathcal{A}|_y$ are the length and the width of $\mathcal{A}$.
\end{thm}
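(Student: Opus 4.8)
The plan is to compute the expected running time of eDDP when the GBS is uniform on $\mathcal{A}$, divide by the running time of DDP, and subtract the quotient from one. Section~\ref{sec:analysis} already establishes that a single DDP call on $N$ UEs is dominated by the $\mathcal{O}(N^3)$ cost of the Hungarian assignment inside balanced $k$-means, and Algorithm~\ref{alg:eddp} launches one thread per sub-region in parallel. Hence in each branch of the pre-partition rule the wall-clock cost of eDDP is governed by the largest sub-region; treating the $n_{\text{partition}}$ sub-regions as balanced (each holding about $N/n_{\text{partition}}$ UEs, which is the average-case idealization), the cost relative to DDP's $\mathcal{O}(N^3)$ scales as $1/n_{\text{partition}}^3$. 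This gives the decisive per-branch factors $1/64$ for a four-way split, $1/8$ for a two-way split, and $1$ when no partition is triggered. The lower-order $\mathcal{O}(N)$ terms for pre-partitioning and the global SINR merge, as well as the common $\Delta_k$ factor, cancel in the ratio.

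First I would translate the four branches into disjoint regions of $\mathcal{A}$ measured by distance to the boundaries. Writing $a=|\mathcal{A}|_x$, $b=|\mathcal{A}|_y$, and $r=r_G$, the four-sub-region branch occupies the central rectangle of area $(a-2r)(b-2r)$; the two two-sub-region branches occupy the edge strips of areas $2r(a-2r)$ and $2r(b-2r)$; and the do-nothing branch occupies the four corners of total area $4r^2$. These four areas sum to $ab$, confirming the partition is exhaustive, so since the GBS is uniform the branch probabilities are $P_1=(a-2r)(b-2r)/ab$, $P_2=2r(a-2r)/ab$, $P_3=2r(b-2r)/ab$, and $P_4=4r^2/ab$.

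Next I would assemble the expected eDDP cost relative to DDP as the weighted combination
\begin{equation*}
\frac{\mathbb{E}[T_{\text{eDDP}}]}{T_{\text{DDP}}}=\frac{P_1}{64}+\frac{P_2}{8}+\frac{P_3}{8}+P_4,
\end{equation*}
and simplify over the common denominator $64\,ab$. The resulting numerator is $(a-2r)(b-2r)+16r(a-2r)+16r(b-2r)+256r^2$; expanding, the cross terms collect to $14ra+14rb$ and the pure quadratic terms collect to $196r^2=(14r)^2$, so the numerator factors cleanly as $(a+14r)(b+14r)$. The ratio therefore equals $(a+14r)(b+14r)/(64ab)$, and subtracting from one yields exactly the claimed reduction $\bigl(1-(|\mathcal{A}|_x+14r_G)(|\mathcal{A}|_y+14r_G)/(64|\mathcal{A}|_x|\mathcal{A}|_y)\bigr)\times100\%$.

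The main obstacle is not the algebra but justifying the two modeling choices that make it clean: that each partition genuinely yields balanced sub-regions of size $N/n_{\text{partition}}$, and that parallel wall-clock time rather than total work is the governing cost, since that is precisely what produces the $1/n_{\text{partition}}^3$ scaling and hence the $64$ and $14r_G$ in the formula. Both are average-case idealizations consistent with the threaded design of Algorithm~\ref{alg:eddp}; once they are granted, recognizing the factorization $ab+14ra+14rb+196r^2=(a+14r)(b+14r)$ is the only nontrivial step.
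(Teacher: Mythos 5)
Your proof is correct and takes essentially the same route as the paper: the same per-branch cost factors ($1$, $1/8$, $1/64$), the same geometric branch probabilities for a uniformly placed GBS, and the same weighted average that collapses to $\frac{(|\mathcal{A}|_x+14r_G)(|\mathcal{A}|_y+14r_G)}{64\,|\mathcal{A}|_x|\mathcal{A}|_y}$. The only differences are cosmetic---you split the two-partition case into two separate strips where the paper uses a single combined probability, and you state explicitly the parallel wall-clock and balanced-split idealizations that the paper leaves implicit in writing the eDDP cost as $\mathcal{O}(\Delta_k (N/n_{\text{partition}})^3)$.
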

\begin{IEEEproof}
	In general, the time complexity of eDDP can be expressed as $\mathcal{O}(\Delta_k (N/n_\text{partition})^3)$. On one hand, the best case of eDDP occurs when $n_\text{partition}=4$ and each partition has almost the same number of UEs, and thus the best time complexity will be $\frac{1}{64}\mathcal{O}(\Delta_k N^3)$. On the other hand, in the case of $n_\text{partition}=2$, eDDP will cost $\frac{1}{8}\mathcal{O}(\Delta_k N^3)$ time. However, different partition cases have different occurrence probabilities according to the proposed pre-partition rule of eDDP. The occurrence probability of each partition case can be obtained by
	\begin{align*}
	\text{Pr}\left(n_\text{partition}=1\right)&=\dfrac{4r_G^2}{|\mathcal{A}|_x\times|\mathcal{A}|_y},\\
	\text{Pr}\left(n_\text{partition}=2\right)&=\dfrac{2r_G\times\left(|\mathcal{A}|_x+ |\mathcal{A}|_y-4r_G\right)}{|\mathcal{A}|_x\times|\mathcal{A}|_y},\\
	\text{Pr}\left(n_\text{partition}=4\right)&=\dfrac{(|\mathcal{A}|_x-2r_G)\times \left(|\mathcal{A}|_y-2r_G\right)}{|\mathcal{A}|_x\times|\mathcal{A}|_y},\text{ and}\\
	\text{Pr}\left(n_\text{partition}=1\right)&+\text{Pr}\left(n_\text{partition}=2\right)+\text{Pr}\left(n_\text{partition}=4\right)=1.
	\end{align*}
	
	\noindent
	Then, the average time complexity of eDDP will be
	\begin{align*}
	T^\text{eDDP}_\text{average}=&\text{Pr}\left(n_\text{partition}=1\right)\times\mathcal{O}(\Delta_k N^3)\\
	&+	\text{Pr}\left(n_\text{partition}=2\right)\times\dfrac{1}{8}\mathcal{O}(\Delta_k N^3)\\
	&+\text{Pr}\left(n_\text{partition}=4\right)\times\dfrac{1}{64}\mathcal{O}(\Delta_k N^3)\\
	=&\dfrac{(|\mathcal{A}|_x+14r_G)\times(|\mathcal{A}|_y+14r_G)}{64\times|\mathcal{A}|_x\times|\mathcal{A}|_y}\times\mathcal{O}(\Delta_k N^3).
	\end{align*}
	
	\noindent
	Since the average time complexity of DDP is $T^\text{DDP}_\text{average}=\mathcal{O}(\Delta_k N^3)$ and it is easy to see that $\frac{(|\mathcal{A}|_x+14r_G)\times(|\mathcal{A}|_y+14r_G)}{64\times|\mathcal{A}|_x\times|\mathcal{A}|_y}< 1$, the reduction of average time complexity by eDDP can be calculated as
	\begin{align*}
	\dfrac{T^\text{DDP}_\text{average}-T^\text{eDDP}_\text{average}}{T^\text{DDP}_\text{average}}&\times 100\%= \\
	&\hspace{-5em}\left(1-\frac{(|\mathcal{A}|_x+14r_G)\times(|\mathcal{A}|_y+14r_G)}{64\times|\mathcal{A}|_x\times|\mathcal{A}|_y}\right)\times 100\%.
	\end{align*}
\end{IEEEproof}


\vspace{-10pt}
\section{Simulations}
\label{sec:simulation}
The simulation including all compared approaches are implemented in MATLAB R2017b. The simulation program is executed on a Windows 10 server with an Intel(R) Core(TM) i7-7700 CPU @ 3.60GHz and 8GB $\times$ 2 memory. Since the proposed approaches, DDP and eDDP, are modified from $k$-means based clustering, thus we compare our algorithms with balanced $k$-means~\cite{10.1007/978-3-662-44415-3_4} approach in the simulation. All comparison approaches are briefly described below.
\begin{enumerate}
	\item Balanced $k$-means provides a balanced clustering recommendation based on the horizontal distance between UE and DBS.  
	\item The proposed DDP first performs balanced $k$-means clustering and then executes the proposed re-association stage for DBS placement refinement considering the SINR of each UE.
	\item The proposed eDDP partitions the target serving area according to the proposed rules before executing DDP, and thus reduce the occurrence of the coverage overlapping problem.
\end{enumerate}

We use different artificial datasets as the input spatial information and these datasets contain different number of UE locations, $N=\{400,500,\dots,800\}$, which are arbitrarily distributed over a $600\times 600$ m$^2$ area. We consider the urban scenario and its environmental parameters are $(a,b,\eta_{LoS},\eta_{NLoS}) = (9.61,0.16,1,20)$ given by~\cite{7744808}. We assume the maximum number of available DBSs is $K=100$, the maximum allowable path-loss of GBS-to-DBS and DBS-to-UE links is $L_{\text{allowable}}=119$ (dB), and the path-loss exponent of downlink from GBS to UE is $\alpha = 6.5$. For DDP and eDDP, the default target sanctification rate of UEs is set to $0.4$, which means that the system finds a placement result to guarantee more than $40\%$ of the UEs have satisfied SINR and data rates.
The detailed important simulation parameters and predefined constraints are presented in Table~\ref{simulation_parameters}. 

\begin{table}[!t]
	\footnotesize
	\renewcommand{\arraystretch}{1.1}
	\caption{Simulation Parameters}
	\label{simulation_parameters}
	\centering
	\begin{tabular}{p{4.7cm}ll}
		\hline
		\textbf{Parameter} & \textbf{Symbol} & \textbf{Value}\\
		\hline 
		Cellular transmit power of the GBS & $P_G$ & $40$ dBm \\
		mmWave transmit power of the GBS & $P_G^{\text{bk}}$ & $30$ dBm\\
		Transmit power of each DBS & $P_j$ & $20$ dBm \\
		Maximum allowable path-loss of GBS-to-DBS and DBS-to-UE links & $L_{\text{allowable}}$ & $119$ \\
		Path-loss exponent of the GBS-to-UE downlink & $\alpha$ & $6.5$ \\
		Minimum data rate requirement of each UE & $c_{\min}$ & $1$ Mbps \\
		The number of UEs & $N$ & $400$ to $800$ \\
		Maximum number of DBSs & $K$ & $100$ \\
		Target satisfaction rate of UEs (Default) & $\tau$ & $40\%$ \\
		Cellular thermal noise power spectral density & $N_{0}$ & $-174$ dBm/Hz \\
		Cellular carrier frequency & $f_c$ & $2$ GHz \\
		mmWave carrier frequency & $f_c^\text{bk}$ & $28$ GHz \\	
		Cellular carrier bandwidth & $B$ & $20$ MHz \\
		mmWave carrier bandwidth & $B^\text{bk}$ & $20\times100$ MHz \\	
		Cellular SINR threshold & $\gamma_{\text{th}}$ & $5$ dB \\ 
		mmWave SINR threshold & $\gamma_{\text{th}}^\text{bk}$ & $-10$ dB \\	
		Minimum altitude of each DBS & $h_{\min}$ & $20$ m \\
		Maximum altitude of each DBS & $h_{\max}$ & $400$ m \\	
		\hline
	\end{tabular}
\end{table}
\begin{figure*}[!ht]
	\centering
	\subfigure[Balanced $k$-means]{
		\label{fig:placement_result:bkm} 
		\includegraphics[width=0.33 \textwidth]{./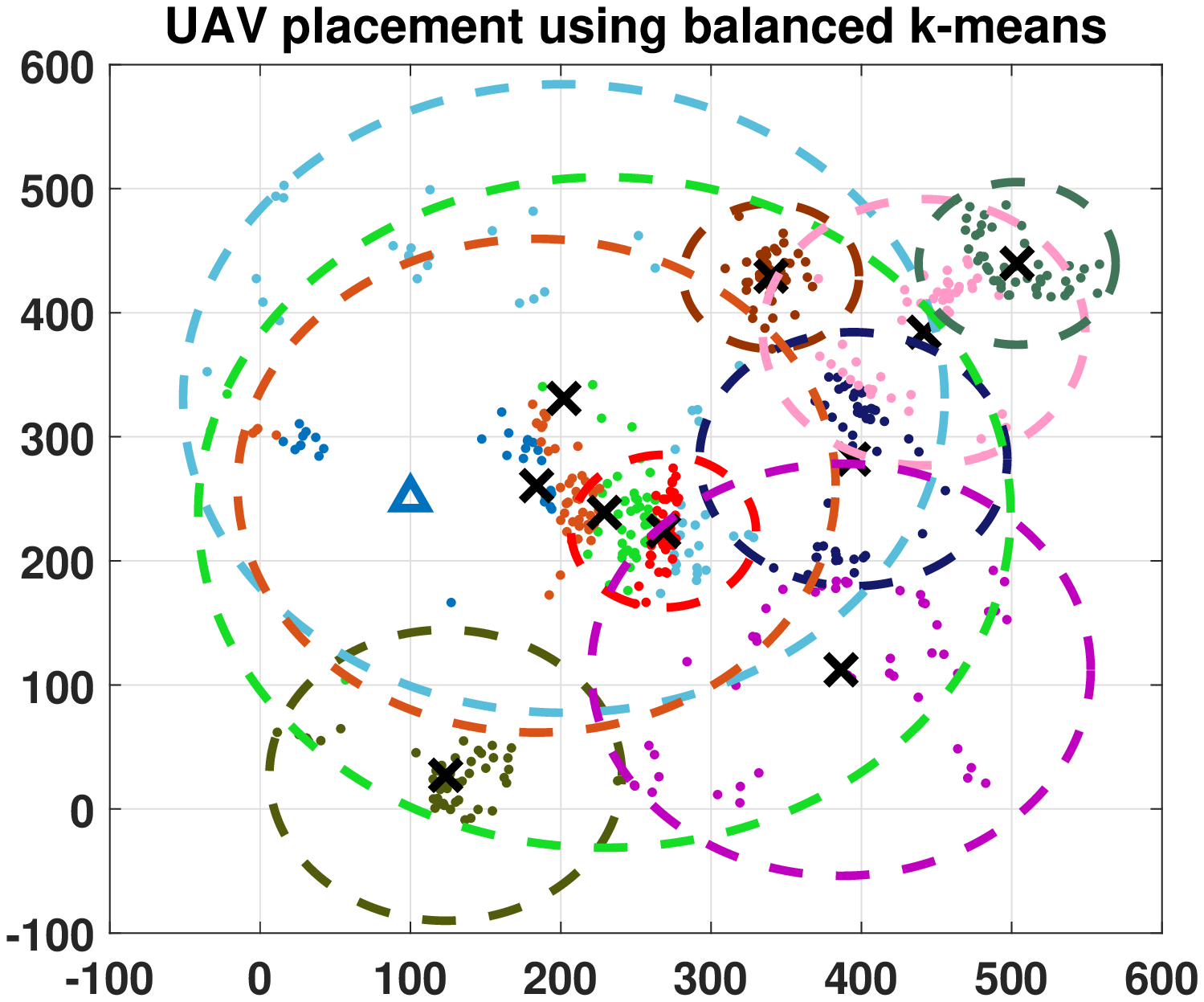}}%
	\subfigure[DDP]{
		\label{fig:placement_result:ddp} 
		\includegraphics[width=0.33 \textwidth]{./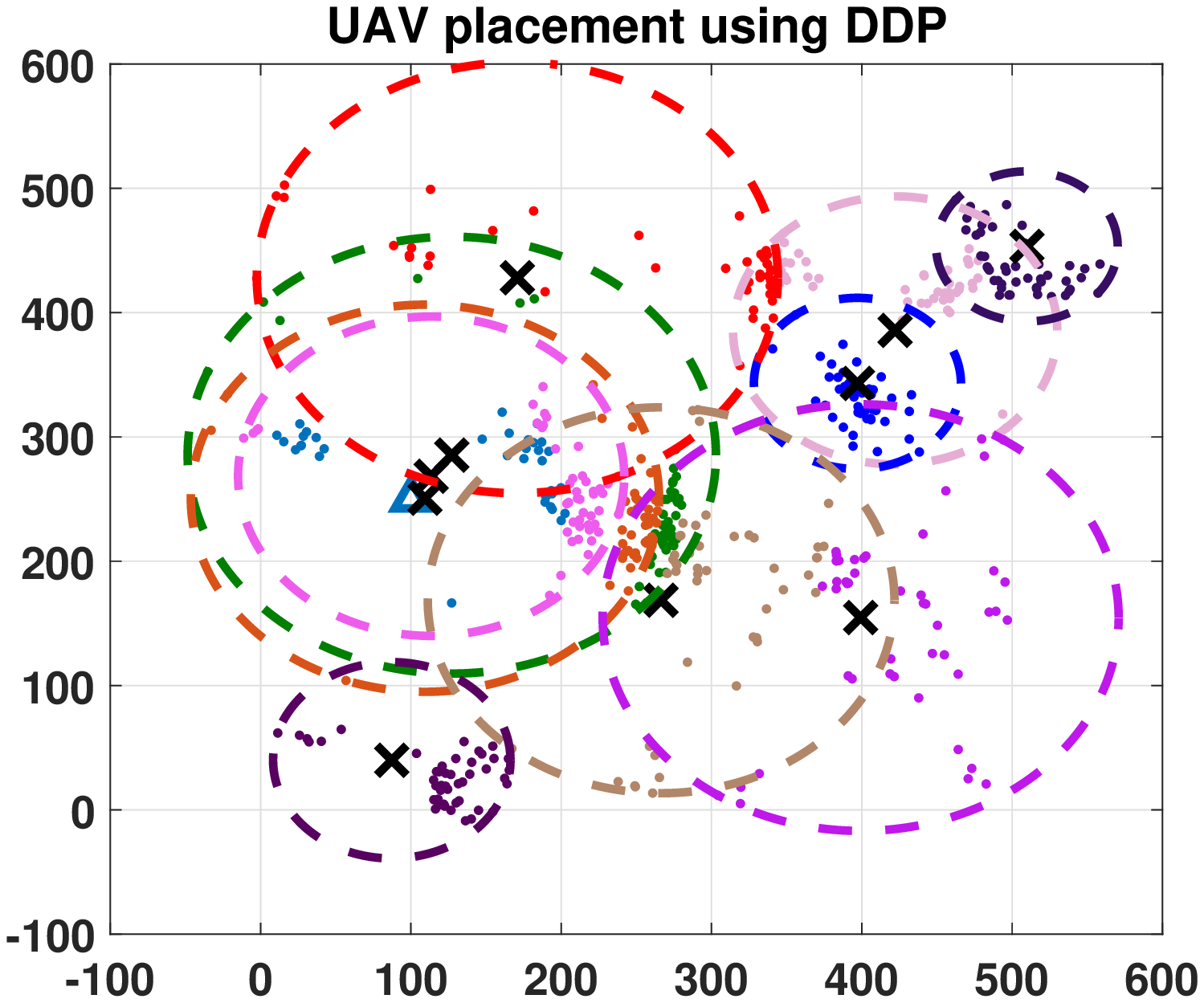}}%
	\subfigure[eDDP]{
		\label{fig:placement_result:eddp} 
		\includegraphics[width=0.33 \textwidth]{./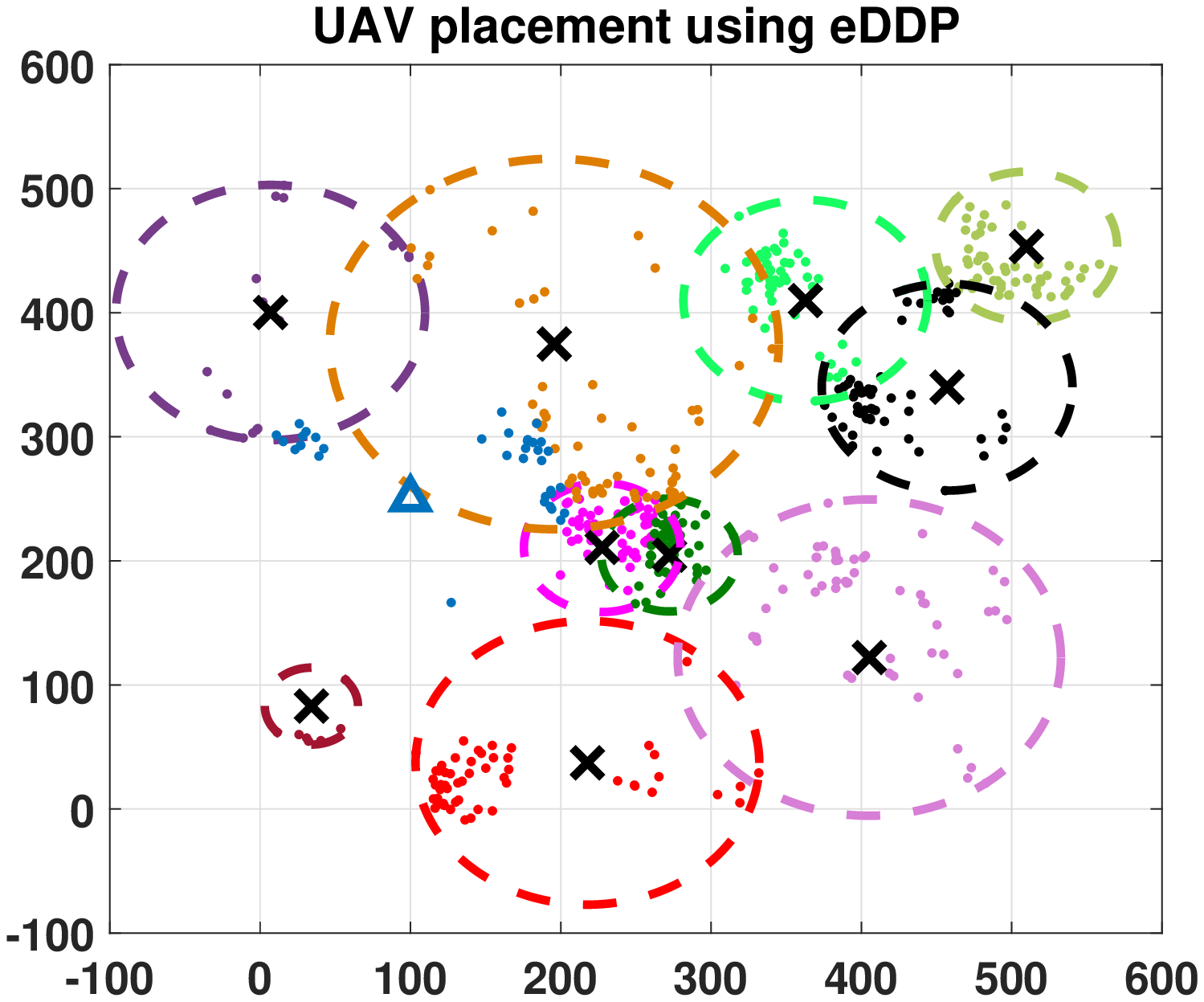}}%
	\caption{The placement results of 
		\subref{fig:placement_result:bkm} balanced $k$-means, \subref{fig:placement_result:ddp} DDP, and \subref{fig:placement_result:eddp} eDDP under the same UE distribution in a urban scenario when $N=500$, $k=10$, and the GBS locates as $(100,250)$. 
	}
	\label{fig:placement_result} 
\end{figure*}

\subsection{Placement Results}
In the first simulation, we choose one of input location data set ($N=500$) to show the placement results of the compared balanced $k$-means approach, DDP, and eDDP in Fig.~\ref{fig:placement_result}. In this scenario, the GBS locates at $(100,250)$ and some very high dense flash crowd events occur around the following coordinates: $(200,250)$, $(150,20)$, $(340,430)$, $(400,340)$, and $(480,430)$. In Fig.~\ref{fig:placement_result}, the white triangle is the GBS, black crosses are DBSs, small dots are UEs, and each dashed-circle is the coverage of the corresponding DBS.

\begin{figure*}[!t]
	\centering
	\subfigure[Empirical Probability Distribution Function]{
		\label{fig:satisfaction_rate_of_methods:pdf} 
		\includegraphics[width=0.35 \textwidth]{./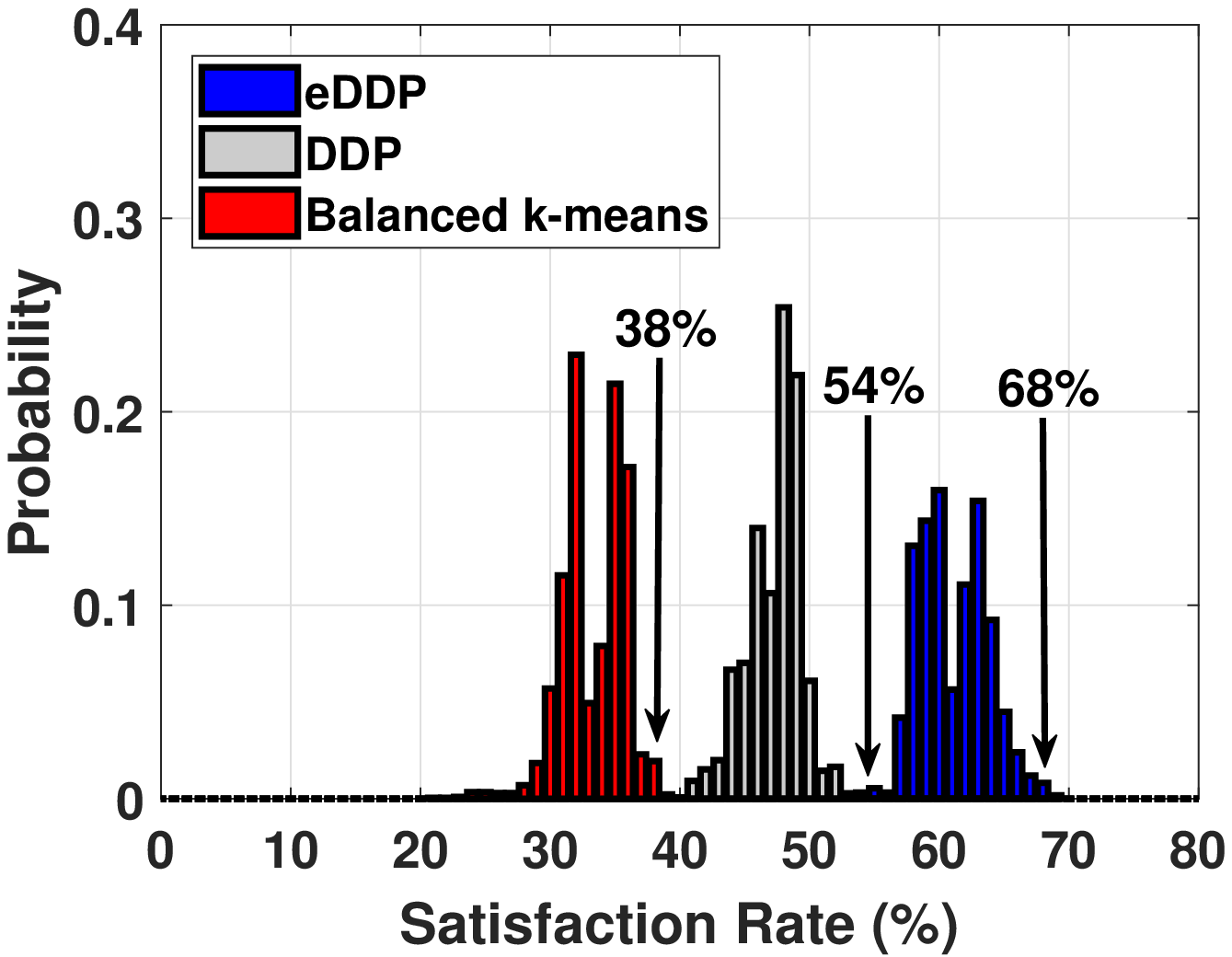}}\hskip 20pt%
	\subfigure[Empirical Cumulative Distribution Function]{
		\label{fig:satisfaction_rate_of_methods:cdf} 
		\includegraphics[width=0.35 \textwidth]{./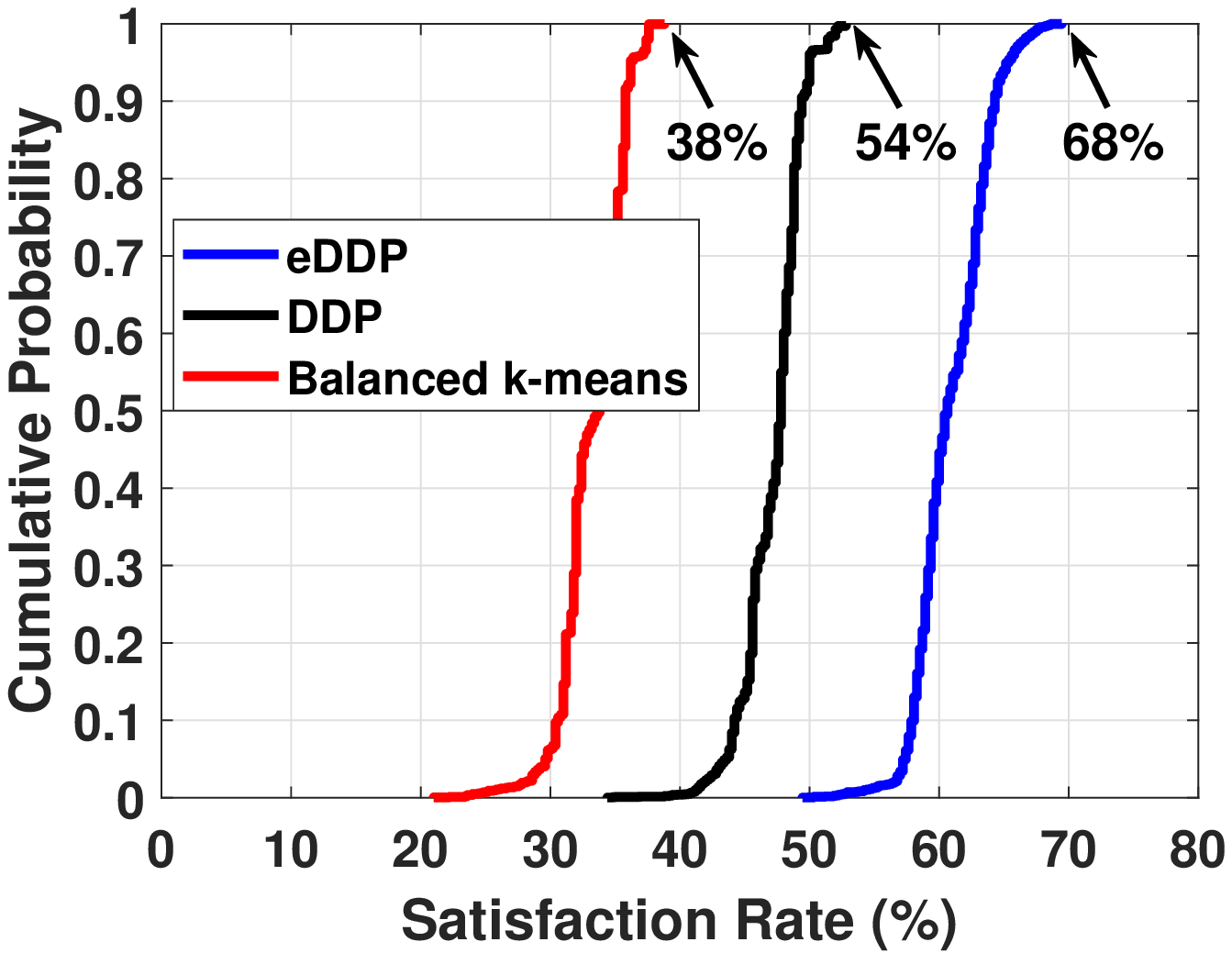}}%
	\caption{The performance results of different approaches in terms of~\subref{fig:satisfaction_rate_of_methods:pdf} Empirical PDF and~\subref{fig:satisfaction_rate_of_methods:cdf} Empirical CDF on the satisfaction rate of UEs when $N=500$, $k=10$, and the GBS locates at $(100,250)$.
	}
	\label{fig:satisfaction_rate_of_methods} 
\end{figure*}

Unlike the conventional $k$-means based approach which cannot determine the value of $k$ by the algorithms themselves, the proposed DDP and eDDP can automatically determine the initial lower bound value of $k$ by~\eqref{eq:initial_value_of_k} with the input spatial information, environmental parameters, and communications constraints. To observe the difference between the proposed solutions and balanced $k$-means, we directly use the same number ($k=10$) of DBSs as the one obtained by eDDP to get the placement results of balanced $k$-means in Fig.~\ref{fig:placement_result:bkm}.
Fig.~\ref{fig:placement_result:bkm} shows that the balanced $k$-means place $k=10$ DBSs to serve $N=500$ UEs in the target area and have very large overlapping coverage areas. 
Fig.~\ref{fig:placement_result:ddp} shows that DDP can significantly reduce the overlapping coverage area in comparison with balanced $k$-means. Such a result is benefited by the proposed placement refinement step. However, sometimes the coverage overlapping problem still occurs in both balanced $k$-means and DDP. According to the result in Fig.~\ref{fig:placement_result:eddp}, compared with DPP and balanced $k$-means, eDDP can alleviate the coverage overlapping problem and get the most effective placement with the smallest overlapping coverage area.

\subsection{The Target Satisfaction Rate of UEs}
In this subsection, to verify the effectiveness of each method, we performed each compared approach $10^4$ times using random initial positions of DBSs and recorded each deployment result and the satisfaction rate achieved in the case of $N=500$ and $k=10$. Through statistics, we can observe the \emph{empirical probability distribution function} (empirical PDF) and \emph{empirical cumulative distribution function} (empirical CDF) of the satisfaction rate of each method. According to the empirical PDF and empirical CDF of all methods in Fig.~\ref{fig:satisfaction_rate_of_methods:pdf} and Fig.~\ref{fig:satisfaction_rate_of_methods:cdf}, eDDP is the most effective solution and it can provide the best DBS placement result with $68\%$ satisfaction rate. DDP and balanced $k$-means can achieve $54\%$ and $38\%$ satisfaction rates, respectively. 

\begin{figure*}[!t]
	\centering
	\subfigure[]{
		\label{fig:ratio_of_satisfied_ues:time} 
		\includegraphics[width=0.33 \textwidth]{./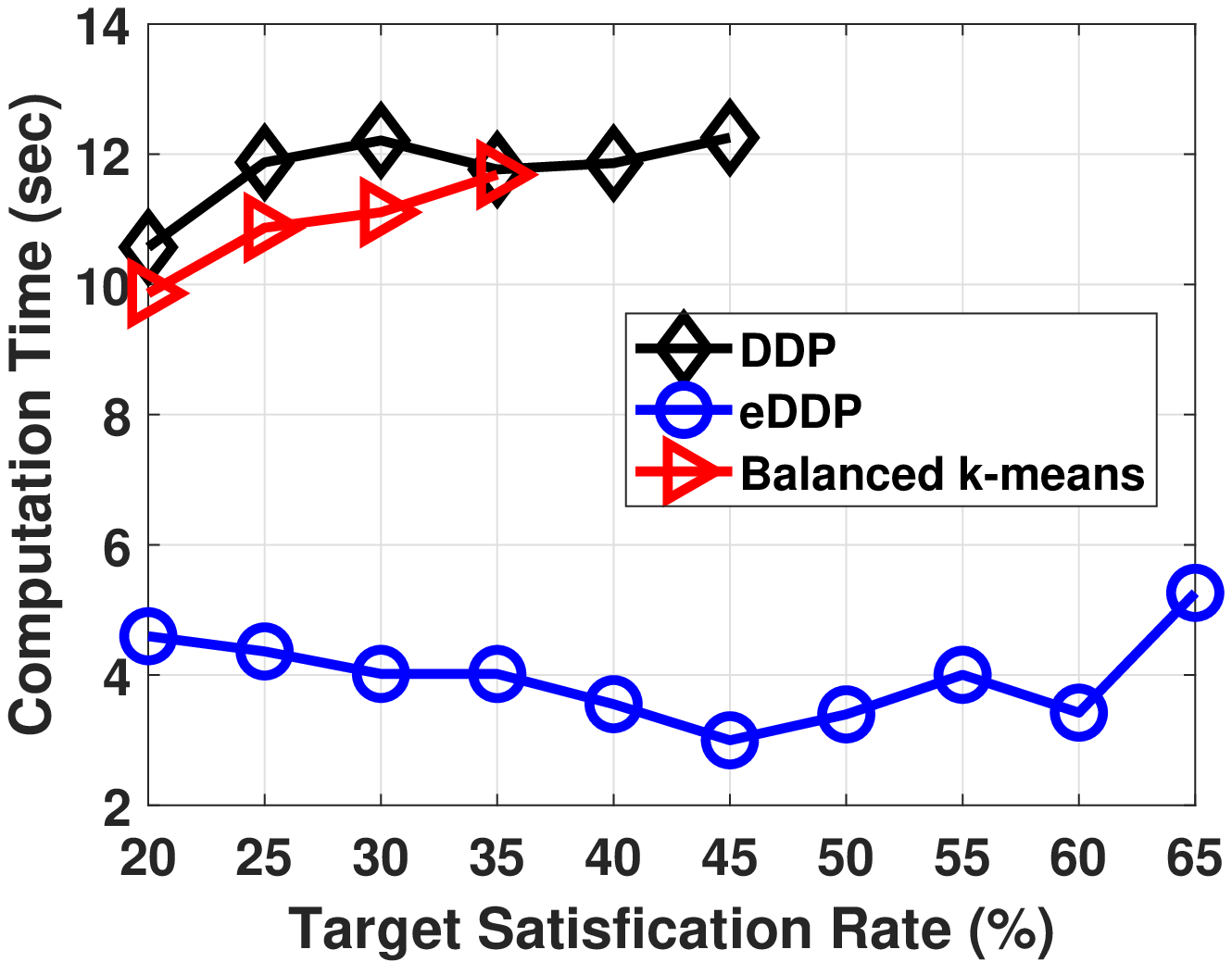}}%
	\subfigure[]{
		\label{fig:ratio_of_satisfied_ues:number_of_dbss} 
		\includegraphics[width=0.33 \textwidth]{./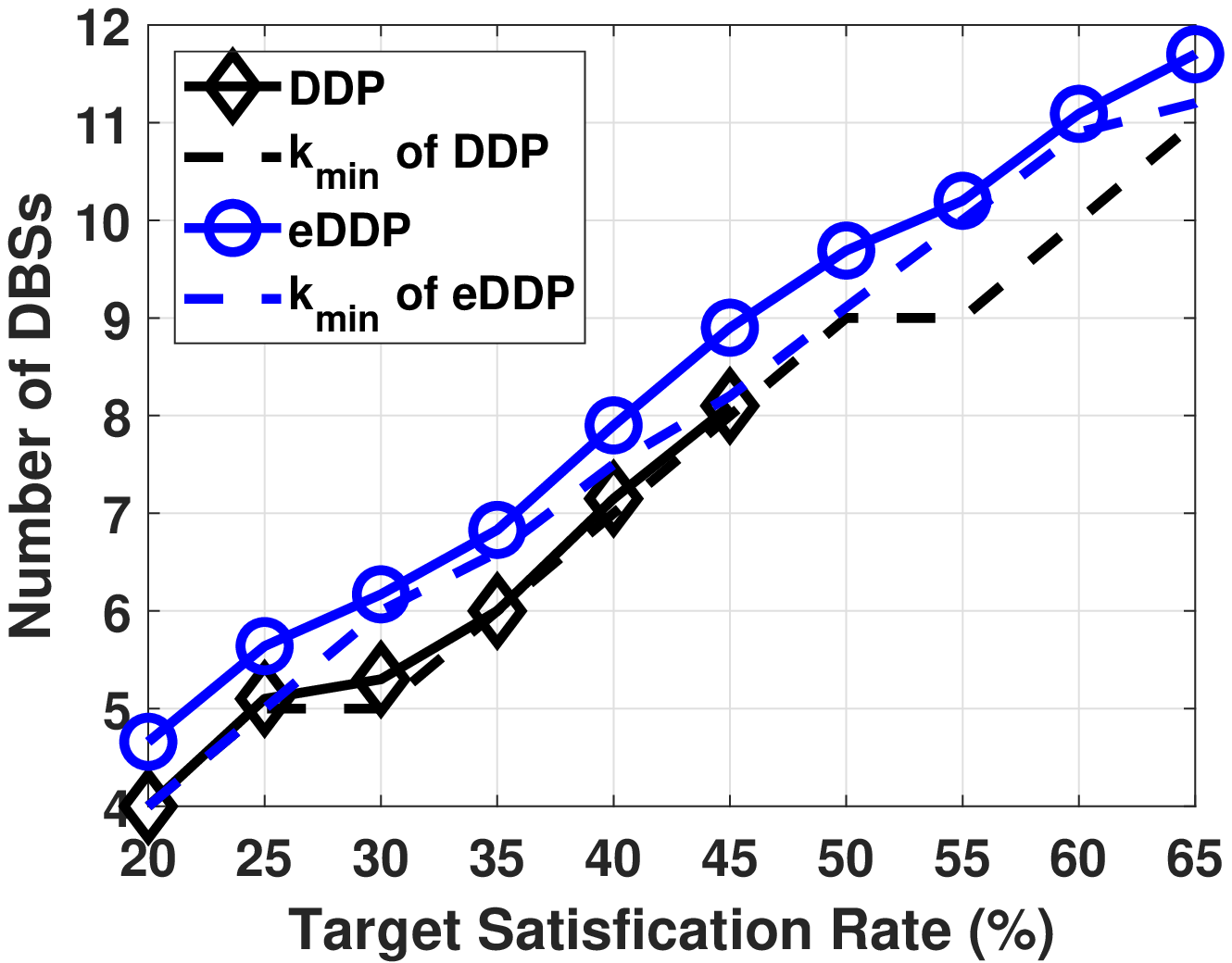}}%
	\subfigure[]{
		\label{fig:ratio_of_satisfied_ues:sumrate} 
		\includegraphics[width=0.33 \textwidth]{./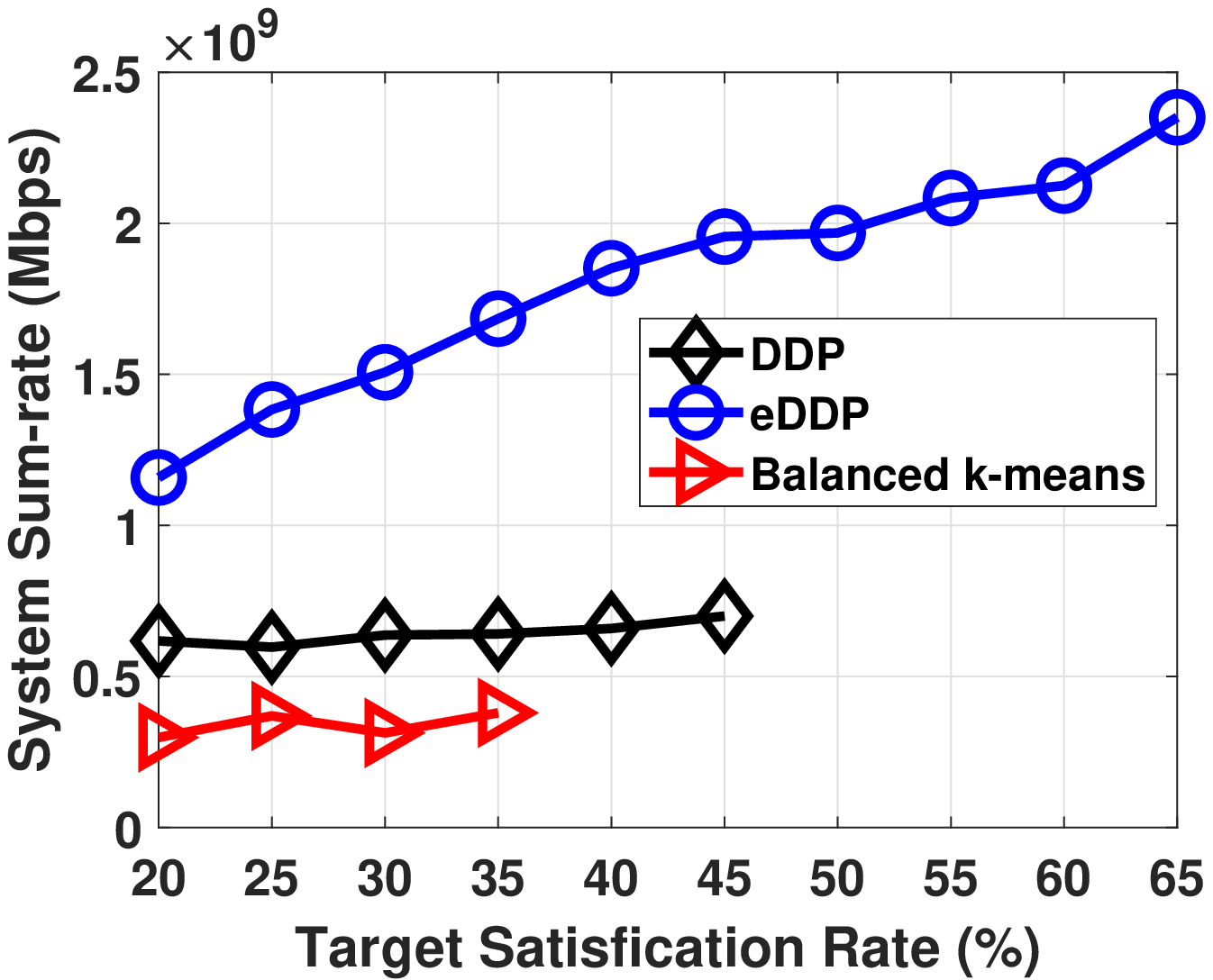}}%
	\caption{The performance results of different approaches in terms of~\subref{fig:ratio_of_satisfied_ues:time} computation time,~\subref{fig:ratio_of_satisfied_ues:number_of_dbss} number of DBSs, and~\subref{fig:ratio_of_satisfied_ues:sumrate} system sum rate while varying the target satisfaction rate of UEs when $N=500$.
	}
	\label{fig:ratio_of_satisfied_ues} 
\end{figure*}

\begin{figure*}[!t]
	\centering
	\subfigure[]{
		\label{fig:number_of_ues:time} 
		\includegraphics[width=0.33 \textwidth]{./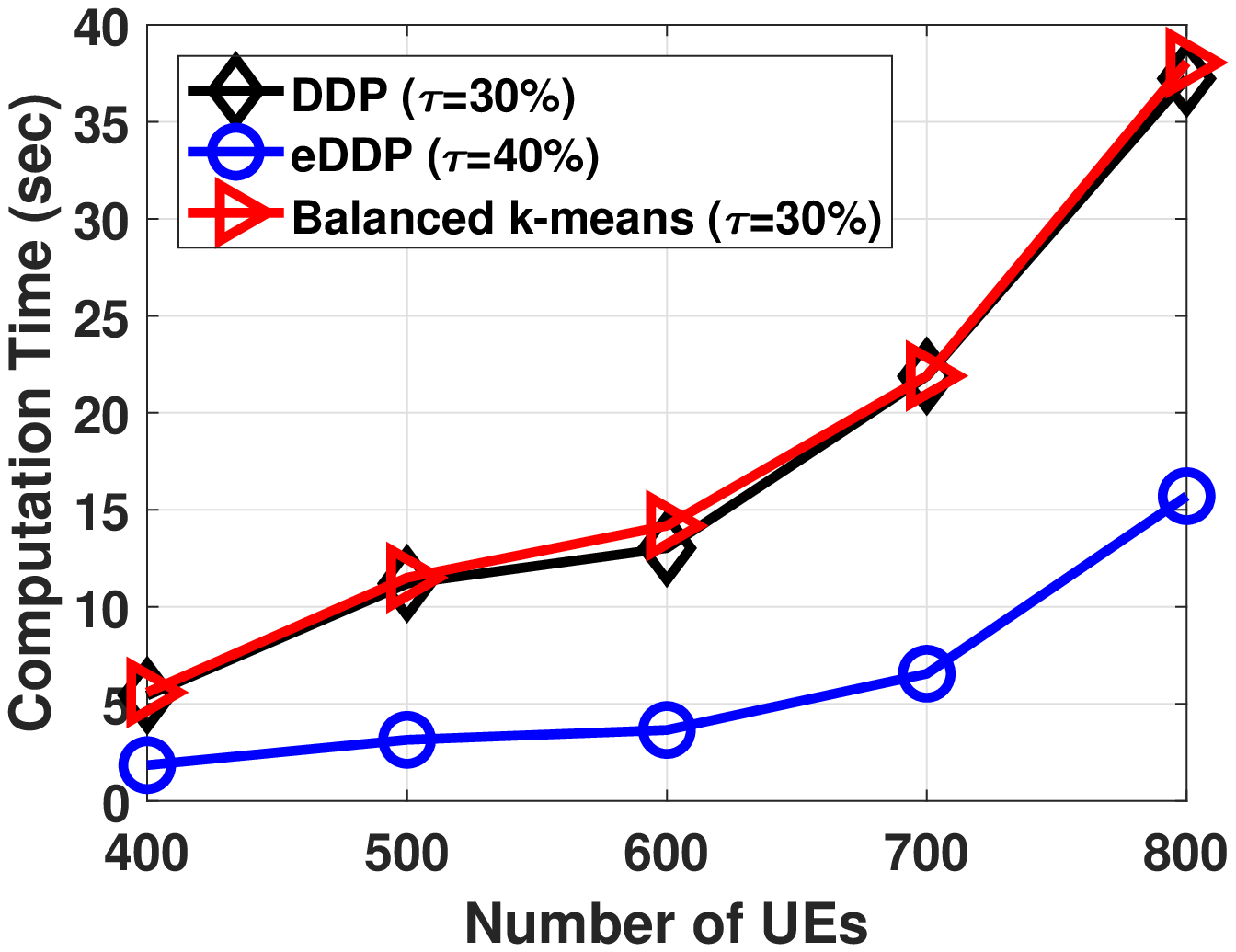}}%
	\subfigure[]{
		\label{fig:number_of_ues:number} 
		\includegraphics[width=0.33 \textwidth]{./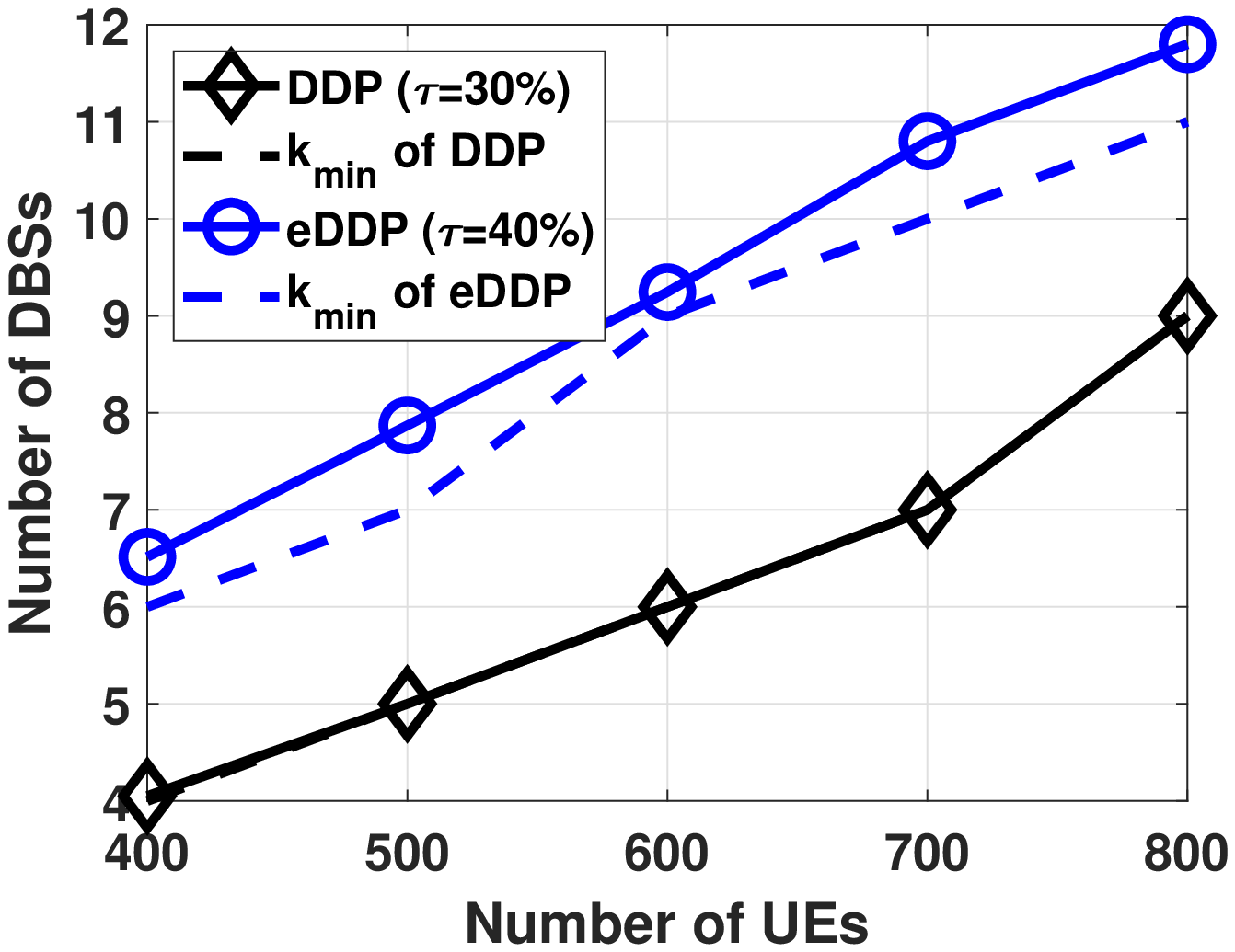}}%
	\subfigure[]{
		\label{fig:number_of_ues:sumrate} 
		\includegraphics[width=0.33 \textwidth]{./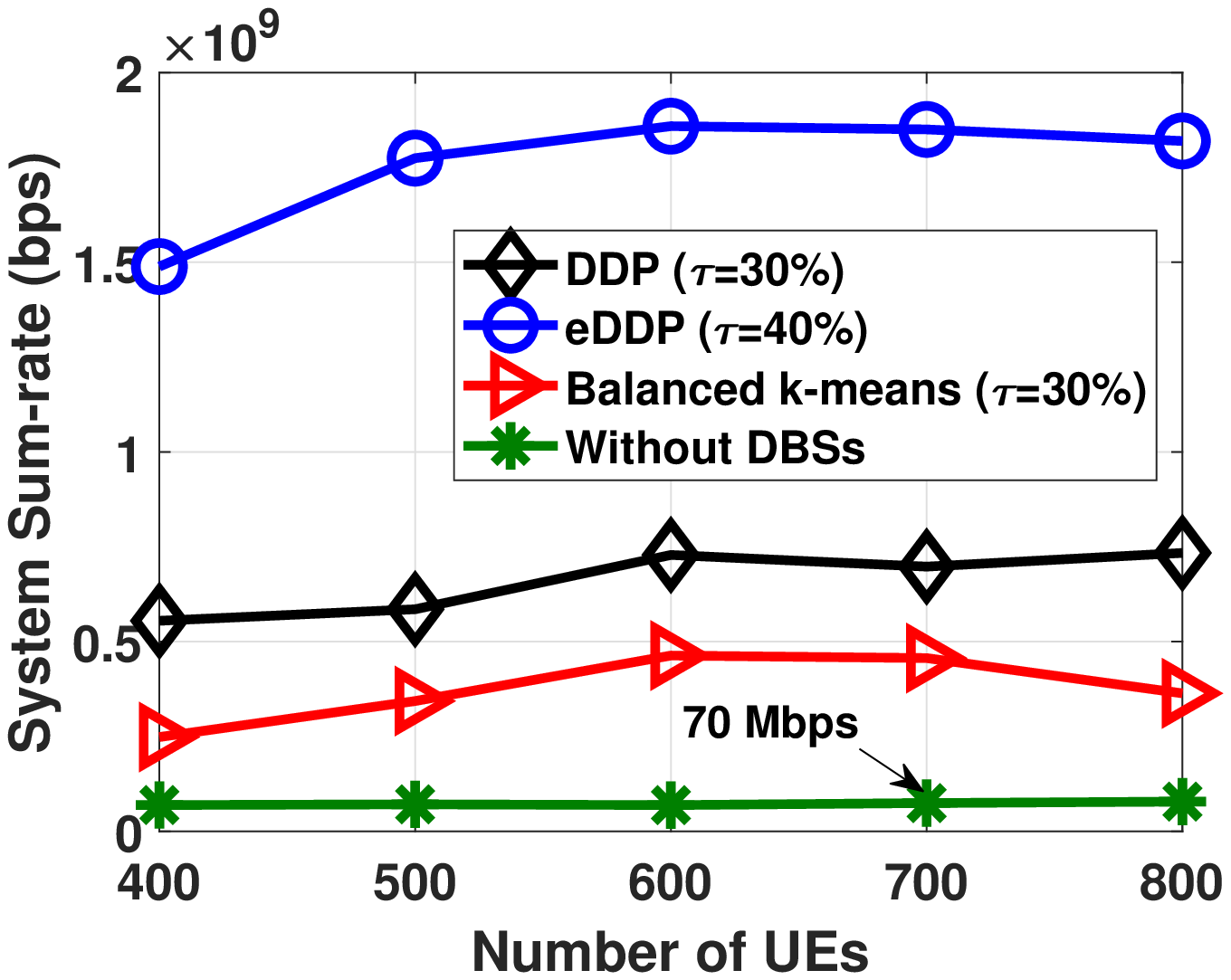}}%
	\caption{The performance results of different approaches in terms of~\subref{fig:number_of_ues:time} computation time,~\subref{fig:number_of_ues:number} number of DBSs, and~\subref{fig:number_of_ues:sumrate} system sum rate while serving different number of UEs with a given constraint $\tau$.
	}
	\label{fig:number_of_ues} 
\end{figure*}

We also have conducted some simulations with a given constraint, target satisfaction rate of UEs, to search a better placement. Fig.~\ref{fig:ratio_of_satisfied_ues} presents the performance results of DDP, eDDP, and balanced $k$-means approaches, in terms of number of DBSs, computation time, and system sum rate while varying the target satisfaction rate of UEs, $\tau$. 
We use the same input information as the ones in Fig.~\ref{fig:placement_result} and Fig.~\ref{fig:satisfaction_rate_of_methods} to evaluate every approach $1,000$ rounds to get the average results of different given target satisfaction rates. Comparing to DDP, eDDP will deploy a few more DBSs due to the pre-partition. According to the result in Fig.~\ref{fig:ratio_of_satisfied_ues}, eDDP is not sensitive to different target satisfaction rates in computation time. The reason is that our proposed Proposition~\ref{proposition1} can give a dynamic and tight lower bound of $k$ and it is proportional to the given target satisfaction rate as shown in Fig.~\ref{fig:ratio_of_satisfied_ues:number_of_dbss}. The gap between the exact number of DBS and the lower bound, $\Delta_{k}$, is almost a constant, so the system only needs to check a constant number of placement recommendations. Under ideal circumstances without interference, the computation time is fixed to $\mathcal{O}(N^3)$ and it is already analyzed in Section~\ref{sec:analysis}. Fig.~\ref{fig:ratio_of_satisfied_ues:number_of_dbss} indicates that eDDP effectively reduces more than $50\%$ computation time of DDP with the help of pre-partition and parallel processing.

When $\tau$ becomes higher, the lower bound of $k$ will increase. For DDP and the balanced $k$-means, as shown in Fig.~\ref{fig:ratio_of_satisfied_ues:number_of_dbss}, their computation time will slightly increase since the larger number of DBSs, $k$, leads more serious co-channel interference. Thus, both DDP and the balanced $k$-means need to iteratively search and check more placement candidates than eDDP. Although the number of deployed DBSs for DDP and the balanced $k$-means increases as $\tau$ becomes higher, their performance in system sum rates almost do not increase. There are two possible reasons. First, the effectiveness of their placement recommendations can not utilize the backhaul capacity of each DBS. Secondly, there are too much co-channel interference. Even though we have given the searching space with the lower and higher bounds of $k$ in Propositions~\ref{proposition1} and~\ref{proposition2}, due to the serious co-channel interference lead by coverage overlapping problem, DDP and the balanced $k$-means still can not find any feasible placements to meet the given constraints $\tau\geq45\%$ and $\tau\geq35\%$, respectively. Finally, only the proposed eDDP can effectively reduce the occurrence of the coverage overlapping problem and minimize the coverage overlapping area, so the system sum rate can be increased by eDDP as the given target satisfaction rate goes on. As a result, eDDP is the best approach to improve system sum rate.

\begin{table*}[!t]
	\renewcommand{\arraystretch}{1.1}
	\caption{Summary of Comparisons}
	\label{summary_table}
	\centering
	\begin{tabular}{l|lll}
		\hline
		\textbf{Comparison item} & \textbf{eDDP} & \textbf{DDP} & \textbf{Balanced $k$-Means} \\
		\hline 
		Value of $k$ & \textbf{Adaptive} & Adaptive & Predefined \\
		Scenario scale / Number of UEs & \textbf{Adaptive} & Adaptive & Predefined \\
		Co-channel interference & \textbf{Minimum} & Small & Medium \\
		Occurrence of coverage overlapping problem & \textbf{Seldom} & Sometimes & Frequently\\
		\hline
	\end{tabular}
\end{table*}
%

\subsection{The Number of UEs}
In the last simulation, we observe the impact of the number of UEs on the computation time, the number of DBSs, and system sum rate. As shown in Fig.~\ref{fig:number_of_ues:time}, the computation time of the balanced $k$-means and DDP increases exponentially as the number of UEs increases. Since the balanced $k$-means can not determine the number of DBSs, we directly use the $k$ given by Proposition~\ref{proposition1} of DDP. We find that DDP and balanced $k$-means cost almost the same computation time. However, in practice, without Proposition~\ref{proposition1}, the balanced $k$-means needs to search the placement starting from the case of $k=1$.

In contrast, eDDP requires less computation time than DDP since eDDP pre-partitions the input location information into multiple sub-regions and processes them in parallel. Fig.~\ref{fig:number_of_ues:time} indicates eDDP reduces more than 50\% computation time compared to DDP. The relation between the number of DBSs and the number of UEs is presented in Fig.~\ref{fig:number_of_ues:number}. In general, the number of DBSs is directly proportional to the number of UEs. Due to the task of pre-partition, eDDP always requires more DBS than DDP when serving different numbers of UEs. 
We can also find that the proposed DDP and eDDP approaches using Proposition~\ref{proposition1} only needs to check a very small number of cases.

The result in Fig.~\ref{fig:number_of_ues:sumrate} indicates that the proposed approach can improve system sum rate significantly comparing with the one without using any DBSs while serving different numbers of UEs. If the system serves the UEs without any DBSs, the maximum achievable system sum rate is only about $70$ Mbps. According to the result in Fig.~\ref{fig:number_of_ues:sumrate}, as the number of UEs increases, system sum rate using the balanced $k$-means is only $460$ Mbps. In contrast, the proposed DDP and eDDP can achieve about $720$ Mbps and $1.85$ Gbps, respectively.
There is a big performance improvement between the proposed solutions and the balanced $k$-means because the conventional solution leads much more co-channel interference (or coverage overlapping problem) as shown in Fig.~\ref{fig:placement_result:bkm}. When the number of UEs is larger than $700$, the sum rate of the balanced $k$-means starts to decreasing because of the serious co-channel interference.
Although the coverage overlapping problem may still occur, DDP outperforms the balanced $k$-means by 56\% system sum rate due the proposed placement refinement. The simulation result shows that the proposed eDDP performs the best in the system sum rate since eDDP can effectively reduce the occurrence of the coverage overlapping problem. As shown in Fig.~\ref{fig:number_of_ues:sumrate}, eDDP outperforms DDP by more than 100\% in terms of system sum rate.


\subsection{Comparison Summary and Open Issues}
We summarize the comparisons in Table~\ref{summary_table}. The proposed DDP and eDDP are more adaptive in scenarios of different scales and the number of required DBSs can be dynamically determined by the input parameters. In contrast with DDP and eDDP, the cellular operators need to determine the number of required DBSs in advance if they use a conventional $k$-means based approach. In addition, DDP and eDDP can outperform the balanced $k$-means in terms of co-channel interference and coverage overlapping problem. With the help of our proposed placement refinement step, DDP has smaller coverage overlapping area/co-channel interference than the balanced $k$-means approach. Combining the pre-partition step, eDDP can effectively alleviate the coverage overlapping problem and then provide the DBS placement with the minimum co-channel interference. As a result, the system sum rate is significantly improved.

The most important contribution of this work is to identify the coverage overlapping problem while deploying an on-demand multi-drone 3D cellular network for assisting a GBS to serve arbitrarily crowds. For the extension case of deploying multiple DBSs and coexisting with multiple GBSs, it is different from the problems considered in this work and is more complicated. For such a extension case, we briefly described some open issues as follows.
\begin{itemize}
	\item For a DBS, the interference from different GBSs are difficult to identify and mitigate.		
	\item For a GBS, a DBS is treated as a cellular connected UE. Until now, there is no common standard or solution to differentiate the aerial and ground UEs.
	\item New distributed mechanism/service/platform is required for different edge servers/GBSs to collaboratively manage the deployed DBSs.
	\item GBS and DBS need new mobility management technologies to coordinate the management of UE movement in heterogeneous coverage cells.
\end{itemize}
All of the above open issues are the emerging research topics in communications society. In summary, considering the case of multiple GBSs is not just a straightforward extension of this work and we consider this case as one of good open research topics in the future.

\section{Conclusion}
\label{sec:conclusion}
In this paper, we proposed a data-driven 3D placement (DDP) algorithm for deploying multiple DBSs to serve arbitrary flash crowds. DDP can automatically find the appropriate number, location, altitude, and coverage of DBSs in polynomial time. We also discover a new problem, coverage overlapping problem, which may frequently occur when the system places multiple DBSs to collaborate with the GBS to serve an arbitrarily distributed crowds. We then proposed an algorithm, enhanced DDP (eDDP), for solving the coverage overlapping problem. According to the simulation results, eDDP can provide a much more effective placement of higher sanctification rates than the balanced $k$-means and DDP. By using the suggested search space, including the upper and lower bounds of $k$, pre-partitioning and parallelism, eDDP can reduce the computation time by more than 50\% compared with DDP. In addition, eDDP can effectively reduce the occurrence of the coverage overlapping problem caused by the different altitudes of DBSs. As a result, compared to balanced $k$-means, eDDP effectively improves the system sum rate by more than 200\% and guarantees that more than 40\% UEs achieve the minimum data rate requirement.


\section*{Acknowledgment}
The authors would like to thank Professor Chun-Hung Liu for providing mathematical guidance. We are also very grateful to the editors and reviewers for their constructive suggestions on earlier versions of this article.

\bibliographystyle{IEEEtran}
\bibliography{IEEEabrv,reference}

\begin{thebibliography}{10}
\providecommand{\url}[1]{#1}
\csname url@samestyle\endcsname
\providecommand{\newblock}{\relax}
\providecommand{\bibinfo}[2]{#2}
\providecommand{\BIBentrySTDinterwordspacing}{\spaceskip=0pt\relax}
\providecommand{\BIBentryALTinterwordstretchfactor}{4}
\providecommand{\BIBentryALTinterwordspacing}{\spaceskip=\fontdimen2\font plus
\BIBentryALTinterwordstretchfactor\fontdimen3\font minus
  \fontdimen4\font\relax}
\providecommand{\BIBforeignlanguage}[2]{{%
\expandafter\ifx\csname l@#1\endcsname\relax
\typeout{** WARNING: IEEEtran.bst: No hyphenation pattern has been}%
\typeout{** loaded for the language `#1'. Using the pattern for}%
\typeout{** the default language instead.}%
\else
\language=\csname l@#1\endcsname
\fi
#2}}
\providecommand{\BIBdecl}{\relax}
\BIBdecl

\bibitem{7452276}
F.~Khan, ``{Multi-Comm-Core Architecture for Terabit-per-Second Wireless},''
  \emph{{IEEE} Commun. Mag.}, vol.~54, no.~4, pp. 124--129, Apr. 2016.

\bibitem{6812286}
X.~Zhang, Y.~Zhang, R.~Yu, W.~Wang, and M.~Guizani, ``{Enhancing
  Spectral-Energy Efficiency for LTE-Advanced Heterogeneous Networks: A Users
  Social Pattern Perspective},'' \emph{{IEEE} Wireless Commun.}, vol.~21,
  no.~2, pp. 10--17, Apr. 2014.

\bibitem{Chai2018}
R.~Chai, Y.~Chen, H.~Chen, and Q.~Chen, ``{A resource characteristic and user
  QoS oriented bandwidth and power allocation algorithm for heterogeneous
  networks},'' \emph{Wireless Networks}, vol.~24, no.~6, pp. 2267--2282, Aug.
  2018.

\bibitem{8411547}
L.-C. {Wang} and S.-H. {Cheng}, ``{Self-Organizing Ultra-Dense Small Cells in
  Dynamic Environments: A Data-Driven Approach},'' \emph{{IEEE} Syst. J.},
  vol.~13, no.~2, pp. 1397--1408, Jun. 2019.

\bibitem{8369148}
------, ``{Data-Driven Resource Management for Ultra-Dense Small Cells: An
  Affinity Propagation Clustering Approach},'' \emph{{IEEE} Trans. Netw. Sci.
  Eng.}, vol.~6, no.~3, pp. 267--279, Jul.-Sept. 2019.

\bibitem{9024679}
L.-C. {Wang}, C.-C. {Lai}, H.-H. {Shuai}, H.-P. {Lin}, C.-Y. {Li}, T.-H.
  {Cheng}, and C.-H. {Chen}, ``Communications and networking technologies for
  intelligent drone cruisers,'' in \emph{IEEE Globecom Workshops (GC Wkshps)},
  Waikoloa, HI, USA, Dec. 2019.

\bibitem{9048611}
K.-T. {Feng}, L.-H. {Shen}, C.-Y. {Li}, P.-T. {Huang}, S.-H. {Wu}, L.-C.
  {Wang}, Y.-B. {Lin}, and M.-C.~F. {Chang}, ``3d on-demand flying mobile
  communication for millimeter wave heterogeneous networks,'' \emph{{IEEE}
  Netw.}, 2020, to be published, doi: \url{10.1109/MNET.011.1900600}.

\bibitem{7744808}
I.~Bor-Yaliniz and H.~Yanikomeroglu, ``{The New Frontier in RAN Heterogeneity:
  Multi-Tier Drone-Cells},'' \emph{{IEEE} Commun. Mag.}, vol.~54, no.~11, pp.
  48--55, Nov. 2016.

\bibitem{8038869}
M.~Mozaffari, W.~Saad, M.~Bennis, and M.~Debbah, ``{Mobile Unmanned Aerial
  Vehicles (UAVs) for Energy-Efficient Internet of Things Communications},''
  \emph{{IEEE} Trans. Wireless Commun.}, vol.~16, no.~11, pp. 7574--7589, Nov.
  2017.

\bibitem{8316776}
S.~{Sekander}, H.~{Tabassum}, and E.~{Hossain}, ``{Multi-Tier Drone
  Architecture for 5G/B5G Cellular Networks: Challenges, Trends, and
  Prospects},'' \emph{{IEEE} Commun. Mag.}, vol.~56, no.~3, pp. 96--103, Mar.
  2018.

\bibitem{8740949}
A.~{Asheralieva} and D.~{Niyato}, ``{Hierarchical Game-Theoretic and
  Reinforcement Learning Framework for Computational Offloading in UAV-Enabled
  Mobile Edge Computing Networks With Multiple Service Providers},''
  \emph{{IEEE} Internet Things J.}, vol.~6, no.~5, pp. 8753--8769, Oct. 2019.

\bibitem{Ernest2015}
P.~P. Ernest, H.~A. Chan, J.~Xie, and O.~E. Falowo, ``Mobility management with
  distributed mobility routing functions,'' \emph{Telecommunication Systems},
  vol.~59, no.~2, pp. 229--246, Jun. 2015.

\bibitem{8673556}
H.~{Tabassum}, M.~{Salehi}, and E.~{Hossain}, ``Fundamentals of mobility-aware
  performance characterization of cellular networks: A tutorial,'' \emph{IEEE
  Communications Surveys Tutorials}, vol.~21, no.~3, pp. 2288--2308,
  thirdquarter 2019.

\bibitem{8117559}
X.~Wang, Z.~Zhou, Z.~Yang, Y.~Liu, and C.~Peng, ``{Spatio-Temporal Analysis and
  Prediction of Cellular Traffic in Metropolis},'' in \emph{IEEE International
  Conference on Network Protocols (ICNP)}, Toronto, Canada, Oct. 2017.

\bibitem{XU2018146}
S.~Xu, S.~Li, and R.~Wen, ``{Sensing and detecting traffic events using
  geosocial media data: A review},'' \emph{Computers, Environment and Urban
  Systems}, vol.~72, pp. 146--160, Nov. 2018.

\bibitem{8255734}
G.~{Ding}, Q.~{Wu}, L.~{Zhang}, Y.~{Lin}, T.~A. {Tsiftsis}, and Y.~{Yao}, ``{An
  Amateur Drone Surveillance System Based on the Cognitive Internet of
  Things},'' \emph{{IEEE} Commun. Mag.}, vol.~56, no.~1, pp. 29--35, Jan. 2018.

\bibitem{8572727}
K.~Peng, W.~Liu, Q.~Sun, X.~Ma, M.~Hu, D.~Wang, and J.~Liu, ``{Wide-Area
  Vehicle-Drone Cooperative Sensing: Opportunities and Approaches},''
  \emph{{IEEE} Access}, vol.~7, pp. 1818--1828, 2019.

\bibitem{Hu2018apwcs}
S.-Z. {Huang}, C.-L. {Hu}, and {J.-Y. Tu}, ``{UAV Placement Strategy for Data
  Gathering in Harsh Environments},'' in \emph{IEEE Asia Pacific Wireless
  Communications Symposium}, Hsinchu, Taiwan, Aug. 2018.

\bibitem{7849534}
G.~{Lan}, J.~{Sun}, C.~{Li}, Z.~{Ou}, Z.~{Luo}, J.~{Liang}, and Q.~{Hao},
  ``{Development of UAV Based Virtual Reality Systems},'' in \emph{IEEE
  International Conference on Multisensor Fusion and Integration for
  Intelligent Systems (MFI)}, Baden-Baden, Germany, Sep. 2016.

\bibitem{8690828}
S.~{Sawadsitang}, D.~{Niyato}, P.~S. {Tan}, and P.~{Wang}, ``{Supplier
  Cooperation in Drone Delivery},'' in \emph{IEEE Vehicular Technology
  Conference (VTC-Fall)}, Chicago, IL, USA, Aug. 2018.

\bibitem{7412759}
M.~Mozaffari, W.~Saad, M.~Bennis, and M.~Debbah, ``{Unmanned Aerial Vehicle
  With Underlaid Device-to-Device Communications: Performance and Tradeoffs},''
  \emph{{IEEE} Trans. Wireless Commun.}, vol.~15, no.~6, pp. 3949--3963, Jun.
  2016.

\bibitem{8865486}
L.-C. {Wang}, Y.-S. {Chao}, S.-H. {Cheng}, and Z.~{Han}, ``{An Integrated
  Affinity Propagation and Machine Learning Approach for Interference
  Management in Drone Base Stations},'' \emph{{IEEE} Trans. on Cogn. Commun.
  Netw.}, vol.~6, no.~1, pp. 83--94, Mar. 2020.

\bibitem{7762185}
F.~Xu, Y.~Li, H.~Wang, P.~Zhang, and D.~Jin, ``{Understanding Mobile Traffic
  Patterns of Large Scale Cellular Towers in Urban Environment},''
  \emph{{IEEE/ACM} Trans. Netw.}, vol.~25, no.~2, pp. 1147--1161, Apr. 2017.

\bibitem{globe19_Lai}
C.-C. Lai, L.-C. Wang, and Z.~Han, ``{Data-Driven 3D Placement of UAV Base
  Stations for Arbitrarily Distributed Crowds},'' in \emph{IEEE Global
  Communications Conference (GLOBECOM)}, Waikoloa, HI, USA, Dec. 2019.

\bibitem{10.1007/BFb0038202}
E.~Welzl, ``Smallest enclosing disks (balls and ellipsoids),'' in \emph{New
  Results and New Trends in Computer Science}, H.~Maurer, Ed., vol. 555.\hskip
  1em plus 0.5em minus 0.4em\relax Berlin, Heidelberg: Springer, 1991, pp.
  359--370.

\bibitem{10.1007/978-3-662-44415-3_4}
M.~I. Malinen and P.~Fr{\"a}nti, ``{Balanced K-Means for Clustering},'' in
  \emph{Structural, Syntactic, and Statistical Pattern Recognition},
  P.~Fr{\"a}nti, G.~Brown, M.~Loog, F.~Escolano, and M.~Pelillo, Eds.\hskip 1em
  plus 0.5em minus 0.4em\relax Berlin, Heidelberg: Springer, 2014, pp. 32--41.

\bibitem{6863654}
A.~Al-Hourani, S.~Kandeepan, and S.~Lardner, ``{Optimal LAP Altitude for
  Maximum Coverage},'' \emph{{IEEE} Wireless Commun. Lett.}, vol.~3, no.~6, pp.
  569--572, Dec. 2014.

\bibitem{7510820}
R.~I. Bor-Yaliniz, A.~El-Keyi, and H.~Yanikomeroglu, ``{Efficient 3-D Placement
  of an Aerial Base Station in Next Generation Cellular Networks},'' in
  \emph{IEEE International Conference on Communications (ICC)}, Kuala Lumpur,
  Malaysia, May 2016.

\bibitem{7962642}
E.~{Kalantari}, M.~Z. {Shakir}, H.~{Yanikomeroglu}, and A.~{Yongacoglu},
  ``{Backhaul-aware robust 3D drone placement in 5G+ wireless networks},'' in
  \emph{IEEE International Conference on Communications Workshops (ICC
  Workshops)}, Paris, France, May 2017.

\bibitem{8642333}
C.-C. {Lai}, C.-T. {Chen}, and L.-C. {Wang}, ``{On-Demand Density-Aware UAV
  Base Station 3D Placement for Arbitrarily Distributed Users With Guaranteed
  Data Rates},'' \emph{{IEEE} Wireless Commun. Lett.}, vol.~8, no.~3, pp.
  913--916, Jun. 2019.

\bibitem{8269064}
J.~Lu, S.~Wan, X.~Chen, and P.~Fan, ``{Energy-Efficient 3D UAV-BS Placement
  versus Mobile Users' Density and Circuit Power},'' in \emph{IEEE Globecom
  Workshops (GC Wkshps)}, Singapore, Dec. 2017.

\bibitem{7762053}
J.~Lyu, Y.~Zeng, R.~Zhang, and T.~J. Lim, ``{Placement Optimization of
  UAV-Mounted Mobile Base Stations},'' \emph{{IEEE} Commun. Lett.}, vol.~21,
  no.~3, pp. 604--607, Mar. 2017.

\bibitem{7881122}
E.~Kalantari, H.~Yanikomeroglu, and A.~Yongacoglu, ``{On the Number and 3D
  Placement of Drone Base Stations in Wireless Cellular Networks},'' in
  \emph{IEEE Vehicular Technology Conference (VTC-Fall)}, Montreal, QC, Canada,
  Sep. 2016.

\bibitem{8758183}
E.~{Arribas}, V.~{Mancuso}, and V.~{Cholvi}, ``{Coverage Optimization with a
  Dynamic Network of Drone Relays},'' \emph{{IEEE} Trans. Mobile Comput.}, to
  be published, doi: \url{10.1109/TMC.2019.2927335}.

\bibitem{7756327}
C.~Zhang and W.~Zhang, ``{Spectrum Sharing for Drone Networks},'' \emph{{IEEE}
  J. Sel. Areas Commun.}, vol.~35, no.~1, pp. 136--144, Jan. 2017.

\bibitem{7875131}
M.~Chen, M.~Mozaffari, W.~Saad, C.~Yin, M.~Debbah, and C.~S. Hong, ``{Caching
  in the Sky: Proactive Deployment of Cache-Enabled Unmanned Aerial Vehicles
  for Optimized Quality-of-Experience},'' \emph{{IEEE} J. Sel. Areas Commun.},
  vol.~35, no.~5, pp. 1046--1061, May 2017.

\bibitem{488968}
J.~Kennedy and R.~Eberhart, ``{Particle Swarm Optimization},'' in
  \emph{International Conference on Neural Networks}, Perth, Western Australia,
  Nov. 1995.

\bibitem{7557010}
Z.~Yazdanshenasan, H.~S. Dhillon, M.~Afshang, and P.~H.~J. Chong, ``{Poisson
  Hole Process: Theory and Applications to Wireless Networks},'' \emph{{IEEE}
  Trans. Wireless Commun.}, vol.~15, no.~11, pp. 7531--7546, Nov. 2016.

\bibitem{8876665}
C.-H. {Liu}, K.-H. {Ho}, and J.-Y. {Wu}, ``Mmwave uav networks with multi-cell
  association: Performance limit and optimization,'' \emph{{IEEE} J. Sel. Areas
  Commun.}, vol.~37, no.~12, pp. 2814--2831, Dec. 2019.

\bibitem{7444125}
M.~Peng, Y.~Sun, X.~Li, Z.~Mao, and C.~Wang, ``{Recent Advances in Cloud Radio
  Access Networks: System Architectures, Key Techniques, and Open Issues},''
  \emph{{IEEE} Commun. Surveys Tuts.}, vol.~18, no.~3, pp. 2282--2308,
  thirdquarter 2016.

\bibitem{8436041}
N.~Cheng, W.~Xu, W.~Shi, Y.~Zhou, N.~Lu, H.~Zhou, and X.~Shen, ``{Air-Ground
  Integrated Mobile Edge Networks: Architecture, Challenges, and
  Opportunities},'' \emph{{IEEE} Commun. Mag.}, vol.~56, no.~8, pp. 26--32,
  Aug. 2018.

\bibitem{7880663}
M.~Chen, W.~Saad, C.~Yin, and M.~Debbah, ``{Echo State Networks for Proactive
  Caching in Cloud-Based Radio Access Networks With Mobile Users},''
  \emph{{IEEE} Trans. Wireless Commun.}, vol.~16, no.~6, pp. 3520--3535, Jun.
  2017.

\bibitem{8761403}
Y.~{Yu}, X.~{Bu}, K.~{Yang}, H.~{Yang}, and Z.~{Han}, ``{UAV-Aided Low Latency
  Mobile Edge Computing with mmWave Backhaul},'' in \emph{IEEE International
  Conference on Communications (ICC)}, Shanghai, China, May 2019.

\bibitem{6834753}
M.~R. {Akdeniz}, Y.~{Liu}, M.~K. {Samimi}, S.~{Sun}, S.~{Rangan}, T.~S.
  {Rappaport}, and E.~{Erkip}, ``Millimeter wave channel modeling and cellular
  capacity evaluation,'' \emph{{IEEE} J. Sel. Areas Commun.}, vol.~32, no.~6,
  pp. 1164--1179, Jun. 2014.

\bibitem{7417609}
M.~Mozaffari, W.~Saad, M.~Bennis, and M.~Debbah, ``{Drone Small Cells in the
  Clouds: Design, Deployment and Performance Analysis},'' in \emph{IEEE Global
  Communications Conference (GLOBECOM)}, San Diego, CA, Dec. 2015.

\bibitem{6042301}
J.~G. Andrews, F.~Baccelli, and R.~K. Ganti, ``{A Tractable Approach to
  Coverage and Rate in Cellular Networks},'' \emph{{IEEE} Trans. Commun.},
  vol.~59, no.~11, pp. 3122--3134, Nov. 2011.

\bibitem{MULVEY1984339}
J.~M. Mulvey and M.~P. Beck, ``{Solving capacitated clustering problems},''
  \emph{European Journal of Operational Research}, vol.~18, no.~3, pp.
  339--348, Dec. 1984.

\bibitem{MAHAJAN201213}
M.~Mahajan, P.~Nimbhorkar, and K.~Varadarajan, ``{The planar k-means problem is
  NP-hard},'' \emph{Theoretical Computer Science}, vol. 442, pp. 13--21, Jul.
  2012.

\bibitem{doi:10.1137/0105003}
J.~Munkres, ``{Algorithms for the Assignment and Transportation Problems},''
  \emph{Journal of the Society for Industrial and Applied Mathematics}, vol.~5,
  no.~1, pp. 32--38, 1957.

\end{thebibliography}

\begin{IEEEbiography}[{\includegraphics[width=1in,height=1.25in,clip,keepaspectratio]{./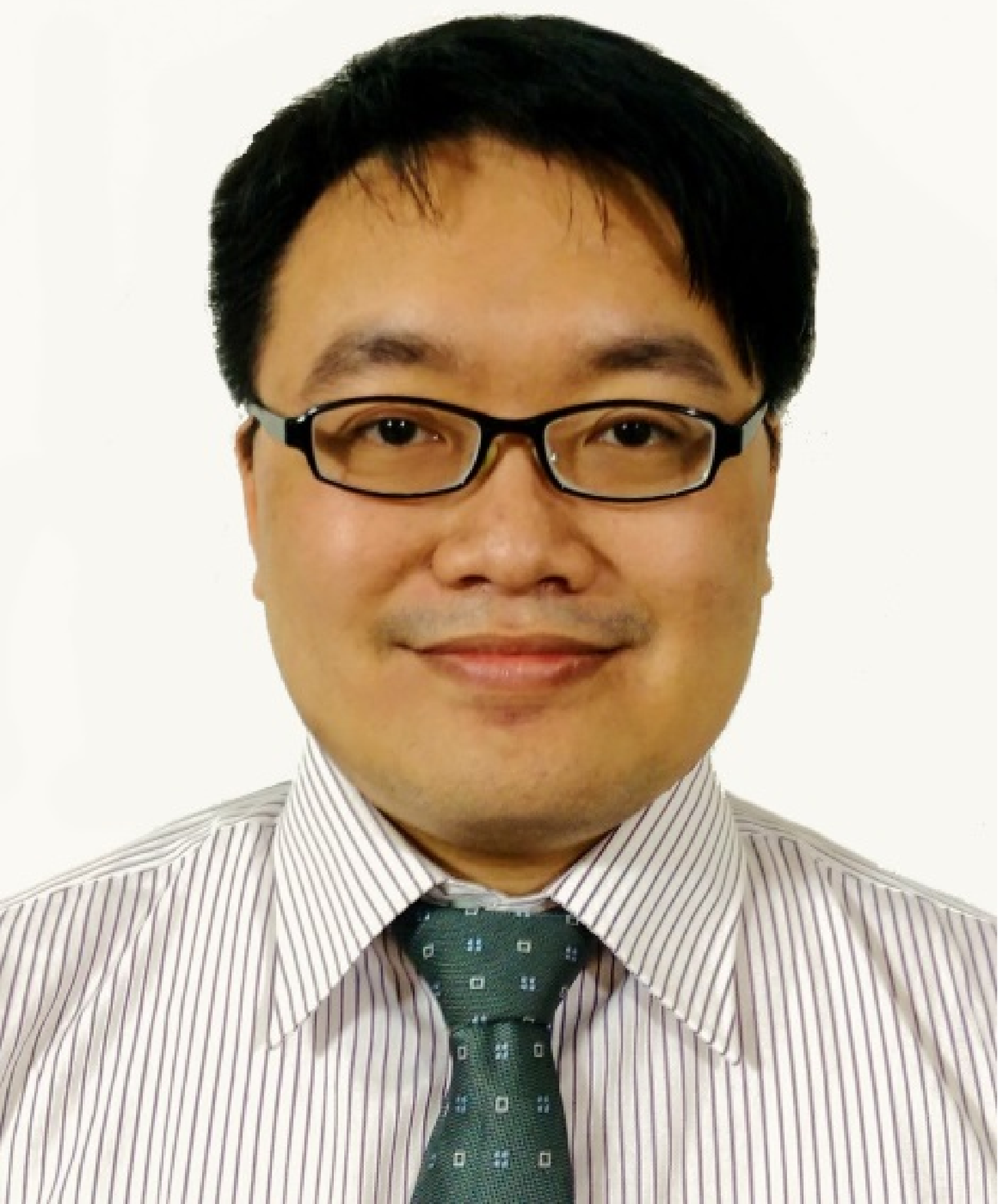}}]{Chuan-Chi Lai}
	(S'13 -- M'18) received the Ph.D. degree in computer science and information engineering from the National Taipei University of Technology, Taipei, Taiwan, in 2017.
	
	He is currently an assistant research fellow with the Department of Electrical and Computer Engineering, National Chiao Tung University, Hsinchu, Taiwan. His current research interests include resource allocation, data management, information dissemination techniques, and distributed query processing over moving objects in emerging applications such as the Internet of Things, edge computing, mobile wireless applications, and location-based services.
	
	Dr. Lai has received the Postdoctoral Researcher Academic Research Award of Ministry of Science and Technology, Taiwan, in 2019, the Best Paper Award in WOCC 2018 conference, and the Excellent Paper Award in ICUFN 2015 conference.
\end{IEEEbiography}

\begin{IEEEbiography}[{\includegraphics[width=1in,height=1.25in,clip,keepaspectratio]{./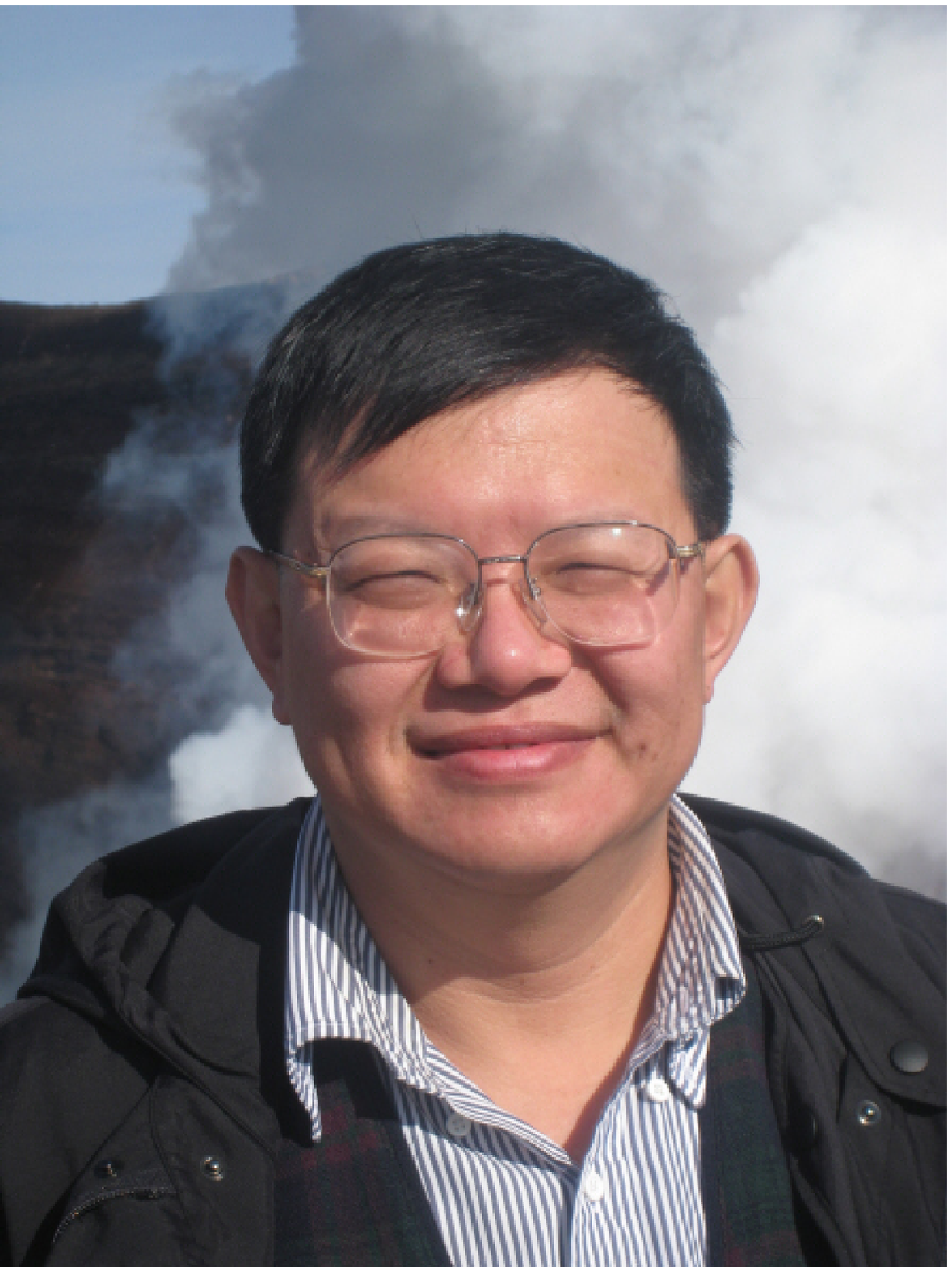}}]{Li-Chun Wang}
	(M'96 -- SM'06 -- F'11) received the Ph. D. degree from the Georgia Institute of Technology, GA, USA, in 1996. 
	
	From 1996 to 2000, he was with AT\&T Laboratories, Florham Park, NJ, USA, where he was a Senior Technical Staff Member in the Wireless Communications Research Department. Since 2000, he has been with the Department of Electrical and Computer Engineering of National Chiao Tung University, Hsinchu, Taiwan, where he is currently with the Department of Computer Science and Information Engineering. He holds 19 US patents, and has published over 200 journal and conference papers, and co-edited a book, "Key Technologies for 5G Wireless Systems," (Cambridge University Press 2017). His current research interests include software-defined mobile networks, heterogeneous networks, and data-driven intelligent wireless communications.
	
	Dr. Wang was a recipient of two Distinguished Research Awards of the National Science Council, Taiwan, in 2012 and 2017, and a co-recipient of the IEEE Communications Society Asia-Pacific Board Best Award in 2015, the Y. Z. Hsu Scientific Paper Award in 2013, and the IEEE Jack Neubauer Best Paper Award in 1997.	
\end{IEEEbiography}

\begin{IEEEbiography}[{\includegraphics[width=1in,height=1.25in,clip,keepaspectratio]{./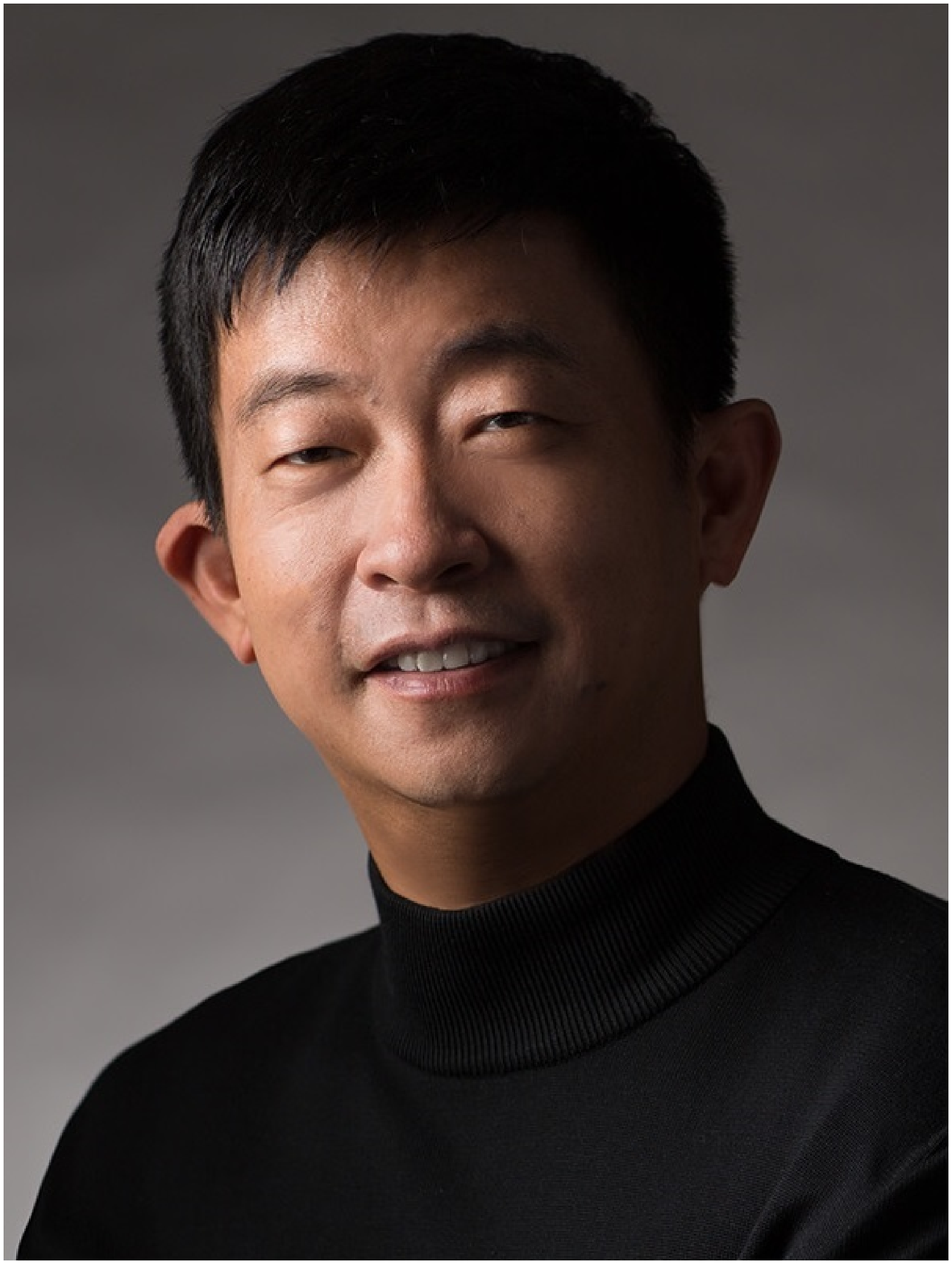}}]{Zhu Han}
	(S'01--M'04--SM'09--F'14) received the B.S. degree in electronic engineering from Tsinghua University, in 1997, and the M.S. and Ph.D. degrees in electrical and computer engineering from the University of Maryland, College Park, in 1999 and 2003, respectively. 
	
	From 2000 to 2002, he was an R\&D Engineer of JDSU, Germantown, Maryland. From 2003 to 2006, he was a Research Associate at the University of Maryland. From 2006 to 2008, he was an assistant professor at Boise State University, Idaho. Currently, he is a John and Rebecca Moores Professor in the Electrical and Computer Engineering Department as well as in the Computer Science Department at the University of Houston, Texas. His research interests include wireless resource allocation and management, wireless communications and networking, game theory, big data analysis, security, and smart grid. Dr. Han received an NSF Career Award in 2010, the Fred W. Ellersick Prize of the IEEE Communication Society in 2011, the EURASIP Best Paper Award for the Journal on Advances in Signal Processing in 2015, IEEE Leonard G. Abraham Prize in the field of Communications Systems (best paper award in IEEE JSAC) in 2016, and several best paper awards in IEEE conferences. Dr. Han was an IEEE Communications Society Distinguished Lecturer from 2015-2018, AAAS fellow since 2019 and ACM distinguished Member since 2019. Dr. Han is 1\% highly cited researcher since 2017 according to Web of Science. Dr. Han is also the winner of 2021 IEEE Kiyo Tomiyasu Award, for outstanding early to mid-career contributions to technologies holding the promise of innovative applications, with the following citation: ``for contributions to game theory and distributed management of autonomous communication networks."
\end{IEEEbiography}

	
	

\end{document}